\Crefname{section}{Section}{Sections}
\Crefname{Algo}{Algorithm}{Algorithms}
\Crefname{Def}{Definition}{Definitions}
\Crefname{Thm}{Theorem}{Theorems}
\Crefname{Pro}{Problem}{Problems}
\Crefname{Prop}{Proposition}{Propositions}
\Crefname{Lem}{Lemma}{Lemmas}
\newtheorem{Algo}{Algorithm}[section]
\newtheorem{Thm}[Algo]{Theorem}
\newtheorem{Prop}[Algo]{Proposition}
\newtheorem{Coro}[Algo]{Corollary}
\newtheorem{Lem}[Algo]{Lemma}
\theoremstyle{definition}
\newtheorem{Def}[Algo]{Definition}
\newtheorem{Pro}[Algo]{Problem}
\theoremstyle{remark}
\newtheorem{Rem}[Algo]{Remark}
\newtheorem{Ex}[Algo]{Example}
\numberwithin{equation}{section}
\renewcommand{\ket}[1]{|{#1}\rangle}
\renewcommand{\bra}[1]{\langle{#1}|}
\newcommand{\bk}[2]{\langle{#1}|{#2}\rangle}
\newcommand{\kb}[2]{\ket{#1}\bra{#2}}
\renewcommand{\proj}[1]{\kb{#1}{#1}}
\renewcommand{\epsilon}{\varepsilon}
\newcommand{\hepsilon}{\hat{\varepsilon}}
\newcommand{\x}{\otimes}
\newcommand{\ct}{^{\dagger}}
\newcommand{\Z}{\mathbb{Z}}
\newcommand{\C}{\mathbb{C}}
\newcommand{\R}{\mathbb{R}}
\newcommand{\F}{\mathbb{F}}
\newcommand{\Ghat}{\hat{G}}
\newcommand{\CG}{\C^G}
\newcommand{\CGhat}{\C^{\Ghat}}
\newcommand{\U}[1]{\mathrm{U}(#1)}
\renewcommand{\Re}{\mathop{\mathrm{Re}}\nolimits}
\DeclarePairedDelimiter{\set}{\lbrace}{\rbrace}
\DeclarePairedDelimiter{\abs}{\lvert}{\rvert}
\DeclarePairedDelimiter{\norm}{\lVert}{\rVert}
\DeclarePairedDelimiter{\of}{\lparen}{\rparen}
\DeclareMathOperator{\Aut}{Aut}
\DeclareMathOperator{\poly}{poly}
\DeclareMathOperator{\rot}{rot}
\DeclareMathOperator{\Tr}{Tr}
\newcommand{\mx}[1]{\begin{pmatrix}#1\end{pmatrix}}
\newcommand{\hf}{\hat{f}}
\newcommand{\hg}{\hat{g}}
\newcommand{\hG}{\hat{G}}
\newcommand{\hr}{\hat{r}}
\newcommand{\hR}{\hat{R}}
\newcommand{\hA}{\hat{A}}
\newcommand{\halpha}{\hat{\alpha}}
\title{Hidden shift problem for complex functions}
\author[1]{Serge Adonsou}
\author[2]{Peter Bruin}
\author[3,4]{Maris Ozols}
\author[2,3]{Joppe Stokvis}
\affil[1]{Department of Mathematics and Statistics, University of Guelph, Canada}
\affil[2]{Mathematical Institute, Leiden University, Netherlands}
\affil[3]{QuSoft, Amsterdam, Netherlands}
\affil[4]{Institute for Logic, Language and Computation, and Korteweg-de Vries Institute for Mathematics, University of Amsterdam, Netherlands}
\affil[ ]{Email addresses: \href{mailto: sadonsou@uoguelph.ca}{sadonsou@uoguelph.ca}, \href{mailto: p.j.bruin@math.leidenuniv.nl}{p.j.bruin@math.leidenuniv.nl}, \href{mailto: marozols@gmail.com}{marozols@gmail.com}, \href{mailto: j.a.stokvis@math.leidenuniv.nl}{j.a.stokvis@math.leidenuniv.nl}}
\begin{document}
\maketitle

\begin{abstract}
    We study quantum algorithms for the hidden shift problem of complex scalar- and vector-valued functions on finite abelian groups.
    Given oracle access to a shifted function and the Fourier transform of the unshifted function, the goal is to find the hidden shift.
    We analyze the success probability of our algorithms when using a constant number of queries.
    For bent functions, they succeed with probability~1, while for arbitrary functions the success probability depends on the `bentness' of the function.
\end{abstract}

\tableofcontents

\section{Introduction}

Quantum computers are known to be able to solve problems that classical computers are practically incapable of.
For example, the integer factoring and discrete logarithm problems are presumed to be computationally hard on classical computers; this lies at the basis of cryptosystems like RSA \cite{Rivest_1983} and elliptic curve cryptography \cite{Koblitz_1987,miller1985use}.
However, Shor showed in 1994 that both problems are efficiently solvable on a quantum computer \cite{Shor}.
A natural generalization of these problems is the \emph{hidden subgroup problem}.

\begin{Pro}[Hidden subgroup problem]\label{HSP1}
	Consider a finite group $G$, a set $S$ and a function $f: G \rightarrow S$. Suppose there exists a subgroup $H \leq G$ such that $f$ is constant on each coset of~$H$, and distinct on different cosets: $f (g) = f (g')$ if and only if $gH = g'H$. Given oracle access to $f$, find $H$.
\end{Pro}

Other instances of the hidden subgroup problem are Simon's problem \cite{Simon_1997}
and the discrete logarithm problem \cite{Shor_1999}.
\Cref{HSP1} can be solved by a polynomial-time quantum algorithm for finite abelian groups;
more precisely, $H$ can be determined in time $\poly(\log|G|)$ \cite{Childs_2010}.
In the instances that can be efficiently solved, the non-abelian case has applications to the graph automorphism problem,
the graph isomorphism problem and lattice problems \cite{Childs_2010}.
An efficient solution to the non-abelian hidden subgroup problem for dihedral groups would have great impact on lattice cryptosystems \cite{regev2004quantum}.
In contrast to the abelian case, the question whether there is a polynomial-time
quantum algorithm to solve the problem for non-abelian groups remains open,
except in some special cases \cite{Childs_2010}.

A problem related to the hidden subgroup problem is the \emph{hidden shift problem}. 

\begin{Pro}[Hidden shift problem]\label{original hidden shift problem}
	Let $G$ be a finite group, let $S$ be a set and let $f$, $g : G \rightarrow S$ be functions such that there exists a unique $s \in G$ satisfying $g(x) = f(x-s)$ for all $x\in G$.
	Given oracle access to $f$ and~$g$, determine $s$.
\end{Pro}

From now on, we will only consider the case where the group $G$ is abelian, and we write $G$ additively. One approach to solve hidden shift problem is via Fourier sampling methods. The hidden shift problem for~$G$ is equivalent to the hidden subgroup problem for the group $G\rtimes_\varphi \Z/2\Z$, where $\varphi:\Z/2\Z\rightarrow
\Aut(G)$ is given by $\varphi(0):x \mapsto x$ and $\varphi(1):x \mapsto -x$.
In particular, the hidden shift problem for finite cyclic groups $G$ is equivalent to the dihedral hidden subgroup problem \cite{Childs_2010}. The main idea of Fourier sampling methods is to apply a function (e.g. hiding a subgroup) to a uniform superposition, then apply a Fourier transform and measure in the Fourier basis. Sufficient sampling should then give information about the hidden structure of the function. No polynomial-time quantum algorithm is known for the general abelian hidden shift problem.
However, some algorithms are known to solve the problem more efficiently than by brute force, in particular Kuperberg's algorithm and its variants \cite{kuperberg2005subexponential,regev2004subexponential,kuperberg2011another, peikert2020he}, which solve the problem in subexponential time, and another algorithm in \cite{friedl211091hidden}, which solves the problem efficiently for solvable groups with constant exponent and constant derived length. In the case of boolean functions $f: \Z_2^n\to \Z_2$ the query complexity of \cref{original hidden shift problem} depends heavily on the type of functions, specifically on the flatness of their Fourier spectra. The extreme cases are bent functions and delta functions, needing $O(n)$ queries \cite{gavinsky2011quantum} and $\Theta(2^n)$ queries \cite{grover1996fast} respectively. All other boolean functions lie somewhere in between \cite{childs2013easy}. 

An alternative formulation of the hidden shift problem replaces the finite set $S$ in \cref{original hidden shift problem} with (a finite subset of) the complex numbers $\C$.
Previous work has mostly focused on the case of boolean $\pm1$-valued functions \cite{rotteler2009gowersnorm,rotteler2010quantum,gavinsky2011quantum,childs2013easy} or functions valued at roots of unity \cite{van_Dam_2006}.
We relax this restriction and allow the values of $f$ to be arbitrary complex numbers or even arbitrary vectors in $\C^d$.
Following \cite{rotteler2009gowersnorm,rotteler2010quantum}, we assume oracle access to $\hf$, the Fourier transform of $f$, which we use as a reference for determining the hidden shift $s$ of $g$.

\begin{Pro}[Complex-valued hidden shift problem]\label{pro: hidden shift v3}
Let $G$ be a finite abelian group and let $f,g: G \to \C^d$ be functions such that there exists a unique $s\in G$ satisfying $g(x) = f(x-s)$ for all $x\in G$. Given oracle access to $g$ and the Fourier transform of $f$, determine $s$.
\end{Pro}

The standard approach \cite{van_Dam_2006} for solving this version of the problem also starts by applying the oracle for the shifted function $g$ to the uniform superposition and then applying the Fourier transform. Then, instead of measuring, we apply the oracle for $\hf$ to remove all information about $f$ except for the hidden shift $s$, which can be extracted by applying the inverse Fourier transform (see \cref{algo1} for more details). In contrast to the Fourier sampling approach, this approach does not involve intermediate measurements and relies purely on constructive and destructive interference.

A class of functions that is particularly amenable to this approach are the so-called bent functions (see \cref{def: bent function intro}). For boolean bent functions, \cref{pro: hidden shift v3} can be solved efficiently using only a constant number of oracle queries \cite{rotteler2009gowersnorm, rotteler2010quantum}. However, it was recently shown that for the class of Maiorana--McFarland bent functions of bounded degree Rötteler's algorithm also admits an efficient classical simulation \cite{amy2024}.
Earlier, the same ideas as in Rötteler's algorithm had also been applied to multiplicative characters of finite fields and of rings of the form $\Z/N\Z$ \cite{van_Dam_2006}, which are almost bent functions.

In this paper, we extend the above methods for \cref{pro: hidden shift v3} to a larger class of functions. While boolean functions and group characters map to complex roots of unity, we allow for vector-valued functions with image beyond the unit circle or unit sphere. After treating the case of bent functions, we follow the idea in \cite{van_Dam_2006} and look at complex functions that are close to bent (we formalize a notion of approximate `bentness' in the next section). In line with the results on boolean functions \cite{childs2013easy}, the success probability of the algorithms we present increases the more bent a complex function is (for precise statements, see \cref{subsec: our results}).

The quantum algorithms we present show quite some similarities with the algorithm for the forrelation problem \cite{aaronson2015forrelation}. Given two functions $f$ and $g$, the forrelation problems asks to determine whether there is a high or low correlation between $f$ and the Fourier transform of $g$. For bent functions, the forrelation algorithm \cite[Proposition 6]{aaronson2015forrelation} is identical to \cref{algo1} for hidden shifts. In \cref{app: forrelation}, we extend the definition of forrelation to abelian groups.

\subsection{Precise problem statement}\label{sec:problem}

The functions for which we solve the hidden shift problem are generalizations of bent functions.
The definition below generalizes \emph{boolean} bent functions \cite{tokareva2015bent}, for which $G$ is a product of copies of $\Z/2\Z$ and $f$ takes values in $\{\pm1\}$, by making use of the Fourier transform from \cref{def: Fourier tranform}.

\begin{Def}[Bent function]\label{def: bent function intro}
    Let $G$ be a finite abelian group.
    A function $f: G \rightarrow \C$ is called \emph{bent} if
    $|f(x)| = 1$ for all $x \in G$ and $|\hat{f}(\phi)| = 1$ for all $\phi \in \hat{G}$.
\end{Def}

\begin{Rem}\label{rem:bent}
The term `bent function' is usually reserved for boolean functions $f:\set{0,1}^n \to \set{0,1}$ \cite{tokareva2015bent}.
This notion relates to our \cref{def: bent function intro} by letting $G = (\Z/2\Z)^n$ and treating the output of $f$ as $\pm1$-valued via $(-1)^{f(x)}$.
This easily extends to arbitrary prime characteristic $p$ by letting $G = \F_{p^n}$ and treating the output of $f$ as a power of $\omega_p$, a $p$-th root of unity \cite[Chapter~13]{Mesnager}.
The concept we introduce in \cref{def: bent function intro} is more general in that the values of $f$ must be on the unit circle but need not be roots of unity.
The same concept has been introduced independently in different areas and goes by several different names.
Most commonly, the vector $\ket{f} = \sum_{g \in G} f(g) \ket{g}$ associated to a bent function $f$ is called \emph{biunimodular} \cite{fuhr2015,Gabidulin2002}.
Another established term is \emph{constant amplitude zero autocorrelation (CAZAC) sequence} \cite{Benedetto2019}.
\end{Rem}

\begin{Ex}
    For $G = \Z/2\Z$, the function $f$ such that $f(0) =1$ and $f(1)=i$ is bent.
    More generally, the bent functions on $\Z/2\Z$ are exactly the functions $f$ such that $|f(0)|=1$ and $f(1)=\pm if(0)$.
\end{Ex}

The following two definitions relax bentness by only imposing certain bounds on the function and its Fourier transform, either on the whole domain or on a subset thereof.

\begin{Def}[$(R,\hr)$-bounded]\label{def: hr R bounded intro}
Let $G$ be a finite abelian group and let $R, \hr > 0$ be real numbers. A function $f:G \rightarrow \C$ is called \emph{$(R, \hr)$-bounded} if
\[\max_{x\in G} |f(x)| \leq R \quad \text{ and } \quad\min_{\phi \in \hG} |\hf(\phi)| \geq \hr.\]
\end{Def}

\begin{Ex}
For $G = \Z/2\Z$, the function given by $f(0)=1$ and $f(1)=2i$ is $(2,\sqrt{5/2})$-bounded.
\end{Ex}

\begin{Def}[$(r,R,\hr,\hR,\alpha,\halpha)$-bounded]\label{def: r R hr hR bounded intro}
Let $G$ be a finite abelian group and let $0 \leq r \leq R$, $0 \leq \hr \leq \hR$ be real numbers. A function $f:G \rightarrow \C$ is called \emph{$(r,R,\hr,\hR,\alpha,\halpha)$-bounded} if there exist subsets $A \subseteq G, \hA \subseteq \hG$ of sizes $|A| = \alpha |G|, |\hA| = \halpha |\hG|$ such that 
\[|f(x)|\in [r, R] \text{ for all }x\in A, \quad \text{ and } \quad |\hf(\phi)|\in [\hr, \hR]\text{ for all }\phi\in \hA.\]
\end{Def}

\begin{Ex}
All bent functions are $(1,1)$-bounded. All $(R,\hr)$-bounded functions are \ $(0,R,\hr,\infty,\allowbreak 1,1)$-bounded.
\end{Ex}

We can generalize the above definitions even further by looking at multidimensional bent functions and their relaxations.

\begin{Def}\label{def: multidimensional all bounded}
    A multidimensional complex function $f: G \rightarrow \C^d$ is called \emph{bent} if $\|f(x)\| = 1$ for all $x\in G$ and $\|\hf(\phi)\|=1$ for all $\phi \in \hG$ (the Fourier transform of $f$ is determined coordinate-wise). We call $f$ \emph{$(r,R, \hr, \hR, \alpha, \halpha)$-bounded} if there exist subsets $A \subseteq G, \hA \subseteq \hG$ of sizes $|A| = \alpha |G|, |\hA| = \halpha |\hG|$ such that
    \[\|f(x)\|\in [r, R] \text{ for all }x\in A, \quad \text{ and } \quad \|\hf(\phi)\|\in [\hr, \hR]\text{ for all }\phi\in \hA.\]
\end{Def}

With these generalizations of bent functions, we can define the oracles in \cref{pro: hidden shift v3}. Let $f: G\rightarrow\C^d$ be an $(r, R, \hr, \hR, \alpha, \halpha)$-bounded function and consider its shift $g(x) = f(x-s)$ for some fixed unknown $s \in G$. We model oracle access to $g$ by an operator $O_g$ acting by permuting the standard basis vectors of $\C^G \otimes \C^2 \otimes \C^{2^n}$ via
\begin{equation}\label{eq: oracle access g}
    O_g \ket{x} \ket{a} \ket{b} =
    \ket{x}
    \ket{a \oplus \delta_{x-s \in A}}
    \ket{b \oplus g(x)},
\end{equation}
where $x \in G$ in the first register denotes the input to $g$, the second or \emph{indicator} register indicates whether the (shifted) input belongs to $A \subseteq G$, and the third register returns the value of $g$ encoded in $n$ bits (the addition in last two registers is modulo two).
Note from \cref{def: multidimensional all bounded} that $\delta_{x-s\in A} = 1$ if and only if $\norm{g(x)} \in [r,R]$, so the indicator register can be post-selected to ensure that $g$ is within these bounds.
We will assume throughout that all function values $g(x) \in \C^d$ can be stored in an $n$-bit register with complete precision. In practice, these complex vectors (or numbers) may need to be approximated to fit within $n$ bits (see \cref{sec:error-analysis} for more details).

Quantum access to $\hf$ is similarly modeled by an oracle $O_{\hf}$ acting on $\C^{\hG} \otimes \C^2 \otimes \C^{2^n}$ via
\begin{equation}\label{eq: oracle access hf}
    O_{\hf} \ket{\phi} \ket{a} \ket{b} =
    \ket{\phi}
    \ket{a \oplus \delta_{\phi \in \hA}}
    \ket{b \oplus \hf(\phi)}.
\end{equation}
Note from \cref{def: multidimensional all bounded} that $\delta_{\phi \in \hA} = 1$ if and only if $\norm{\hf(\phi)} \in [\hr,\hR]$.
Both oracles are linear extensions of bijections, meaning they can also be used classically, allowing for a fair comparison.
Since addition is modulo two, both oracles are self-inverse: $O_g^2 = O_{\hf}^2 = I$.

\begin{Rem}
    For an $(r,R, \hr,\hR, \alpha, \halpha)$-bounded function with $\alpha=1$ (resp.\ $\halpha=1$) we ignore the indicator register for $O_g$ (resp.\ $O_{\hf}$), as the indicator is then the constant function $1$.
\end{Rem}

Using these definitions we state a precise version of \cref{pro: hidden shift v3} that we will solve.

\begin{Pro}[Complex-valued hidden shift problem, precise statement]\label{problem: exact formulation}
    Let $G$ be a finite abelian group and consider an $(r, R, \hr, \hR, \alpha, \halpha)$-bounded function $f: G \rightarrow \C^d$. Consider a function $g: G\rightarrow \C^d$ such that there is a unique $s\in G$ satisfying $g(x) = f(x-s)$ for all $x\in G$. Given access to the quantum oracles $O_g$ and $O_{\hf}$, determine $s$.
\end{Pro}

The above statement captures the most general problem that we solve in this paper.
For the purpose of presentation, we first consider special cases where some of the parameters are dropped to make the problem simpler.
We summarize our results in the next section.

\subsection{Our results}\label{subsec: our results}

We present a suite of increasingly general quantum algorithms for the hidden shift problem, the most general of which applies to functions $f: G\rightarrow \C^d$ for any abelian group $G$ and dimension $d \geq 1$. Our algorithms are adapted from \cite[Section~3]{van_Dam_2006} and they find $s$ using only four oracle queries.
For bent functions, the algorithms are exact, but for other classes of functions their success probability depends on the `bentness' of the function $f$. In line with previous results on the hidden shift problem for boolean functions \cite{childs2013easy}, our algorithms become less effective when the function is further from bent and closer to a delta function.

The paper is organized as follows. In \cref{sec:preliminaries}, we recall the notions of quantum circuits and gate complexity, give some basic results on characters of finite abelian groups and define the Fourier transform on a finite abelian group. The next four sections contain algorithms that can be applied to increasing generalizations of bent functions, and their success probabilities are analyzed. Specifically, we obtain the following results.

\begin{itemize}
    \item In \cref{sec:exact} we present a classical and quantum algorithm (\cref{alg: exact classical,algo1}) to solve the hidden shift problem for one-dimensional bent functions with probability one (\cref{thm: exact classical,thm:bent}). 
    \item In \cref{subsec: first gen} we prove \cref{thm: first gen}: Let $f: G\rightarrow \C$ be an $(R, \hr)$-bounded function. Then there exists a quantum algorithm (\cref{alg: approx1}) using four queries that finds the hidden shift $s$ with probability
    \begin{equation}\label{eq:1}
    p(s) = \left(\frac{\hr}{R}\right)^2.
    \end{equation}
    When $f$ is a bent function, we have $\hr=R=1$ and we find the shift with certainty.
    \item In \cref{subsec: second gen} we prove \cref{thm: second gen}: Let $f: G \rightarrow \C$ be an $(r, R, \hr, \hR, \alpha, \halpha)$-bounded function. Then there exists a quantum algorithm (\cref{alg: approx2}) using four queries that finds the hidden shift $s$ with probability
    \begin{equation}\label{eq:2}
        p(s) = \left(\frac{\hr}{R}\right)^2\left|\halpha - \frac{1}{|G|^{3/2}}\sum_{\phi \in \smash\hA}\sum_{x \notin A+s} \phi(x)\overline{\phi(s)} \frac{g(x)}{\hf(\phi)}\right|^2.
    \end{equation}
    By taking $A=G$, $\hA=\hG$, $\hR=\infty$ and $r=0$ this can be shown to reduce to \cref{eq:1}. Not all parameters in the set $(r, R, \hr, \hR, \alpha, \halpha)$ can directly be found in the success probability. We elaborate in \cref{subsec: symmetry} why they are still included in the definition.
    \item In \cref{sec: multi} we look at higher-dimensional bent functions and prove \cref{thm: multidim gen}: Let $f = (f_0, \dots, f_{d-1}): G \rightarrow \C^d$ be a $d$-dimensional $(r, R, \hr, \hR, \alpha, \halpha)$-bounded function. Then there exists a quantum algorithm (\cref{alg: generalised bent multid}) using four queries that finds the hidden shift $s$ with probability
    \[
        p(s) = \left(\frac{\hr}{R}\right)^2\left|\halpha - \frac{1}{|G|^{3/2}}\sum_{\phi \in \smash\hA}\sum_{x \notin A+s} \phi(x)\overline{\phi(s)} \sum_{i=0}^{d-1}\frac{g_i(x)\overline{\hf_i(\phi)}}{\|\hf(\phi)\|^2}\right|^2.
    \]
    It reduces to \cref{eq:2} by taking $d=1$ and noting that in one dimension $\frac{\overline{\hf(\phi)}}{\|\hf(\phi)\|^2} = \frac{1}{\hf(\phi)}.$
\end{itemize}
The above generalizations are presented in several steps to highlight the different parts of the most basic algorithm (\cref{algo1}) that need to be adapted to solve the most general version of the problem (\cref{problem: exact formulation}).

Since $\hr \leq R$ due to Parseval's identity \eqref{eq:Parseval}, the probability in \cref{eq:1} equals one only when $\hr = R = 1$, meaning that the function is bent.
In \cref{sec: one register} we investigate whether the success probability for non-bent functions can be increased by allowing a more powerful oracle that contains additional tunable phase degrees of freedom $\theta$ and $\chi$.
We show that these phases lead to additional interference when utilizing only one instead of two ancillary qubits, and prove in \cref{Thm:approximate_algorithm} that this modified algorithm achieves success probability
\[
p(s)
        = \abs*{
            \frac{\hr}{R}
          + \frac{1}{|G|^{3/2}}
            \sum_{x \in G}
            \sum_{\phi \in \Ghat}
            \phi(x)
            e^{i\theta(x+s)}
            \sqrt{1-\abs*{\frac{f(x)}{R}}^2}
            e^{-i\chi(\phi)}
            \sqrt{1-\abs*{\frac{\hr}{\hat{f}(\phi)}}^2}
        }^2.
\]
By appropriately tuning $\theta$ and $\chi$, we can obtain success probability one even for some non-bent functions (\cref{thm:prob-all1}).

\section{Preliminaries}\label{sec:preliminaries}

\subsection{Quantum circuits and complexity}\label{sec:qcircuits}

We implicitly use the quantum circuit model (see \cite[Chapter 4]{nielsen2000quantum}) which describes quantum algorithms by sequences of quantum gates (such as single-qubit gates and CNOT) and measurements.
More precisely, a quantum algorithm for a computational problem is a family of circuits where the number of qubits depends only on the input size.

We do not consider parallel gates, and define the time complexity of a quantum algorithm as the number of gates in the corresponding quantum circuit.
Viewed as a function of the input size, time complexity is often expressed in big-$O$ notation.
A quantum algorithm is called \emph{efficient} if the number of gates is polynomial in the input size.

As mentioned in the introduction, we need to store the elements of a finite subset of $\C^d$ in an $n$-qubit register, and we assume this can be done with complete precision.
We assume the encoding has been chosen such that certain unitary operators appearing in our algorithms can be implemented efficiently, in particular the operators $S$, $U_1$, $V_{\rot}$ and $U_2$ defined in \cref{eq:S,eq:U1,eq:Vrot,eq:U2}.
Furthermore, when considering a finite abelian group $G$, we assume that both the elements of $G$ and of its character group $\hat{G}$ are encoded in quantum registers in such a way that the quantum Fourier transform operator $F$ defined in \cref{eq:F} is efficiently implementable. For example, it suffices to describe $G$ as a product of cyclic groups, so we can use the implementation from \cite{kitaev1995quantummeasurementsabelianstabilizer}.

In the remainder of this paper, we will not explicitly mention time complexity any further, but under the above assumptions all the quantum algorithms we give are efficient in the above sense.

\subsection{Group characters and Fourier transform}\label{subsec: group and Fourier transform}

Throughout the paper, $G$ is a finite abelian group and $f: G \rightarrow \C$ is a complex function. We denote by $|G|$ the order of $G$. A \emph{character} of $G$ is a map $\phi: G \rightarrow \C^\times$ such that $\phi(x + y) = \phi(x) \phi(y)$ for all $x,y \in G$, i.e., $\phi$ is a homomorphism.
The characters of~$G$ form a group under point-wise multiplication: the product of $\phi_1$ and $\phi_2$ is the character $\phi$ given by
$\phi(x) = \phi_1(x) \phi_2(x)$ for all $x \in G$.
This group is called the \emph{character group} of $G$ and denoted by~$\hat{G}$.
It is well known that $G$ and $\hat{G}$ are (non-canonically) isomorphic.

If $G = \Z/N\Z$ for a positive integer $N$, then the map $\hat{G} \rightarrow \C^\times$ defined by $\phi\mapsto\phi(1)$ is a group isomorphism from $\hat{G}$ to the group of $N$-th roots of unity, with inverse given by $\zeta \mapsto (a \mapsto \zeta^a)$.
More generally, if $G = \Z/N_1\Z\times \dots \times \Z/N_l\Z$ for positive integers $N_j$, then $\hat{G}$ is generated by $\phi_1,\dots,\phi_l$, where
\begin{equation}
    \label{eq:phi}
    \phi_j(x_1,\dots, x_l) = \exp(2\pi i x_j/N_j)
    \quad\text{for } 1\leq j \leq l.
\end{equation}

Taking $0$ and $\phi_0$ as the zero elements of the groups $G$ and $\hat{G}$ respectively, we have
\begin{align*}
    \sum_{x\in G} \phi(x) &=
    \begin{cases}
        |G| & \text{if } \phi = \phi_0, \\
        0 & \text{otherwise,}
    \end{cases} &
    \sum_{\phi\in\hat{G}} \phi(x) &=
    \begin{cases}
        |G| & \text{if } x = 0,\\
        0 & \text{otherwise}
    \end{cases}
\end{align*}
thanks to orthogonality of characters.

A central tool in our quantum algorithms is the Fourier transform for abelian groups.

\begin{Def}[Fourier transform]\label{def: Fourier tranform}
	For any function $f: G \rightarrow \C$, the \emph{Fourier transform} of $f$ is the function
    \begin{align*}
        \hat{f}: \hat{G} &\rightarrow \C\\
        \phi&\mapsto \frac{1}{|G|^{1/2}}\sum_{x\in G}\phi(x)f(x).
    \end{align*}
\end{Def}

The following lemma describes a key property of the Fourier transform: it translates shifts into point-wise multiplication.
Since $g(x) = f(x-s)$ is the convolution of $f$ with the delta function at~$s$, this is a special case of the fact that the Fourier transform translates convolution into point-wise multiplication.

\begin{Lem}\label{lem:ghat}
	Let $f: G \rightarrow \C$ be a function, let $s\in G$, and let $g(x) = f(x-s)$. Then
	\[
        \hat{g}(\phi) = \phi(s) \hat{f}(\phi) \quad \text{for all } \phi \in \hat{G}.
	\]
\end{Lem}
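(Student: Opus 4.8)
The plan is to compute $\hat g(\phi)$ directly from the definition of the Fourier transform and then perform the substitution $y = x - s$ to extract the character value $\phi(s)$ as a scalar factor. This is the most elementary route: since $g$ is literally defined by $g(x) = f(x-s)$, no structural facts about convolution are needed — the substitution does all the work.

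Concretely, I would start from
\[
    \hat g(\phi) = \frac{1}{|G|^{1/2}} \sum_{x \in G} \phi(x) g(x) = \frac{1}{|G|^{1/2}} \sum_{x \in G} \phi(x) f(x-s).
\]
Next I would reindex the sum by setting $y = x - s$; since $x \mapsto x - s$ is a bijection of $G$, the sum over $x \in G$ becomes a sum over $y \in G$, giving
\[
    \hat g(\phi) = \frac{1}{|G|^{1/2}} \sum_{y \in G} \phi(y+s) f(y).
\]
Then I would invoke the homomorphism property of the character $\phi$, namely $\phi(y+s) = \phi(y)\phi(s)$, and pull the constant $\phi(s)$ out of the sum:
\[
    \hat g(\phi) = \phi(s) \cdot \frac{1}{|G|^{1/2}} \sum_{y \in G} \phi(y) f(y) = \phi(s) \hat f(\phi).
\]
This holds for every $\phi \in \hat G$, which is exactly the claim.

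There is essentially no obstacle here — the only things to be careful about are that the reindexing is valid (which follows because translation by $-s$ is a bijection on the finite group $G$) and that $\phi$ is genuinely a character, so that $\phi(y+s) = \phi(y)\phi(s)$; both are immediate from the definitions recalled in \cref{subsec: group and Fourier transform}. If one preferred a more conceptual phrasing, one could instead note that $g = f * \delta_s$ where $\delta_s$ is the delta function at $s$, observe $\hat{\delta_s}(\phi) = \phi(s)/|G|^{1/2}$ (or the appropriate normalization), and appeal to the convolution theorem; but the direct substitution argument above is shorter and self-contained, so that is the one I would write out.
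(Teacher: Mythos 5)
Your proposal is correct and follows essentially the same route as the paper's proof: a direct computation from the definition of the Fourier transform, the change of variables $y = x-s$, and the multiplicativity of the character (the paper writes $\phi(x) = \phi(s)\phi(x-s)$ before reindexing, whereas you reindex first and then use $\phi(y+s)=\phi(y)\phi(s)$, which is the same step in a different order). No gaps.
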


\begin{proof}
	This follows by computing the Fourier transform of $g$, using $\phi(x) = \phi(s) \phi(x-s)$, and making the change of variables $y = x - s$:
	\begin{align*}
    	\hat{g}(\phi) & = \frac{1}{|G|^{1/2}}\sum_{x\in G}\phi(x)g(x)\\
    	& = \frac{1}{|G|^{1/2}}\sum_{x\in G}\phi(x)f(x-s)\\
    	& = \frac{1}{|G|^{1/2}}\sum_{x\in G}\phi(s)\phi(x-s)f(x-s)\\
    	& = \phi(s)\frac{1}{|G|^{1/2}}\sum_{y\in G}\phi(y)f(y)\\
    	& = \phi(s)\hat{f}(\phi).
        \qedhere
	\end{align*}
\end{proof}

\begin{Lem}\label{lem:shift-bent}
    If $f: G \rightarrow \C$ is bent, then so is $g(x) = f(x-s)$ for any $s \in G$.
\end{Lem}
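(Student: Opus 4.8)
The plan is to verify the two defining conditions of \cref{def: bent function intro} for $g$: that $|g(x)| = 1$ for all $x \in G$, and that $|\hat g(\phi)| = 1$ for all $\phi \in \hat G$. The first condition is immediate: since $g(x) = f(x-s)$ and $x \mapsto x-s$ is a bijection of $G$, the values of $g$ are exactly the values of $f$, so $|g(x)| = |f(x-s)| = 1$ using that $f$ is bent.

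For the second condition, the key tool is \cref{lem:ghat}, which gives $\hat g(\phi) = \phi(s)\hat f(\phi)$ for all $\phi \in \hat G$. Taking absolute values yields $|\hat g(\phi)| = |\phi(s)|\,|\hat f(\phi)|$. Here $\phi(s)$ is a root of unity — more precisely, since $\phi: G \to \C^\times$ is a character of a finite group, $\phi(s)$ has finite order and hence $|\phi(s)| = 1$ — so $|\hat g(\phi)| = |\hat f(\phi)| = 1$, again using that $f$ is bent. This establishes both conditions, so $g$ is bent.

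There is essentially no obstacle here: both parts are one-line consequences of facts already available in the excerpt (the bijectivity of translation and \cref{lem:ghat}). The only point requiring a word of justification is that $|\phi(s)| = 1$, which follows from characters taking values in roots of unity. I would present the proof in two short sentences, one for each condition, citing \cref{lem:ghat} for the Fourier side.
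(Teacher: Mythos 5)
Your proof is correct and follows essentially the same route as the paper: the first condition is immediate from $|g(x)|=|f(x-s)|=1$, and the second uses \cref{lem:ghat} together with $|\phi(s)|=1$. No gaps to report.
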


\begin{proof}
    Clearly, we have $|g(x)|=1$ for all $x\in G$. Using \cref{lem:ghat} and the fact that $|\phi(s)| =1$, we obtain $|\hat g(\phi)|=1$ for all $\phi\in \hat{G}$.
\end{proof}

\subsection{The quantum Fourier transform operator}

Let $\CG$ and $\CGhat$ be finite-dimensional complex vector spaces with bases indexed by $G$ and $\hat{G}$, respectively, and equipped with the standard hermitian inner product for which these bases are orthonormal. Formally,
\[
    \CG = \left\{\sum_{x\in G}\alpha_x\ket{x}:\alpha_x\in \C \text{ for all }x\in G\right\},
\]
and similarly for $\CGhat$.

The \emph{Fourier transform} on $G$ is the unitary operator $F: \CG\rightarrow\CGhat$ defined as
\begin{equation}
    F = \frac{1}{|G|^{1/2}}\sum_{\phi\in\smash{\hat{G}}}\sum_{x\in G}\phi(x)\kb{\phi}{x},
    \label{eq:F}
\end{equation}
and its inverse is the unitary operator $F^\dagger: \CGhat\rightarrow\CG$
defined as
\[
    F^\dagger = \frac{1}{|G|^{1/2}}\sum_{x\in G}\sum_{\phi\in\smash{\hat{G}}}\overline{\phi(x)}\kb{x}{\phi}.
\]
The operators $F$ and $F\ct$ act on the basis vectors of their respective input spaces as follows:
\begin{equation}\label{eq:F on basis vectors}
    F\ket{x} = \frac{1}{|G|^{1/2}}\sum_{\phi\in \hat{G}}\phi(x)\ket{\phi},
    \quad \text{and} \quad
    F^{\dagger}\ket{\phi} = \frac{1}{|G|^{1/2}}\sum_{x\in G}\overline{\phi(x)}\ket{x}.
\end{equation}
In particular, $F$ maps any superposition with amplitudes $f$ to one with amplitudes $\hf$:
\begin{equation}\label{eq:f to hf}
    F \cdot \frac{1}{|G|^{1/2}} \sum_{x \in G} f(x) \ket{x}
    = \frac{1}{|G|^{1/2}} \sum_{x \in G} \frac{1}{|G|^{1/2}} \sum_{\phi\in \smash{\hat{G}}} \phi(x) f(x) \ket{\phi}
    = \frac{1}{|G|^{1/2}} \sum_{\phi \in \hG} \hf(\phi) \ket{\phi}.
\end{equation}
Thanks to unitarity of $F$ we get \emph{Parseval's identity}
\begin{equation}\label{eq:Parseval}
    \sum_{x \in G} |f(x)|^2
    = \sum_{\phi \in \Ghat} |\hf(\phi)|^2.
\end{equation}

\section{Exact algorithm for bent functions}\label{sec:exact}

Recall from \cref{def: bent function intro} that a function $f: G \rightarrow \C$ is bent if $|f(x)| = |\hat{f}(\phi)| = 1$ for all $x \in G$ and $\phi \in \hat{G}$.
In this section, we present classical and quantum algorithms that solve the hidden shift \cref{problem: exact formulation} with $d=1$ for the class of bent functions exactly, i.e.\ with success probability~1.

\subsection{Classical algorithm}

We present a simple classical algorithm for \cref{problem: exact formulation}, based on \cref{lem:ghat}.
We do not claim that it has optimal time or query complexity, but we state it for the sake of having something to compare our quantum algorithms against.

Let $G$ be a finite abelian group given as a product
\[G = \prod_{j=1}^{l} \Z/N_j\Z\]
of cyclic groups.
The dual group $\hat{G}$ is then generated by characters $\phi_j$ as described in \cref{eq:phi}.
For an unknown bent function $f: G \rightarrow \C$, given oracle access to a shifted function $g: G \rightarrow \C$ and $\hf: \hG \rightarrow \C$, the goal is to determine the unique hidden shift $s = (s_1, \dots, s_l)$.

\begin{Algo}\label{alg: exact classical}\
\begin{enumerate}
\item Query $g$ on all elements of $G$ and store the values.
\item For $j \in \{1,\dots, l\}$, repeat the following steps:
\begin{enumerate}
    \item compute $\hat{g}(\phi_j)=\frac{1}{\sqrt{|G|}}\sum_{x\in G}\exp(2\pi i x_j/N_j)g(x)$ using the stored values of~$g$,
    \item query $\hat{f}(\phi_j)$ and compute $\phi_j(s) = \frac{\hat{g}(\phi_j)}{\hat{f}(\phi_j)}$,
    \item compute $s_j = \frac{N_j\log(\phi_j(s))}{2\pi i}\bmod N_j$.
\end{enumerate}
\item Return $s = (s_1, \dots, s_l)$.
\end{enumerate}
\end{Algo}
\begin{Thm}\label{thm: exact classical}
    \Cref{alg: exact classical} finds the hidden shift $s$ with certainty using $|G|$ queries to~$O_g$, $l$ queries to~$O_{\hf}$ and $O(l|G|)$ arithmetic operations in~$\C$ (including complex exponentials and logarithms).
\end{Thm}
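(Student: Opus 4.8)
The plan is to verify three things: correctness of the output, the query counts to each oracle, and the count of arithmetic operations in $\C$. All three follow by unwinding the algorithm step by step and invoking \cref{lem:ghat}.

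First I would argue correctness. Step~1 queries $g$ at every point of $G$, so after this step we have a complete table of values $g(x)$. In step~2(a), for each $j$ we compute $\hat g(\phi_j)$ directly from this table using the formula in \cref{def: Fourier tranform}, specialized to the character $\phi_j$ of \cref{eq:phi}; this is exact since it is a finite sum of stored complex numbers. In step~2(b) we query $\hat f(\phi_j)$, which is nonzero because $f$ is bent (so $|\hat f(\phi_j)| = 1$), and form the quotient $\hat g(\phi_j)/\hat f(\phi_j)$; by \cref{lem:ghat} this equals $\phi_j(s)$. By \cref{eq:phi} we have $\phi_j(s) = \exp(2\pi i s_j/N_j)$, so $s_j$ is recovered in step~2(c) by taking a complex logarithm, multiplying by $N_j/(2\pi i)$, and reducing modulo $N_j$; since $s_j \in \{0,\dots,N_j-1\}$ this reduction pins down $s_j$ uniquely. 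Assembling the $s_j$ in step~3 yields $s = (s_1,\dots,s_l)$ exactly, so the algorithm succeeds with certainty.

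Next the query counts. Step~1 makes exactly $|G|$ queries to $O_g$ (one per element of $G$), and these are the only calls to $O_g$. The loop in step~2 runs over $j \in \{1,\dots,l\}$ and makes exactly one query to $O_{\hf}$ per iteration, in step~2(b), for a total of $l$ queries to $O_{\hf}$; no other calls to $O_{\hf}$ occur. For the arithmetic count: step~2(a) is a sum of $|G|$ terms, each requiring a complex exponential and a multiplication, so $O(|G|)$ operations per $j$; step~2(b) is one division; step~2(c) is one complex logarithm plus a constant number of further operations (a multiplication and a reduction). Hence each iteration of the loop costs $O(|G|)$ operations, and over $l$ iterations this is $O(l|G|)$; the total is dominated by the loop, giving $O(l|G|)$ arithmetic operations in $\C$ overall.

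I do not anticipate a genuine obstacle here, as the statement is essentially bookkeeping once \cref{lem:ghat} is in hand; the only point requiring a word of care is the well-definedness of step~2(c). The complex logarithm is multivalued, but any branch works: different branches change $\log(\phi_j(s))$ by an integer multiple of $2\pi i$, hence change $N_j\log(\phi_j(s))/(2\pi i)$ by an integer multiple of $N_j$, which is absorbed by the reduction modulo $N_j$. One should also note that $\phi_j(s) \neq 0$ (it is a root of unity) so the logarithm is defined. With this remark the proof is complete.
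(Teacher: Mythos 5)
Your proposal is correct and follows essentially the same route as the paper's proof: query $g$ everywhere, compute the $l$ Fourier values directly, recover $\phi_j(s)$ via \cref{lem:ghat}, extract $s_j$ using \cref{eq:phi}, and tally queries and $O(l|G|)$ arithmetic operations. Your added remarks on the branch of the logarithm and the nonvanishing of $\hat f(\phi_j)$ are welcome details the paper leaves implicit, but they do not change the argument.
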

\begin{proof}
The first step uses $|G|$ queries to $O_g$.
The $l$ iterations of step~2(a) use $O(l|G|)$ arithmetic operations in~$\C$.  (The fast Fourier transform is of no use here since we only need to compute $l$ values of $\hat g$.)
In the $l$ iterations of step~2(b), we indeed obtain the correct values of $\phi_j(s)$ by \cref{lem:ghat}, using $l$ queries to $O_{\hf}$ and $l$ arithmetic operations.
Finally, we obtain the correct $s_1,\dotsc,s_l$ in step~2(c) because of~\cref{eq:phi}, using $O(l)$ arithmetic operations in~$\C$.
\end{proof}

\subsection{Quantum algorithm}\label{sec:exact q alg}

For the quantum algorithm we assume the same setting of a finite abelian group $G$ and two bent functions $f,g: G \rightarrow \C$ such that $g(x) = f(x-s)$ for some unique shift $s$.
We assume access to the quantum oracles $O_g$ and $O_{\hf}$ as given in \cref{eq: oracle access g} and \cref{eq: oracle access hf}, where we ignore the indicator register.
Because the images of $g$ and $\smash{\hf}$ are on the unit circle, we can define new quantum oracles $\tilde{O}_{g}\in \U{\C^G}$ and $\tilde{O}_{1/\hf} \in \U{\C^{\hG}}$ by
\begin{equation}\label{eq: phase oracles}
    \tilde{O}_g\ket{x} = g(x)\ket{x}, \quad    \tilde{O}_{1/\hf}\ket{\phi} = \hf(\phi)^{-1}\ket{\phi}.
\end{equation}
We also assume that we have access to an operator $S$ acting on a quantum register containing complex numbers $z$ of absolute value 1 by
\begin{equation}
S\ket{z}=z\ket{z}.
\label{eq:S}
\end{equation}
In practice, $z$ will have finite precision, so we can implement $S$ using a sequence of controlled rotation operators.
This may introduce numerical errors, the effects of which are discussed in \cref{sec:error-analysis}.

\begin{Lem}\label{lem: phase oracle 1-dim}
    The quantum oracles $\tilde{O}_g$ and $\tilde{O}_{1/\hat{f}}$ in \cref{eq: phase oracles} can be efficiently implemented using two queries to $O_g$ and to $O_{\hf}$, respectively.
\end{Lem}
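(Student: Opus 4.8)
The plan is to build the phase oracle $\tilde{O}_g$ from two applications of the bit oracle $O_g$, using the auxiliary operator $S$ from \cref{eq:S} to "read off" the value $g(x)$ as a phase, and then uncompute. Concretely, I would start from a state $\ket{x}\ket{0}$ on $\C^G \otimes \C^{2^n}$ (ignoring the indicator register as stipulated), apply $O_g$ to obtain $\ket{x}\ket{g(x)}$, then apply $S$ on the value register to pick up the scalar $g(x)$ (which has absolute value $1$ since $g$ is a shift of a bent function, by \cref{lem:shift-bent}), giving $g(x)\,\ket{x}\ket{g(x)}$. Since $g(x)$ is now a global scalar on that branch, it commutes out of the value register, and a second application of $O_g$ — which is self-inverse — resets the value register to $\ket{0}$, yielding $g(x)\ket{x}\ket{0}$. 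Discarding (or ignoring) the clean ancilla register then realizes $\tilde{O}_g\ket{x} = g(x)\ket{x}$, using exactly two queries to $O_g$ plus one application of $S$, which is efficient by the standing assumption in \cref{sec:qcircuits}.

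For $\tilde{O}_{1/\hf}$ the argument is analogous but with two twists. First, the oracle $O_{\hf}$ outputs $\hf(\phi)$, not $\hf(\phi)^{-1}$; since $|\hf(\phi)| = 1$ for a bent $f$, we have $\hf(\phi)^{-1} = \overline{\hf(\phi)}$, so I would use $S^\dagger$ (which acts by $\ket{z}\mapsto \bar z\ket{z}$ on the unit circle) in place of $S$ to deposit the phase $\overline{\hf(\phi)} = \hf(\phi)^{-1}$. The rest is the same: apply $O_{\hf}$ to get $\ket{\phi}\ket{\hf(\phi)}$, apply $S^\dagger$ to the value register to extract $\hf(\phi)^{-1}$ as a scalar, then apply $O_{\hf}$ again to uncompute the value register, leaving $\hf(\phi)^{-1}\ket{\phi}\ket{0}$. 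This again costs two queries to $O_{\hf}$ and one application of $S^\dagger$. (If one prefers to use only $S$ and not $S^\dagger$, note $S^\dagger$ is itself efficiently implementable from controlled rotations in the other direction, so this does not affect efficiency.)

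The only genuinely delicate point — and the one I would state carefully rather than gloss over — is the uncomputation step: one must check that after the phase has been extracted, the scalar $g(x)$ (resp.\ $\hf(\phi)^{-1}$) really is a branch-independent global factor on the $\ket{x}$-branch, so that the second query to $O_g$ acts as the identity on the value register exactly as the first did in reverse. This is immediate here because $S$ acts diagonally and the value register holds a deterministic function of the first register, so no entanglement between the first and value registers is created beyond what $O_g$ itself produces; hence $O_g^2 = I$ cleanly disentangles them. I would also remark that this is the standard "compute–phase–uncompute" pattern, and that boundedness of the image on the unit circle is exactly what makes $S$ (rather than a full controlled-multiplication) suffice. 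With these observations the lemma follows; the query counts are read directly off the construction.
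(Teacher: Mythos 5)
Your construction is exactly the paper's proof: apply $O_g$, then $S$ on the value register to extract the unit-modulus phase $g(x)$, then $O_g$ again to uncompute, and for $\tilde{O}_{1/\hf}$ use the inverse of $S$ (your $S^\dagger$, valid since $\hf(\phi)^{-1}=\overline{\hf(\phi)}$ on the unit circle) with two calls to $O_{\hf}$. The proposal is correct and takes essentially the same compute--phase--uncompute route as the paper, just spelled out in more detail.
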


\begin{proof}
    Since $|g(x)| = 1$ for all $x\in G$, we can implement $\tilde{O}_g$ using $O_g$ and the operator $S$ defined in~\cref{eq:S} as
    \begin{align*}
        \ket{x}\ket{0}&\xmapsto{O_g} \ket{x}\ket{g(x)}\\
        &\xmapsto{I\otimes S} g(x)\ket{x}\ket{g(x)}\\
        &\xmapsto{O_g}g(x)\ket{x}\ket{0}.
    \end{align*}
    One can similarly implement $\tilde{O}_{1/\hf}$ using $O_f$ and the inverse of~$S$.
\end{proof}

\begin{Rem}
Under somewhat stronger assumptions on $g$ and $O_g$, a single call to $O_g$ suffices to implement $\tilde{O}_g$.
Namely, let us assume that $g$ takes values in a known finite subgroup $C$ of $\U{\C}=\{z\in\C:|z|=1\}$.  Then the dual group $\hat C$ is canonically isomorphic to $\Z/|C|\Z$, and the quantum Fourier transform $F_C$ sends the uniform superposition $|C|^{-1/2}\sum_{z\in C}\ket{z}$ to $\ket{0}$, where $0$ denotes the trivial element of $\hat C$. 
Furthermore, suppose that we have access to a modified version of $O_g$ that takes a state of the form $\ket{x}\ket{z}$ with $x\in G$ and $z\in C$ and sends it to $\ket{x}\ket{zg(x)}$.
Then we can implement $\tilde O_g$ using the following variant of phase kickback (cf.\ \cite[proof of Theorem 4.1]{rotteler2010quantum}):
\begin{align*}
\ket{x}\ket{0}&\xmapsto{I\otimes F_C^\dagger}\frac{1}{|C|^{1/2}}\sum_{z\in C}\ket{x}\ket{z}\\
&\xmapsto{I\otimes S^{-1}}\frac{1}{|C|^{1/2}}\sum_{z\in C}z^{-1}\ket{x}\ket{z}\\
&\xmapsto{O_g}
\frac{1}{|C|^{1/2}}\sum_{z\in C}z^{-1}\ket{x}\ket{zg(x)}\\
&\qquad=\frac{1}{|C|^{1/2}}\sum_{w\in C}(wg(x)^{-1})^{-1}\ket{x}\ket{w}=\frac{g(x)}{|C|^{1/2}}\sum_{w\in C}w^{-1}\ket{x}\ket{w}\\
&\xmapsto{I\otimes S}\frac{g(x)}{|C|^{1/2}}\sum_{w\in C}\ket{x}\ket{w}\\
&\xmapsto{I\otimes F_C} g(x)\ket{x}\ket{0}.
\end{align*}
Similarly, we can implement $\tilde O_{1/\hf}$ using a single call to a modified oracle for $\hf$ under analogous assumptions.
\end{Rem}

Below we describe an exact and efficient quantum algorithm for \cref{problem: exact formulation}, which uses the oracles from \cref{eq: phase oracles} and generalizes an earlier algorithm for \emph{boolean} bent functions \cite[Theorem~4.1]{rotteler2010quantum}. Our algorithm uses the state space $\C^G$.

\begin{Algo}~\label{algo1}
	\begin{enumerate}
		\item Prepare the uniform superposition
		$|G|^{-1/2} \sum_{x\in G} \ket{x}$.
		\item Apply the operator $F^\dagger \tilde{O}_{1/\hat{f}} F \tilde{O}_g$.
		\item Measure in the standard basis.
	\end{enumerate}
\end{Algo}

\begin{Thm}\label{thm:bent}
	If $f$ is a bent function, then \cref{algo1} determines the hidden shift $s$ with certainty using one call to the oracle $\tilde{O}_g$ and one call to the oracle $\tilde{O}_{1/\hat{f}}$.
\end{Thm}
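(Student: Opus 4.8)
The plan is to track the state vector through the three steps of \cref{algo1} explicitly, using the action of $F$, $F^\dagger$ and the phase oracles on basis vectors. After step~1 we have the uniform superposition $|G|^{-1/2}\sum_{x\in G}\ket{x}$. Applying $\tilde O_g$ multiplies the amplitude of $\ket{x}$ by $g(x)$, yielding $|G|^{-1/2}\sum_{x\in G}g(x)\ket{x}$. By \cref{eq:f to hf} (with $g$ in place of $f$), applying $F$ produces $|G|^{-1/2}\sum_{\phi\in\hG}\hat g(\phi)\ket{\phi}$. Now I would invoke \cref{lem:ghat} to rewrite $\hat g(\phi)=\phi(s)\hat f(\phi)$, so the state is $|G|^{-1/2}\sum_{\phi\in\hG}\phi(s)\hat f(\phi)\ket{\phi}$. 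Applying $\tilde O_{1/\hf}$ multiplies the amplitude of $\ket{\phi}$ by $\hat f(\phi)^{-1}$ — this is well-defined because $f$ is bent, so $|\hat f(\phi)|=1$ and in particular $\hat f(\phi)\neq0$ — giving the clean state $|G|^{-1/2}\sum_{\phi\in\hG}\phi(s)\ket{\phi}$.

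The key observation is that this last state is exactly $F\ket{s}$ by \cref{eq:F on basis vectors}. Therefore applying $F^\dagger$ returns $F^\dagger F\ket{s}=\ket{s}$, and the measurement in step~3 yields $s$ with probability~$1$. For the query count: by \cref{lem: phase oracle 1-dim}, each of $\tilde O_g$ and $\tilde O_{1/\hf}$ is implemented with two queries to $O_g$ and $O_{\hf}$ respectively; but the theorem statement only claims one call to each of the phase oracles $\tilde O_g$ and $\tilde O_{1/\hf}$, which is immediate since step~2 applies each exactly once.

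I do not expect any genuine obstacle here: the argument is a direct unwinding of definitions, and the only points requiring a word of justification are (i) that $\tilde O_{1/\hf}$ is well-defined on the relevant state, which follows from bentness, and (ii) that $\tilde O_g$ and $\tilde O_{1/\hf}$ act correctly on superpositions and not just basis states, which holds because they are defined as linear (indeed unitary) operators in \cref{eq: phase oracles}. The mild subtlety worth flagging is the bookkeeping of the shift appearing inside the character: one must apply \cref{lem:ghat} before cancelling $\hat f(\phi)$, so that the surviving phase is precisely $\phi(s)$ and the inverse Fourier transform collapses the superposition onto $\ket{s}$.
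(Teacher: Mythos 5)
Your proposal is correct and follows essentially the same route as the paper's proof: track the uniform superposition through $\tilde O_g$, $F$ (via \cref{eq:f to hf}), rewrite $\hat g(\phi)=\phi(s)\hat f(\phi)$ using \cref{lem:ghat}, cancel with $\tilde O_{1/\hf}$, and apply $F^\dagger$ to land on $\ket{s}$ by \cref{eq:F on basis vectors}. The extra remarks on well-definedness of $\tilde O_{1/\hf}$ and linearity are fine but not points the paper needed to spell out.
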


\begin{proof}
The state changes as
\begin{align*}
    \frac{1}{|G|^{1/2}}
    \sum_{x\in G} \ket{x}
    &\xmapsto{\tilde{O}_g}
      \frac{1}{|G|^{1/2}}\sum_{x\in G}g(x)\ket{x}
    	 \xmapsto{F}
      \frac{1}{|G|^{1/2}}\sum_{\phi\in\hat{G}}\hat{g}(\phi)\ket{\phi} =
      \frac{1}{|G|^{1/2}}\sum_{\phi\in\hat{G}}\phi(s)\hat{f}(\phi)\ket{\phi} \\
    &\xmapsto{\tilde{O}_{1/\hat{f}}}
      \frac{1}{|G|^{1/2}}\sum_{\phi\in\hat{G}}\phi(s)\ket{\phi}
      \xmapsto{F^{\dagger}}
      \ket{s},
\end{align*}
where we made use of \cref{eq:f to hf,lem:ghat,eq:F on basis vectors}.
\end{proof}

All our subsequent algorithms rely on the same principle as \cref{algo1}. Namely, the state remains in a uniform superposition at key steps throughout the algorithm, with all information encoded in relative phases between standard or Fourier basis states. We manipulate the phase function with oracle calls and Fourier transform in the following way:
\begin{equation} 1 \xmapsto{\tilde{O}_g} g \xmapsto{F}\hg = \hat{\delta}_s \cdot \hf \xmapsto{\tilde{O}_{1/\hf}} \hat{\delta}_s \xmapsto{F^{\dagger}}\delta_s,
\end{equation}
allowing us to learn $s$ at the final step. Although in subsequent sections the algorithms and calculations become more difficult, the underlying idea stays the same.

\section{Approximate algorithms for bent-like functions}\label{sec:approximate}

The hidden shift problem for bent functions can be solved with certainty by \cref{algo1} using one query to each of the two phase oracles $\tilde{O}_g$ and $\tilde{O}_{1/\hf}$ from \cref{eq: phase oracles}. If we want to extend the algorithm to functions $f: G \rightarrow \C$ that are not bent, we cannot use phase oracles anymore.
Recall from \cref{lem: phase oracle 1-dim} that each phase oracle $\tilde{O}_g$ and $\tilde{O}_{1/\hf}$ can be implemented by two calls to one of the original oracles $O_g$ and $O_{1/\hf}$ from \cref{eq: oracle access g,eq: oracle access hf}, respectively.
In all subsequent extensions of \cref{algo1} we rely on these two original oracles.

The main idea is to shift the encoding of $f$ from phase to the whole amplitude.
Since $G$ is finite, there exists an $R > 0$ such that $|f(x)|\leq R$ for all $x\in G$, so the function $f(\cdot)/R$ maps to the unit disc and hence
\begin{equation}\label{eq: phase explanation}
    \underbrace{\vphantom{\sqrt{1-\left|\frac{f(x)}{R}\right|^2}\ket{1}}\frac{f(x)}{R} \ket{0}}_{\text{good}} + \underbrace{\sqrt{1-\left|\frac{f(x)}{R}\right|^2}\ket{1}}_{\text{bad}}
\end{equation}
is a valid qubit state. If $|f|$ is valued very close to $R$, most of the amplitude is at $\ket{0}$. The more amplitude ends up in the `good part' the better the algorithm performs. In particular, for bent functions we can take $R=1$ and all of the amplitude ends up in the good part.

In this section, all algorithms require two extra qubits, one for $O_g$ and one for $O_{1/\hf}$, to encode the oracle output into amplitudes according to \cref{eq: phase explanation}.

\subsection{First generalization: \texorpdfstring{$(R,\hr)$-}{(R, hat{r})-}bounded functions}\label{subsec: first gen}

Any complex function on a finite abelian group is bounded below and above. In the first generalization of \cref{algo1}, we additionally need the Fourier transform to vanish nowhere. In other words, we assume our function $f$ to be $(R, \hr)$-bounded as in \cref{def: hr R bounded intro}.

\begin{Prop}
    \label{prop:bounded}
    For all $(R,\hr)$-bounded functions $f$,
    \begin{enumerate}
        \item\label{item0} $\|f\|_\infty \leq R$ and $\|1/\hat{f}\|_\infty \leq 1/\hr$,
        \item\label{item1} $\hr\leq R$, and if $\hr=R$ then $|f(x)|=R$ and $|\hat f(\phi)|=\hr$ for all $x\in G$ and $\phi\in\Ghat$,
        \item\label{item2} for any $s \in G$ the function $g(x) = f(x-s)$ is also $(R,\hr)$-bounded.
    \end{enumerate}
\end{Prop}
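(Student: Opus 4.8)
The three claims are essentially a packaging of the basic definitions together with Parseval's identity, so the plan is to verify each one in turn with a short computation. For part~\ref{item0}, the bound $\|f\|_\infty\le R$ is immediate from $\max_{x\in G}|f(x)|\le R$ in \cref{def: hr R bounded intro}, and the bound $\|1/\hat f\|_\infty\le 1/\hr$ follows by taking reciprocals in $\min_{\phi\in\hat G}|\hat f(\phi)|\ge \hr$; one only needs to note that $\hr>0$ guarantees $\hat f$ is nowhere zero, so $1/\hat f$ is well defined as a function on $\hat G$.

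For part~\ref{item1}, the plan is to use Parseval's identity \eqref{eq:Parseval} together with the bounds from part~\ref{item0}. We have
\[
    |G|\,\hr^2 \le \sum_{\phi\in\hat G}|\hat f(\phi)|^2 = \sum_{x\in G}|f(x)|^2 \le |G|\,R^2,
\]
which gives $\hr\le R$. If equality $\hr=R$ holds, then both inequalities above are equalities, so $\sum_{x\in G}(R^2-|f(x)|^2)=0$ with every summand nonnegative, forcing $|f(x)|=R$ for all $x$; similarly $\sum_{\phi\in\hat G}(|\hat f(\phi)|^2-\hr^2)=0$ forces $|\hat f(\phi)|=\hr$ for all $\phi$.

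For part~\ref{item2}, the plan is to transport the two bounds through the shift. The condition $\max_x|g(x)|=\max_x|f(x-s)|=\max_y|f(y)|\le R$ is clear since $x\mapsto x-s$ is a bijection of $G$. For the Fourier side, apply \cref{lem:ghat}: $\hat g(\phi)=\phi(s)\hat f(\phi)$, and since $|\phi(s)|=1$ (a character takes values in roots of unity, or simply $\phi(s)\phi(-s)=\phi(0)=1$ with $|\phi(-s)|=|\overline{\phi(s)}|$), we get $|\hat g(\phi)|=|\hat f(\phi)|\ge\hr$ for all $\phi$. Hence $g$ is $(R,\hr)$-bounded.

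None of the steps presents a real obstacle; the only minor point to be careful about is to note explicitly that $\hr>0$ is what makes $1/\hat f$ and the reciprocal bound in part~\ref{item0} meaningful, and that $|\phi(s)|=1$ is what makes the shift invariance in part~\ref{item2} work.
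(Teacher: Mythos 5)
Your proposal is correct and follows essentially the same route as the paper: part~\ref{item0} straight from \cref{def: hr R bounded intro}, part~\ref{item1} via the same Parseval chain $|G|\hr^2 \le \sum_\phi |\hat f(\phi)|^2 = \sum_x |f(x)|^2 \le |G|R^2$ with the equality case forcing pointwise equality, and part~\ref{item2} via \cref{lem:ghat} and $|\phi(s)|=1$. The only difference is that you spell out a few details (e.g.\ $\hr>0$ making $1/\hat f$ well defined) that the paper leaves implicit.
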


\begin{proof}
    Claim~\ref{item0} follows immediately from \cref{def: hr R bounded intro}.
    For claim~\ref{item1}, note that Parseval's identity~\eqref{eq:Parseval} implies
    \begin{equation*}
        |G| \hr^2 \le \sum_{\phi \in \hat{G}}|\hat f(\phi)|^2 =
        \sum_{x \in G}|f(x)|^2 \le |G| R^2.
    \end{equation*}
    Hence $\hr \leq R$, and if $\hr=R$ then both inequalities are equalities, which together with \cref{def: hr R bounded intro} implies the claim.
    For claim~\ref{item2}, note that if $f$ is bounded from above by $R$ then so is $g$. By \cref{lem:ghat}, $|g(\phi)| \geq \hr$ for all $\phi$, hence $g$ is also $(R,\hr)$-bounded.
\end{proof}

Below we give a quantum algorithm for the hidden shift problem of $(R,\hr)$-bounded functions.
We assume access to oracles for $g$ and $\hf$, and use additional unitary operators $U_1$ and $U_2$ that create states of the form \eqref{eq: phase explanation}.
Specifically, we use the following quantum operators:
\begin{itemize}
    \item Quantum oracles $O_g$ and $O_{\hf}$ to access $g$ and $\hf$ given by \cref{eq: oracle access g,eq: oracle access hf} (we ignore the second register as in \cref{sec:exact q alg}).
    \item The quantum operation $U_1 \in \U{\C^{2^n} \otimes \C^2}$ acting as
    $U_1 = \sum_w \proj{w} \x U_1(w)$
    where
	\begin{equation}\label{eq: U1 second gen}
        U_1(w)\ket{0} = \frac{w}{R}\ket{0} + \sqrt{1-|w/R|^2}\ket{1}
	\end{equation}
    and $w$ ranges over some subset of $\C$ that contains both $\set{g(x) : x \in G}$ and $\set{\hf(\phi) : \phi \in \hG}$, and $\ket{w}$ denotes an $n$-bit encoding of $w$ (we do not specify an explicit encoding).
    \item The quantum operation $U_2 \in \U{\C^{2^n} \otimes \C^2}$ acting as
    $U_2 = \sum_w \proj{w} \x U_2(w)$
    where\footnote{We only need to know $U_2(w)\ct \ket{0}$ or $\bra{0} U_2(w)$ since we will analyze the second half of the circuit backwards.}
	\begin{equation}\label{eq: U2 second gen}
        U_2(w)\ct \ket{0} =
        \begin{cases}
            \frac{\hr}{\bar{w}} \ket{0} + \sqrt{1-|\hr/\bar{w}|^2} \ket{1} & \text{if $w \neq 0$}, \\
            \ket{0} & \text{if $w = 0$}.
        \end{cases}
    \end{equation}
	\item The quantum Fourier transform $F \in \U{\C^{G}}$ for $G$, see \cref{eq:F}.
\end{itemize}

The algorithm works on the space $\C^G \otimes \C^{2^n} \otimes \C^2 \otimes \C^2$, where the second register is extra $n$-qubit workspace that is returned to its original state.

\begin{Algo}~\label{alg: approx1}
\begin{enumerate}
    \item Prepare the uniform superposition $\ket{G,0,0,0} = \frac{1}{|G|^{1/2}}\sum_{x\in G}\ket{x,0,0,0}$.
    \item Perform the following quantum circuit:
    \begin{figure}[H]
    \centering
    \begin{tikzpicture}
        \def\dx{1.5cm} 
    	\begin{yquant}
            [name = regG] qubit {$\ket{G}$} G;
            [name = regw] qubit {$\ket{0}$} w;
            [name = rega] qubit {$\ket{0}$} a;
            [name = regb] qubit {$\ket{0}$} b;
            \path (regG.east)+(-\dx,0) node[anchor=west] {$\C^G$};
            \path (regw.east)+(-\dx,0) node[anchor=west] {$\C^{2^n}$};
            \path (rega.east)+(-\dx,0) node[anchor=west] {$\C^2$};
            \path (regb.east)+(-\dx,0) node[anchor=west] {$\C^2$};

    		  box {$O_g$} (G,w);
            [name = U1] box {$U_1(w)$} a | w;
            \path (U1-p) node[above] {$w$};
            box {$O_g$} (G,w);
    		  box {$F$} G;
              
            [red, label = $\ket{\Phi(s)}$] barrier (-);
            box {$O_{\hf}$} (G,w);
            [name = U2] box {$U_2(w)$} b | w;
            \path (U2-p) node[above] {$w$};
            box {$O_{\hf}$} (G,w);
    		box {$F^{\dagger}$} G;
            
            align G,a,b;
            measure G;
    		measure b;
            measure a;

            output {$\ket{0}$} w;
    	\end{yquant}
    \end{tikzpicture}
    \end{figure}
    \item If the measurements on the third and fourth register result in $0$, output the value of the first register. Otherwise, output \textnormal{FAIL}.
\end{enumerate}
\end{Algo}

\begin{Thm}\label{thm: first gen}
    For an $(R, \hr)$-bounded function $f: G\to \C$, the probability of finding the hidden shift $s$ by \cref{alg: approx1} is
    \begin{equation}\label{eq:basic p(s)}
        p(s) = \left(\frac{\hr}{R}\right)^2.
    \end{equation}
    The algorithm uses two calls to each of the oracles $O_g$ and $O_{\hf}$.
\end{Thm}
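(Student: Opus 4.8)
The plan is to track the quantum state through \cref{alg: approx1} step by step, analogously to the proof of \cref{thm:bent}, but now carrying along the two ancillary qubits that store the `good'/`bad' decomposition of \cref{eq: phase explanation}. The first half of the circuit (up to the red barrier) produces a state $\ket{\Phi(s)}$; I would compute this explicitly. Starting from $|G|^{-1/2}\sum_{x}\ket{x,0,0,0}$, applying $O_g$ gives $|G|^{-1/2}\sum_x \ket{x,g(x),0,0}$; then $U_1(w)$ on the third register (controlled on $w$) turns this into $|G|^{-1/2}\sum_x \ket{x,g(x)}\bigl(\tfrac{g(x)}{R}\ket{0}+\sqrt{1-|g(x)/R|^2}\ket{1}\bigr)\ket{0}$; the second $O_g$ uncomputes the $w$-register (since $O_g$ is self-inverse on that register and $g(x)$ depends only on $x$), returning it to $\ket{0}$. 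Now I would apply $F$ to the first register. The key observation is that the coefficient of $\ket{0}$ in the third register is exactly the amplitude function $g(x)/R$, so $F$ sends the `good' branch to $|G|^{-1/2}R^{-1}\sum_\phi \hg(\phi)\ket{\phi} = |G|^{-1/2}R^{-1}\sum_\phi \phi(s)\hf(\phi)\ket{\phi}$ by \cref{eq:f to hf} and \cref{lem:ghat}; the `bad' branch ($\ket{1}$ in the third register) becomes some garbage state that I would not need to track in detail.

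For the second half of the circuit, following the hint in the footnote attached to \cref{eq: U2 second gen}, I would analyze it backwards: the probability of measuring $0$ on both ancillas and $s$ on the first register is $|\bra{s,0,0,0}\,(\text{second half})\,\ket{\Phi(s)}|^2$, and I would push the second-half operators onto the bra $\bra{s,0,0,0}$. Applying $F$ to $\bra{s}$ gives $|G|^{-1/2}\sum_\phi \overline{\phi(s)}\bra{\phi}$ (using $F^\dagger\ket{s}$), then $O_{\hf}$ on the first two registers, then $U_2(w)^\dagger$ acting on $\bra{0}$ in the fourth register picks up the factor $\hr/\overline{\hf(\phi)}$ on the $\ket{0}$-component (when $\hf(\phi)\neq 0$, which holds since $f$ is $(R,\hr)$-bounded so $|\hf(\phi)|\geq\hr>0$), then the second $O_{\hf}$ uncomputes the $w$-register. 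Taking the inner product with the `good' part of $\ket{\Phi(s)}$, each $\phi$ contributes $\tfrac{1}{|G|}\cdot\overline{\phi(s)}\cdot\tfrac{\hr}{\overline{\hf(\phi)}}\cdot\tfrac{1}{R}\phi(s)\hf(\phi) = \tfrac{1}{|G|}\cdot\tfrac{\hr}{R}$ (the $\overline{\hf(\phi)}$ in the denominator cancels against $\hf(\phi)$ up to its modulus — here I should be careful: $\tfrac{\hr}{\overline{\hf(\phi)}}\cdot\hf(\phi) = \hr\cdot\tfrac{\hf(\phi)}{\overline{\hf(\phi)}} = \hr\cdot\tfrac{\hf(\phi)^2}{|\hf(\phi)|^2}$, so the phases do not obviously cancel). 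I would need to recheck the exact form of $U_2$ and of $\ket{\Phi(s)}$; presumably the intended bookkeeping has $U_1$ contributing $\tfrac{g(x)}{R}$ but with the conjugation placed so that after $F$ one gets $\tfrac{\overline{\hg(\phi)}}{R}$ or similar, making the cancellation with $\tfrac{\hr}{\overline{\hf(\phi)}}$ exact. The summation over all $\phi\in\hG$ then yields amplitude $\tfrac{1}{|G|}\cdot|G|\cdot\tfrac{\hr}{R} = \tfrac{\hr}{R}$ for the outcome $\ket{s,0,0,0}$, and the success probability is its square $(\hr/R)^2$.

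Finally I would note the cross-term and garbage-state contributions vanish in the relevant amplitude: the `bad' branch of $\ket{\Phi(s)}$ has $\ket{1}$ in the third register, which is orthogonal to the $\ket{0}$ produced on the third register at the end (the third ancilla is untouched by the second half of the circuit), so it does not contribute to $\bra{s,0,0,0}$; similarly the $\ket{1}$-component produced by $U_2^\dagger$ on the fourth register is orthogonal to the $\ket{0}$ in $\ket{\Phi(s)}$'s fourth slot. Hence only the all-$\ket{0}$/all-$\ket{0}$ overlap survives, giving exactly $(\hr/R)^2$. The query count is immediate: two $O_g$ calls in the first half and two $O_{\hf}$ calls in the second half, matching \cref{lem: phase oracle 1-dim}'s two-calls-per-phase-oracle cost.

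The main obstacle I anticipate is getting the complex conjugations and phase factors to line up so that $\tfrac{1}{\hf(\phi)}$-type factor from $U_2$ exactly cancels the $\hf(\phi)$ coming from $\hg(\phi) = \phi(s)\hf(\phi)$, rather than leaving a residual phase $\hf(\phi)/|\hf(\phi)|$. This is why the definitions \cref{eq: U1 second gen,eq: U2 second gen} are written with specific conjugations ($w/R$ versus $\hr/\bar w$), and the proof hinges on matching these against the conjugations introduced by $F$ versus $F^\dagger$ and by the direction in which each half of the circuit is analyzed. Once the bookkeeping is set up consistently, the computation is a routine application of \cref{lem:ghat}, \cref{eq:f to hf}, and orthogonality of characters.
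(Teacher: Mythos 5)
Your overall strategy is exactly the paper's: compute the state $\ket{\Phi(s)}$ at the barrier by running the first half forward, define $\ket{\Psi(s)}$ by running the second half backwards from the target $\ket{s,0,0,0}$, observe that the second ($w$) register is always uncomputed and that the bad branches are orthogonal to the all-zero ancilla outcome, and obtain $p(s)=\abs{\bk{\Psi(s)}{\Phi(s)}}^2$; the query count is also as in the paper. The one place where your write-up is not complete is precisely the step you flag yourself: the conjugation bookkeeping in the backward analysis, which you leave unresolved ("I would need to recheck the exact form of $U_2$\dots presumably the intended bookkeeping\dots"). As written, your computation of that factor is in fact off by a conjugation, and with your factor the phases genuinely would not cancel, so this is a real gap rather than a cosmetic worry.

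The resolution is the following. When you push the second half of the circuit onto the bra, the operator hitting $\bra{0}$ in the fourth register is $U_2(w)$ itself (not $U_2(w)^\dagger$), and by \cref{eq: U2 second gen}
\[
\bra{0}\,U_2(w) \;=\; \bigl(U_2(w)^\dagger\ket{0}\bigr)^\dagger \;=\; \frac{\hr}{w}\,\bra{0} \;+\; \sqrt{1-\abs*{\hr/\bar{w}}^2}\,\bra{1},
\]
so with $w=\hf(\phi)$ the relevant coefficient is $\hr/\hf(\phi)$, not $\hr/\overline{\hf(\phi)}$. (Equivalently: the paper writes $\ket{\Psi(s)}$ with amplitude $\hr/\overline{\hf(\phi)}$ on $\ket{0}$, and the inner product $\bk{\Psi(s)}{\Phi(s)}$ conjugates it back to $\hr/\hf(\phi)$.) Then each character contributes
\[
\frac{1}{|G|}\,\overline{\phi(s)}\,\frac{\hg(\phi)}{R}\,\frac{\hr}{\hf(\phi)}
\;=\;\frac{1}{|G|}\,\overline{\phi(s)}\,\phi(s)\,\frac{\hr}{R}
\;=\;\frac{1}{|G|}\,\frac{\hr}{R}
\]
by \cref{lem:ghat}, with no residual phase $\hf(\phi)/|\hf(\phi)|$, and summing over $\phi\in\hG$ gives amplitude $\hr/R$ and hence $p(s)=(\hr/R)^2$. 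This is exactly why \cref{eq: U2 second gen} specifies $U_2(w)^\dagger\ket{0}$ with the amplitude $\hr/\bar{w}$: it is chosen so that the \emph{bra}-side factor is $\hr/w$. Also note there is no separate appeal to "orthogonality of characters" needed at the end, only the identity $\overline{\phi(s)}\phi(s)=1$; otherwise your orthogonality argument for discarding the bad branches and the cross terms is correct and matches the paper.
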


\begin{proof}
Observe that the second register is always restored to $\ket{0}$ since each oracle is called twice and the state is not affected by controls. Hence, to successfully find the hidden shift $s$, we would like the final state before the measurement to be $\ket{s,0,0,0}$.
Let $\ket{\Phi(s)}$ denote the state at the marked time point, and similarly let $\ket{\Psi(s)}$ denote the state at the same time point but when running the circuit backwards from the desired target state $\ket{s,0,0,0}$. Then the probability of finding the hidden shift $s$ is
\[p(s) = \left|\braket{\Psi(s)}{\Phi(s)} \right|^2.\]

To compute this probability, we first analyze how the state changes when running the algorithm forwards till the marked time point. Using \cref{eq: oracle access g,eq: U1 second gen},
\begin{align*}
    \frac{1}{|G|^{1/2}} \sum_{x \in G} \ket{x,0,0,0}
    & \xmapsto{O_g}
    \frac{1}{|G|^{1/2}} \sum_{x \in G} \ket{x,g(x),0,0} \\
    & \xmapsto{U_1}
    \frac{1}{|G|^{1/2}} \sum_{x \in G} \ket{x} \ket{g(x)}
    \of*{\frac{g(x)}{R} \ket{0} + \sqrt{1-\abs*{\frac{g(x)}{R}}^2} \ket{1}} \ket{0} \\
    & \xmapsto{O_g}
    \frac{1}{|G|^{1/2}} \sum_{x \in G} \ket{x} \ket{0}
    \of*{\frac{g(x)}{R} \ket{0} + \sqrt{1-\abs*{\frac{g(x)}{R}}^2} \ket{1}} \ket{0}.
\end{align*}
According to \cref{eq:f to hf,eq:F on basis vectors}, the Fourier transform produces
\begin{equation*}
    \ket{\Phi(s)}
    = \frac{1}{|G|^{1/2}} \sum_{\phi \in \hG}
    \ket{\phi} \ket{0}
    \of*{
        \frac{\hg(\phi)}{R} \ket{0}
        +
        \frac{1}{|G|^{1/2}}
        \sum_{x \in G} \phi(x)
        \sqrt{1-\abs*{\frac{g(x)}{R}}^2} \ket{1}
    }
    \ket{0}.
\end{equation*}
Similarly, running the algorithm backwards from $\ket{s,0,0,0}$ to the same time point we get
\begin{align*}
    \ket{\Psi(s)}
    &= O_{\hf} U_2\ct O_{\hf} F \ket{s,0,0,0} \\
    &= \frac{1}{|G|^{1/2}} \sum_{\phi \in \hG}
    \phi(s) \ket{\phi} \ket{0} \ket{0}
    \of*{
        \frac{\hr}{\overline{\hf(\phi)}} \ket{0}
        + \sqrt{1-\abs*{\frac{\hr}{\overline{\hf(\phi)}}}^2} \ket{1}
    },
\end{align*}
where we used
\cref{eq: U2 second gen}.
Putting everything together, the probability of recovering $s$ is
\[p(s) = \abs{\braket{\Psi(s)}{\Phi(s)}}^2 = \left|\frac{1}{|G|}\sum_{\phi\in \hG} \overline{\phi(s)} \frac{\hg(\phi)}{R}\frac{\hr}{\hf(\phi)}\right|^2=\left|\frac{1}{|G|}\sum_{\phi \in \hG} \phi(s)\overline{\phi(s)} \frac{\hr}{R}\right|^2 = \left(\frac{\hr}{R}\right)^2,\]
where the penultimate equality uses $\hat{g}(\phi) = \phi(s) \hat{f}(\phi)$ from \cref{lem:ghat}.
\end{proof}

\subsection{Second generalization: more parameters}\label{subsec: second gen}

Recall that \cref{def: hr R bounded intro} requires an upper bound $R$ on $f$ and a lower bound $\hr$ on $\smash\hf$. This is somewhat restrictive since, for example, no zeros are allowed in the Fourier spectrum $\smash\hf$. Moreover, even if $f$ is very large or $\smash\hf$ very small at a single point, the ratio $\hr/R$ and hence the success probability \eqref{eq:basic p(s)} is very small.
This motivates a generalization to $(r, R, \hr, \hR, A, \hA)$-bounded functions as in \cref{def: r R hr hR bounded intro}.
\setcounter{footnote}{0}

In this section, we give a quantum algorithm for the hidden shift problem of such functions which is very similar to \cref{alg: approx1}. The main difference is that we include the indicator registers of $O_g$ and $\smash{O_{\hf}}$ (i.e., the second register in \cref{eq: oracle access g,eq: oracle access hf}) and post-select\footnote{Post-selection means that we terminate the algorithm if the measurement does not produce the desired outcome. Both intermediate measurements in \cref{alg: approx2} post-select the qubit to state $\ket{1}$.} them to $\ket{1}$. This has the effect of restricting the uniform superpositions over $G$ and $\hG$ to subsets $A + s$ and $\hA$ so that $|g(x)| \in [r, R]$ and $|\hf(\phi)| \in [\hr,\hR]$, respectively (see \cref{def: r R hr hR bounded intro}). The algorithm works on the space $\C^G \otimes \C^2 \otimes \C^{2^n} \otimes \C^2 \otimes \C^2$, where the second register is new compared to the previous section. Other than that here we use the same sequence of operations as \cref{alg: approx1}.

\begin{Algo}~\label{alg: approx2}
\begin{enumerate}
    \item Prepare the uniform superposition $\ket{G,0,0,0,0}:=\frac{1}{|G|^{1/2}} \sum_{x\in G} \ket{x,0,0,0,0}$.
    \item Perform the following quantum circuit:
    \begin{figure}[H]
    \centering
    \begin{tikzpicture}
        \def\dx{1.5cm} 
    	\begin{yquant}
            [name = regG] qubit {$\ket{G}$} G;
            [name = regO] qubit {$\ket{0}$} O;
            [name = regw] qubit {$\ket{0}$} w;
            [name = rega] qubit {$\ket{0}$} a;
            [name = regb] qubit {$\ket{0}$} b;
            \path (regG.east)+(-\dx,0) node[anchor=west] {$\C^G$};
            \path (regO.east)+(-\dx,0) node[anchor=west] {$\C^2$};
            \path (regw.east)+(-\dx,0) node[anchor=west] {$\C^{2^n}$};
            \path (rega.east)+(-\dx,0) node[anchor=west] {$\C^2$};
            \path (regb.east)+(-\dx,0) node[anchor=west] {$\C^2$};

    		box {$O_g$} (G,O,w);
    		[value = $\ket{1}$, "$x \in A+s$" above] measure O;
            settype {qubit} O;

            [red, label = $1$] barrier (-);
            [name = U1] box {$U_1(w)$} a | w;
            \path (U1-p) node[above] {$w$};
    		box {$O_g$} (G,O,w);
    		box {$F$} G;

            [red, label = $2$] barrier (-);
    		box {$O_{\hf}$} (G,O,w);
            [value = $\ket{1}$, "$\phi \in \hA$" above] measure O;
    		settype {qubit} O;

    		align O,w;
            [name = U2] box {$U_2(w)$} b | w;
            \path (U2-p) node[above] {$w$};
    		box {$O_{\hf}$} (G,O,w);

            box {$F^{\dagger}$} G;

    		align G,a,b;
    		measure G;
    		measure a;
            measure b;

            output {$\ket{0}$} O;
            output {$\ket{0}$} w;
        \end{yquant} 
    \end{tikzpicture}
    \end{figure}
    Output \textnormal{FAIL} if either of the two post-selections on the second register fail to produce~$\ket{1}$.
    \item Measure the first, fourth and fifth register in the standard basis. If the measurements on the last two registers result in~$0$, output the value of the first register. Otherwise output \textnormal{FAIL}.
\end{enumerate}
\end{Algo}

\begin{Thm}\label{thm: second gen}
    For an $(r,R, \hr, \hR, \alpha, \halpha)$-bounded function $f: G\to \C$, the probability of finding the hidden shift with \cref{alg: approx2} is 	
	\begin{align}\label{eq: full probability}
		p(s) =&\left(\frac{\hr}{R}\right)^2\left|\halpha - \frac{1}{|G|^{3/2}}\sum_{\phi \in \smash\hA}\sum_{x \notin A+s} \phi(x)\overline{\phi(s)} \frac{g(x)}{\hf(\phi)}\right|^2.
	\end{align}
    The algorithm uses two calls to each of the oracles $O_g$ and $O_{\hf}$.
\end{Thm}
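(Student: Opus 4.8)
The plan is to adapt the proof of \cref{thm: first gen}, now also tracking the indicator register and handling the two post-selections on it. As there, I would cut the circuit at the second barrier: let $\ket{\Phi(s)}$ be the (unnormalized, post-selected) state obtained by running the circuit forwards up to that barrier, and let $\ket{\Psi(s)}$ be the state obtained by running the part after the barrier \emph{backwards} from the target $\ket{s,0,0,0,0}$. Since the final measurement is a projection onto a single standard basis vector and every post-selection is a projection, the joint probability that both indicator post-selections succeed, both ancilla measurements return $0$, and the first register reads $s$ equals $p(s)=\abs{\bk{\Psi(s)}{\Phi(s)}}^2$. I expect this to be the one bookkeeping point worth spelling out carefully, as there are now two rounds of post-selection rather than one; taking adjoints turns the second post-selection into a projection applied inside $\ket{\Psi(s)}$, and the identity for $U_2(w)\ct\ket0$ from \cref{eq: U2 second gen} is exactly what makes that backward pass computable.

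For $\ket{\Phi(s)}$ I would trace: the first $O_g$ writes $\delta_{x-s\in A}$ into the indicator and $g(x)$ into the workspace; post-selecting the indicator on $\ket1$ restricts the uniform superposition to $x\in A+s$, and on this set $\abs{g(x)}=\abs{f(x-s)}\le R$, so $U_1(g(x))$ really is a qubit unitary there; $U_1$ peels off the good amplitude $(g(x)/R)\ket0$; the second $O_g$ uncomputes indicator and workspace; and $F$ turns the sum over $x\in A+s$ into $\frac1{|G|}\sum_{\phi\in\hG}\ket{\phi,0,0}\bigl(\bigl(\frac1R\sum_{x\in A+s}\phi(x)g(x)\bigr)\ket0+(\cdots)\ket1\bigr)\ket0$. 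For $\ket{\Psi(s)}$: $F\ket{s}$ produces $|G|^{-1/2}\sum_\phi\phi(s)\ket\phi$; $O_{\hf}$ writes $\delta_{\phi\in\hA}$ and $\hf(\phi)$; $U_2\ct$ places $\hr/\overline{\hf(\phi)}$ on the $\ket0$ branch of its target qubit; projecting the indicator on $\ket1$ restricts to $\phi\in\hA$; and the last $O_{\hf}$ uncomputes the indicator and workspace.

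Taking the inner product, only $\phi\in\hA$ contribute and both ancilla qubits must be $\ket0$, which should leave
\[
\bk{\Psi(s)}{\Phi(s)}=\frac{\hr}{R}\cdot\frac1{|G|^{3/2}}\sum_{\phi\in\hA}\frac{\overline{\phi(s)}}{\hf(\phi)}\sum_{x\in A+s}\phi(x)g(x).
\]
I would then write $\sum_{x\in A+s}=\sum_{x\in G}-\sum_{x\notin A+s}$, evaluate the complete sum by \cref{lem:ghat} as $|G|^{1/2}\hg(\phi)=|G|^{1/2}\phi(s)\hf(\phi)$, use $\abs{\phi(s)}=1$ together with $\abs{\hA}=\halpha\abs{\hG}=\halpha\abs{G}$, and obtain $\bk{\Psi(s)}{\Phi(s)}=\frac{\hr}{R}\bigl(\halpha-\frac1{|G|^{3/2}}\sum_{\phi\in\hA}\sum_{x\notin A+s}\phi(x)\overline{\phi(s)}\frac{g(x)}{\hf(\phi)}\bigr)$; squaring gives \cref{eq: full probability}. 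The oracle count (two calls each to $O_g$ and $O_{\hf}$) is read off the circuit.

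The main obstacle I anticipate is the post-selection bookkeeping: pinning down precisely what ``the probability of finding the hidden shift'' means with two intermediate post-selections plus the final ancilla post-selection, and confirming it is still captured by the single overlap $\abs{\bk{\Psi(s)}{\Phi(s)}}^2$. Everything else is the computation from \cref{thm: first gen} with the sums over $G$ and $\hG$ replaced by sums over $A+s$ and $\hA$, plus the one algebraic step that splits the restricted sum into a full sum minus a tail.
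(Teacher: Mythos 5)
Your proposal is correct and follows essentially the same route as the paper's proof: cut the circuit at the barrier, compute the forward state $\ket{\Phi(s)}$ and the backward state $\ket{\Psi(s)}$ from the target $\ket{s,0,0,0,0}$, take the overlap, split $\sum_{x\in A+s}$ into $\sum_{x\in G}-\sum_{x\notin A+s}$, and apply \cref{lem:ghat}. The only cosmetic difference is bookkeeping of the post-selections: you keep the projected states subnormalized so that $p(s)=\abs{\bk{\Psi(s)}{\Phi(s)}}^2$ directly, whereas the paper normalizes the post-selected states and multiplies by the success probabilities $\alpha\halpha$ — the two conventions yield the identical prefactor $|G|^{-3/2}$ and the same final formula.
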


\begin{proof}
Time point~$1$ is reached with probability $\alpha$ and leads to state
$$\frac{1}{|A|^{1/2}}\sum_{x \in A+s} \ket{x,1,g(x),0,0}.$$
Next, at time point~$2$ the state changes to

\begin{equation}\label{eq: ket Phi}
    \ket{\Phi(s)} = \frac{1}{|A|^{1/2}|G|^{1/2}}
    \sum_{\phi \in \smash\hG} \sum_{x \in A+s}
    \phi(x) \ket{\phi,0,0}
    \left(
        \frac{g(x)}{R} \ket{0}
        + \sqrt{1-\left|\frac{g(x)}{R}\right|^2} \ket{1}
    \right)
    \ket{0}.
\end{equation}

Similar to the proof of \cref{thm: first gen}, let us now analyze the circuit backwards. At the end of the circuit the second and third register are always restored to $\ket{0}$ because both oracles are called twice, so the final state should be $\ket{s,0,0,0,0}$ for the algorithm to successfully find the hidden shift $s$. Starting with this state and running the circuit backwards, time point~$2$ is reached with probability $\halpha$ leading to state
\begin{equation}\label{eq: ket Psi}
    \ket{\Psi(s)}  = \frac{1}{|\hA|^{1/2}}\sum_{\phi \in \smash\hA} \phi(s) \ket{\phi,0,0,0}\left(\frac{\hr}{\overline{\hf(\phi)}}\ket{0} + \sqrt{1-\left|\frac{\hr}{\overline{\hf(\phi)}}\right|^2}\ket{1}\right).
\end{equation}

The total probability of finding the shift $s$ can now be computed as
\begin{align}
    p(s)
    &= \alpha\halpha\left|\bk{\Psi(s)}{\Phi(s)}\right|^2 \nonumber \\
    &= \alpha\halpha \left| \frac{1}{|\hA|^{1/2}|A|^{1/2}|G|^{1/2}} \sum_{\phi \in \smash\hA}\sum_{x \in A+s} \phi(x)\overline{\phi(s)}\frac{g(x)\hr}{\hf(\phi)R}\right|^2 \nonumber \\
    &= \left(\frac{\hr}{R}\right)^2 \left|\frac{1}{|G|^{3/2}} \sum_{\phi \in \smash\hA} \sum_{x \in A+s} \phi(x) \overline{\phi(s)} \frac{g(x)}{\hf(\phi)}\right|^2.
    \label{eq:p(s) double sum}
\end{align}
The second summation in \cref{eq:p(s) double sum} can be split into two parts:
\begin{align*}
    \frac{1}{|G|^{3/2}} \sum_{\phi \in \smash\hA} \sum_{x \in A+s} \phi(x) \overline{\phi(s)} \frac{g(x)}{\hf(\phi)}
    &= \frac{1}{|G|^{3/2}}\sum_{\phi \in \smash\hA} \left(\sum_{x\in G} - \sum_{x\notin A+s}\right)\left(\phi(x)\overline{\phi(s)} \frac{g(x)}{\hf(\phi)}\right)\\
    &= \frac{1}{|G|}\sum_{\phi \in \smash\hA}\left(\overline{\phi(s)}\frac{\hg(\phi)}{\hf(\phi)} - \frac{1}{|G|^{1/2}}\sum_{x\notin A+s} \phi(x)\overline{\phi(s)} \frac{g(x)}{\hf(\phi)} \right)\\
    &= \halpha - \frac{1}{|G|^{3/2}}\sum_{\phi \in \smash\hA}\sum_{x \notin A+s} \phi(x)\overline{\phi(s)} \frac{g(x)}{\hf(\phi)}.
\end{align*}
Plugging this back into \cref{eq:p(s) double sum} we get the desired expression.
\end{proof}

The expression in \cref{thm: second gen} reduces to that of \cref{thm: first gen} if we set $r=0$, $\hR=\infty$ and require $A=G$, $\hA=\hG$. Then $\halpha=1$ and the double summation in \cref{eq: full probability} vanishes, giving us the probability in \cref{eq:basic p(s)} from the first generalization.

\subsection{Example: characters}

As an application of \cref{thm: second gen}, let us calculate the success probability of \cref{alg: approx2} for two examples: \emph{primitive Dirichlet characters} and \emph{finite field characters}.

\paragraph{Primitive Dirichlet characters.}
Let $G = \Z/n\Z$ be the additive group of integers modulo $n$ and consider a multiplicative character $f: \Z/n\Z \to \C$. That is, $f$ is a group homomorphism on the multiplicative group of units $(\Z/n\Z)^*$, extended to $\Z/n\Z$ by defining it to be zero elsewhere. 
We call $f$ \emph{imprimitive} when there is a divisor $n_1\mid n$ and a multiplicative character $f_1: (\Z/n_1\Z)^*\to \C$ such that 
\[f(x) = f_1(x\bmod{n_1}) \text{ for all } x \in (\Z/n\Z)^*,\]
otherwise we call $f$ \emph{primitive}.

To solve the hidden shift problem for a primitive character $f$ we need to find the necessary parameters in \cref{def: r R hr hR bounded intro}. Setting $A = (\Z/n\Z)^*$ we have
\[|f(x)| = \begin{cases}
    1 &\text{ if $x \in A$,}\\
    0 &\text{ otherwise.}
\end{cases}\]
It holds that $|A| = \varphi(n)$, the Euler's totient function, so we take $\alpha = \varphi(n)/n$ and $r=R=1$. For the Fourier transform we make the identification
$$G \cong \hG,\: y \mapsto (\phi_y: x \mapsto e^{2\pi ixy/n}).$$
For any $y\in (\Z/n\Z)^*$, using $\phi_y(x) = \phi_1(xy)$, $Ay = A$, and the multiplicativity of $f$
we find that
\begin{align}
    \hat{f}(y)
    & = \frac{1}{|G|^{1/2}} \sum_{x \in A} f(x) \phi_y(x)\nonumber\\
    & = \frac{1}{|G|^{1/2}} \sum_{x' \in Ay}f(x'y^{-1})\phi_1(x')\nonumber\\
    & = \frac{1}{|G|^{1/2}} \overline{f(y)} \sum_{x' \in A} f(x') \phi_1(x')\nonumber\\
    & = \overline{f(y)} \hat{f}(1).\label{eq:f hat and f}
\end{align}
Thus $|\hat{f}(y)|=|\hat{f}(1)|=1$ whenever $\gcd(y,n)=1$. By Parseval's identity \eqref{eq:Parseval} it follows that $\hf(y)=0$ otherwise. We can set $\hA= (\Z/n\Z)^*$ with size $|\hA|=\varphi(n)$, making $\halpha = \varphi(n)/n$ and $\hr = \hR=1$. By \cref{thm: second gen}, the success probability of finding $s$ is then $$p(s)= \left(\frac{\hr}{R}\right)^2\left|\halpha - \frac{1}{|G|^{3/2}}\sum_{y \in \smash\hA}\sum_{x \notin A+s} \phi_y(x)\overline{\phi_y(s)} \frac{g(x)}{\hf(y)}\right|^2=\halpha^2 = \left(\frac{\varphi(n)}{n}\right)^2.$$
Note that the double sum vanishes because $g(x)=f(x-s)=0$ whenever $x\notin A+s$.

\paragraph{Finite field characters.}
Let $q=p^k$ be a prime power and $\F_q$ be the finite field with $q$ elements. Consider a multiplicative character $f: \F_q^*\to \C$ that is extended to $\F_q$ with $f(0)=0$. To apply \cref{alg: approx2}, we let $A = \F_q^*$ such that $f$ is zero outside $A$ and we can set $r=R=1$. 

The additive characters of $\F_q$ are given by $\phi_y: x\mapsto e^{2\pi i \Tr(xy)/p}$, where $\Tr: \F_q \to \F_p$ is the trace map given by $\Tr(x) = \sum_{j=0}^{k-1}x^{p^j}$. By the same argument as for Dirichlet characters we know that $\hf(\phi_y) = \overline{f(y)}\hf(1)$, meaning we can also take $\hA=\F_q^*$ with $\hr=\hR=1$ and $\hf$ being zero outside $\hA$. The probability of finding the hidden shift $s$ by \cref{alg: approx2} is thus given by $p(s) = \halpha^2 = (1-\frac{1}{q})^2$. This is the same probability as obtained in \cite{van_Dam_2006}.

\subsection{Symmetry of the algorithm}\label{subsec: symmetry}

The success probability in \cref{thm: second gen} is not symmetric in the parameters $r, R, \hr, \hR, \alpha, \halpha$. Here we explain why and give a mirrored version of \cref{alg: approx2}.

Whenever $g$ and $f$ are not completely bent functions, we introduced parameters $R$ and $\hr$ to separate the superposition into the good and bad part, see \cref{eq: phase explanation}. To get the amplitude as close to $\hg(\phi)/\hf(\phi) = \hat{\delta}_s(\phi) = \phi(s)$ as possible (and thereby learn $s$) we used the approximation 
\[\hat{\delta}_s(\phi) \approx \frac{\hg(\phi)}{R} \frac{\hr}{\hf(\phi)}.\]
We hereby bounded $g$ from above and $\hf$ from below. One can instead use the inverse of this approximation given by
\[\hat{\delta}_s(\phi)^{-1} \approx \frac{r}{\hg(\phi)}\frac{\hf(\phi)}{\hR},\]
where we need to bound $g$ and $\hf$ from the other side. To flip the roles of $\hf$ and $g$ and therefore to approximate $\phi(s)^{-1}= \overline{\phi(s)}$, we can run \cref{alg: approx2} with the following changes.

\begin{itemize}
    \item We replace $U_1$ from \cref{eq: U1 second gen} with $U_1 = \sum_w \proj{w} \x U_1(w)$,
    where
	\begin{equation*}
	U_1(w)\ket{0} = \begin{cases}\frac{r}{w}\ket{0} + \sqrt{1-|r/w|^2}\ket{1} &\text{ if } w\neq 0,\\
    \ket{0}&\text{ if } w=0.
    \end{cases}
	\end{equation*} 
    \item We replace $U_2$ from \cref{eq: U2 second gen} with $U_2 = \sum_w \proj{w} \x U_2(w)$
    where
	\begin{equation*}
	U_2(w)^{\dagger}\ket{0} = \frac{\bar{w}}{\hR}\ket{0} + \sqrt{1-|w/\hR|^2}\ket{1}.
	\end{equation*}
    \item We apply on the first register between the two calls of $O_g$ and also before the final measurement an extra operator $U_{-} \in \U{\C^G}$ acting via
    \[U_{-}\ket{x} = \ket{-x} \text{ for all } x\in G.\]
    The rest of the algorithm stays the same.
\end{itemize}
We can then redo the proof of \cref{thm: second gen} up to \cref{eq:p(s) double sum} by making the changes
\[\frac{g(x)}{R} \mapsto \frac{r}{g(-x)}, \quad \frac{\hr}{\hf(\phi)} \mapsto \frac{\hf(\phi)}{\hR}.\]
Afterwards we can use $\hA\subset \hG$ to split the sum and find that the the modified algorithm has success probability
\begin{equation*}
    p(s)
    =\left(\frac{r}{\hR}\right)^2\left|\alpha - \frac{1}{|G|^{3/2}}\sum_{\phi \notin \smash\hA}\sum_{-x \in A+s} \phi(x)\overline{\phi(-s)}\frac{\hf(\phi)}{g(-x)}\right|^2.
\end{equation*}

\begin{Rem}
    This symmetry is part of the reason we include all parameters $(r, R, \hr, \hR, \alpha, \halpha)$, even though not all of them appear in the success probability. Another (slightly hidden) dependence of parameters in \cref{eq: full probability} lies in the summations over $A+s$ and $\hA$, which are determined by $\alpha$ and $\halpha$ respectively. Lastly, all the parameters can be used to give an upper bound on the double summation in \cref{eq: full probability}.
\end{Rem}

\section{Multidimensional bent functions}\label{sec: multi}

An even greater generalization of bent functions was proposed by Poinsot \cite{poinsot2005multidimensional} which includes functions from a finite abelian group to any hermitian space. We restrict ourselves to functions with codomain $\C^d$ for some dimension $d\geq 1$.

\begin{Def}[Multidimensional Fourier transform] \label{Def:multidim Fourier}
    Let $f: G\to \C^d$ be a multidimensional complex function. Writing the function coordinate-wise as $f(x) = (f_0(x), \dots, f_{d-1}(x))$, the Fourier transform of $f$ is given by 
    \[
        \hf(\phi) = (\hf_0(\phi), \dots, \hf_{d-1}(\phi)) \text{ for all } \phi \in \hG,
    \]
    where each $\hf_i$ is the one-dimensional Fourier transform from \cref{def: Fourier tranform}.
\end{Def}

\begin{Def}[Multidimensional bent function]\label{def:multidim bent}
    A multidimensional complex function $f: G \to \C^d$ is called \textit{bent} if $\norm{f(x)} = 1$ for all $x\in G$ and $\norm{\hf(\phi)} = 1$ for all $\phi \in \hG$.
\end{Def}

\begin{Ex}\label{Ex: 2-dim bent function}
While a multidimensional bent function has unit norm everywhere, this may fail for its one-dimensional parts.
For example, writing $\omega = e^{2\pi i/3}$, this happens for the function $f = (f_0, f_1)$ where $f_i : \Z/3\Z \to \C$ are given by
\[\begin{array}{c|ccc}
    &0&1&2\\
    \hline
    f_0(x)&1&\frac{\omega+\omega^2}{2}&1\\
    f_1(x)&0&\frac{\omega-\omega^2}{2}&0
\end{array} \qquad
\begin{array}{c|ccc}
    &0&1&2\\
    \hline
    \hf_0(\phi)&\frac{\sqrt{3}}{2}&-\frac{\sqrt{3}\omega}{2}&-\frac{\sqrt{3}\omega^2}{2}\\
    \hf_1(\phi)&\frac{\omega-\omega^2}{2\sqrt{3}}&\frac{\omega^2-1}{2\sqrt{3}}&\frac{1-\omega}{2\sqrt{3}}
\end{array}\quad.\]
\end{Ex}
See \cref{apx:multidim bent} for more discussion and examples.

\subsection{Exact algorithm}

Consider a multidimensional bent function $f:G \to \C^d$ on a finite abelian group $G$. A function $g: G \to \C^d$ hides a shift $s\in G$ if $g(x) = f(x-s)$ for all $x\in G$.
We present an exact quantum algorithm, similar to \cref{algo1}, to solve the hidden shift problem, assuming that we have access to the quantum oracles $O_g$ and $O_{\hf}$ given by \cref{eq: oracle access g,eq: oracle access hf}.
One ingredient will be unitary operators $\tilde{O}_g \in \U{\C^G\otimes \C^d}$ and $\tilde{O}_{\hf}\in \U{\C^{\hG}\otimes \C^d}$ satisfying
\[
\tilde{O}_g \ket{x} \ket{0} = \ket{x} \sum_{i=0}^{d-1} g_i(x)\ket{i}, \qquad
\tilde{O}_{\hf} \ket{\phi} \ket{0} = \ket{\phi} \sum_{i=0}^{d-1} \hf_i(\phi)\ket{i}.
\]
One can view $\tilde{O}_g$ and $\tilde{O}_{\hf}\ct$ as multidimensional versions of the phase oracles defined in \cref{eq: phase oracles} and used in \cref{algo1}.

\begin{Lem}
    Let $f: G\to\C^d$ be a multidimensional complex function such that $\norm{f(x)} = 1$ for all $x \in G$. Given access to oracle $O_f \in \U{\C^G \otimes \C^{2^n}}$ implementing the transformation $\ket{x}\ket{b}\mapsto \ket{x}\ket{b \oplus f(x)}$ for all $x \in G$ and $b \in \set{0,1}^n$, one can implement an operator $\tilde{O}_f \in \U{\C^G \otimes \C^d}$ satisfying $\tilde{O}_f \ket{x} \ket{0} = \ket{x} \sum_{i=0}^{d-1} f_i(x)\ket{i}$ using two calls to $O_f$.
\end{Lem}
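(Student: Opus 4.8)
The plan is to construct $\tilde{O}_f$ as a composition of three operations on the space $\C^G \otimes \C^{2^n} \otimes \C^d$, using the extra $\C^{2^n}$ register as scratch space that is returned to $\ket{0}$ at the end. First I would apply $O_f$ to the state $\ket{x}\ket{0}\ket{0}$, which writes the $n$-bit encoding of the vector $f(x) \in \C^d$ into the second register, yielding $\ket{x}\ket{f(x)}\ket{0}$. The middle register now contains a classical description of the entire unit vector $f(x) = (f_0(x),\dots,f_{d-1}(x))$.

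Next I would apply a unitary $W$ acting on $\C^{2^n}\otimes\C^d$, controlled on the scratch register, defined by $W\ket{w}\ket{0} = \ket{w}\sum_{i=0}^{d-1} w_i\ket{i}$, where $w = (w_0,\dots,w_{d-1})$ is the vector encoded by $\ket{w}$. This is well-defined as a unitary on the relevant subspace precisely because $\norm{f(x)} = 1$: the target vector $\sum_i f_i(x)\ket{i}$ is a unit vector in $\C^d$, so $W$ sends the orthonormal family $\{\ket{w}\ket{0}\}_w$ to an orthonormal family, and can be extended to a full unitary on $\C^{2^n}\otimes\C^d$ (this is exactly the operator $S$-type construction assumed to be efficiently implementable, now applied coordinate-wise; one concretely builds it from controlled rotations that first create amplitude $\sqrt{\sum_{i\geq 1}|w_i|^2}$ on a splitting qubit and then recurse, or more simply via a sequence of two-level rotations realizing the map $\ket{0}\mapsto\sum_i w_i\ket{i}$ in $\C^d$). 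After this step the state is $\ket{x}\ket{f(x)}\sum_{i=0}^{d-1} f_i(x)\ket{i}$.

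Finally I would apply $O_f$ a second time to the first two registers. Since $O_f$ is self-inverse (the addition is modulo two) and acts as $\ket{x}\ket{b}\mapsto\ket{x}\ket{b\oplus f(x)}$, and crucially it acts trivially on the third register, this uncomputes the scratch register: $\ket{x}\ket{f(x)}\sum_i f_i(x)\ket{i} \mapsto \ket{x}\ket{0}\sum_i f_i(x)\ket{i}$. Discarding (or ignoring) the now-disentangled $\ket{0}$ scratch register gives exactly $\tilde{O}_f\ket{x}\ket{0} = \ket{x}\sum_{i=0}^{d-1} f_i(x)\ket{i}$, using two calls to $O_f$ as claimed. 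Linearity extends this from basis states $\ket{x}\ket{0}$ to arbitrary superpositions $\sum_x \alpha_x \ket{x}\ket{0}$, and the construction is efficient under the standing assumptions of \cref{sec:qcircuits}.

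The main obstacle is verifying that the middle step $W$ is genuinely a well-defined unitary on the full space $\C^{2^n}\otimes\C^d$ and that it can be implemented efficiently: the assignment $\ket{w}\ket{0}\mapsto\ket{w}\sum_i w_i\ket{i}$ only specifies $W$ on the subspace where the $\C^d$ register starts in $\ket{0}$, and it preserves norms there only because we have restricted to $w$ in the image of $f$, where $\norm{w}=1$. One must either enlarge to a block of the form used for $U_1$ (splitting off a "bad" component when $\norm{w}<1$) or simply note that for our application all relevant $w$ satisfy $\norm{w}=1$, so the partial isometry extends to a unitary. Everything else is bookkeeping: tracking which registers each $O_f$ touches and confirming the uncomputation, both of which are routine given that $O_f$ is self-inverse and acts as identity on $\C^d$.
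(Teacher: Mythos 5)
Your proposal is correct and matches the paper's proof in essence: call $O_f$ to load $\ket{f(x)}$ into the scratch register, apply a controlled state-preparation unitary (the paper realizes your $W$ by computing a classical description of a Givens-rotation circuit $S$ with $S\ket{0}=\sum_i f_i(x)\ket{i}$ in an auxiliary register, applying it by controlled operations, and uncomputing that description), then call $O_f$ again to restore the scratch register. Your remark that the norm-one condition is what makes the partial isometry extendable to a unitary is exactly the point the paper implicitly relies on.
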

\begin{proof}
We start from the state $\ket{x}\ket{0}\ket{0} \in \C^G \x \C^{2^n} \x \C^d$, where the $n$-qubit register in the middle stores $f(x)$ as an $n$-bit string. Recall that we assumed in \cref{sec:problem} to have a perfect encoding of all values $f(x) \in \C^d$ using $n$ qubits. We call $O_f$ to obtain the state $\ket{x}\ket{f(x)}\ket{0}$. In a new auxiliary register, we then compute a classical description of a unitary operator $S\in \U{\C^d}$, expressed as a sequence of Givens rotations, satisfying $S\ket{0}=\sum_{i=0}^{d-1}f_i(x)\ket{i}$.
Through a sequence of controlled operations, $S$ is applied to the quantum state of the third register, after which we uncompute the data describing $S$ in the auxiliary register.
This results in the state $\ket{x}\ket{f(x)}\sum_{i=0}^{d-1}f_i(x)\ket{i}$.
Finally, we call $O_f$ again to uncompute the second register.
\end{proof}

\begin{Algo}~\label{alg: exact multidim}
    \begin{enumerate}
        \item Prepare the uniform superposition $\ket{G}\ket{0} =\frac{1}{|G|^{1/2}}\sum_{x\in G} \ket{x}\ket{0}$.
        \item Apply the operator $(F^{\dagger}\otimes I)\tilde{O}_{\hf}\ct(F\otimes I)\tilde{O}_g$ to transform the state as 
        \begin{align*}
        \frac{1}{|G|^{1/2}}\sum_{x\in G}\ket{x}\ket{0} &\xmapsto{\tilde{O}_g} \frac{1}{|G|^{1/2}}\sum_{x\in G}\ket{x}\otimes \left(\sum_{i=0}^{d-1} g_i(x)\ket{i}\right)\\
        &\xmapsto[]{F\otimes I} \frac{1}{|G|^{1/2}}\sum_{\phi\in \hG}\ket{\phi} \otimes \phi(s)\left(\sum_{i=0}^{d-1} \hf_i(x)\ket{i}\right)\\
        &\xmapsto{\tilde{O}_{\hf}\ct} \frac{1}{|G|^{1/2}}\sum_{\phi\in \hG}\ket{\phi} \otimes \phi(s) \ket{0}\xmapsto[]{F^{\dagger}\otimes I} \ket{s}\ket{0}.
        \end{align*}
        \item Measure the first register in the standard basis.
    \end{enumerate}
\end{Algo}

\begin{Thm}
    For a multidimensional bent function $f: G\to \C^d$, \cref{alg: exact multidim} determines the hidden shift $s$ with certainty. 
\end{Thm}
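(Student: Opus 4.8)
The algorithm in \cref{alg: exact multidim} already displays the intended evolution of the state, so the plan is essentially to justify each arrow in that chain. The only ingredients needed are the multidimensional analogue of \cref{lem:shift-bent} (a shift of a bent function is bent), \cref{lem:ghat} applied coordinate by coordinate, and the orthogonality relations for the characters of $G$. First I would check that the operators involved are well-defined: since $g(x) = f(x-s)$ we have $\norm{g(x)} = \norm{f(x-s)} = 1$ for every $x\in G$, so the preceding lemma indeed produces a unitary $\tilde{O}_g$, and likewise $\norm{\hf(\phi)}=1$ for all $\phi\in\hG$ makes $\tilde{O}_{\hf}$ well-defined.

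Applying $\tilde{O}_g$ to the uniform superposition gives $\frac{1}{|G|^{1/2}}\sum_{x\in G}\ket{x}\x\sum_{i=0}^{d-1}g_i(x)\ket{i}$. Since $F\x I$ acts only on the first register, linearity shows that the $\phi$-block of the output equals $\frac{1}{|G|^{1/2}}\sum_{x\in G}\phi(x)\sum_i g_i(x)\ket{i} = \sum_i \hg_i(\phi)\ket{i}$, where $\hg_i$ is the one-dimensional Fourier transform of $g_i$; by \cref{lem:ghat} applied to each coordinate, $\hg_i(\phi) = \phi(s)\hf_i(\phi)$, so this block is $\phi(s)\sum_i\hf_i(\phi)\ket{i}$. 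The next step is the only place the hypothesis $\norm{\hf(\phi)}=1$ is used, and it is the point that deserves the most care: because $\sum_i\hf_i(\phi)\ket{i}$ has unit norm, it is exactly the image of $\ket{0}$ under the unitary that $\tilde{O}_{\hf}$ applies on the $\phi$-fibre, so $\tilde{O}_{\hf}\ct$ sends $\ket{\phi}\sum_i\hf_i(\phi)\ket{i}$ back to $\ket{\phi}\ket{0}$. The modulus-one scalar $\phi(s)$ passes through linearly, leaving the state $\frac{1}{|G|^{1/2}}\sum_{\phi\in\hG}\phi(s)\ket{\phi}\ket{0}$.

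Finally $F\ct\x I$ turns this into $\frac{1}{|G|}\sum_{x\in G}\bigl(\sum_{\phi\in\hG}\phi(s)\overline{\phi(x)}\bigr)\ket{x}\ket{0}$, and by orthogonality of characters $\sum_{\phi\in\hG}\phi(s)\overline{\phi(x)} = \sum_{\phi\in\hG}\phi(s-x)$ equals $|G|$ when $x=s$ and $0$ otherwise, so the state collapses to $\ket{s}\ket{0}$. Measuring the first register therefore returns $s$ with probability one, and the whole procedure uses two calls to each of $O_g$ and $O_{\hf}$ by the preceding lemma. Structurally this is the same argument as the proof of \cref{thm:bent}; the only genuinely new wrinkle is the vector-valued uncomputation in the third register, which is exact precisely because $f$ is bent (so $\norm{\hf(\phi)}=1$) rather than merely bent-like — if $\norm{\hf(\phi)}$ could differ from $1$, the vector $\sum_i\hf_i(\phi)\ket{i}$ would be sub-normalized and $\tilde{O}_{\hf}\ct$ would not return the ancilla cleanly to $\ket{0}$, which is exactly what forces the approximate treatment in the next subsection.
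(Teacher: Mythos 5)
Your proposal is correct and takes essentially the same route as the paper, which presents the proof implicitly as the state-evolution chain in step~2 of \cref{alg: exact multidim}; you simply justify each arrow of that same chain (coordinate-wise use of \cref{lem:ghat}, exact uncomputation by $\tilde{O}_{\hf}\ct$ thanks to $\norm{\hf(\phi)}=1$, and character orthogonality for the final $F\ct$). No gaps.
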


\begin{Rem}
    The above algorithm also covers the one-dimensional case from \cref{algo1}. By taking $d=1$ the values $g(x)$ and $\hf(\phi)$ lie on the unit circle for all $x\in G, \phi\in \hG$.
    Then the second register can be discarded, so $\tilde{O}_g$ and $\tilde{O}_{\hf}\ct$ reduce to the phase oracles from~\cref{eq: phase oracles}. The algorithm then works the same way and returns the hidden shift with certainty.
\end{Rem}

\subsection{Approximate algorithm}\label{subsec: multi gen}

We have generalized the one-dimensional \cref{algo1} for bent functions in two different ways.
\begin{enumerate}
    \item \Cref{alg: approx2} allows for function values outside the unit circle. This was done by re-normalizing the function and preparing an ancillary qubit in a superposition of a `good' part and a `bad' part:
    \[\underbrace{\vphantom{\sqrt{1-\left|\frac{g(x)}{R}\right|^2}\ket{1}}\frac{g(x)}{R} \ket{0}}_{\text{good}} + \underbrace{\sqrt{1-\left|\frac{g(x)}{R}\right|^2}\ket{1}}_{\text{bad}}.\]
    \item \Cref{alg: exact multidim} allows for multidimensional bent functions. Where \cref{algo1} used $g(x)$ and $\hf(\phi)$ as global phases, i.e.\ one-dimensional rotations, we now see the vector valued functions as rotations in $\C^d$.
\end{enumerate}
We will now combine these two generalizations and describe an approximate algorithm for bent-like functions in higher dimensions. The approximate algorithm is therefore very similar to \cref{alg: approx2}, but with an added rotation to reduce to the one-dimensional case. We assume the functions to satisfy \cref{def: multidimensional all bounded}. 

We work with the state space $\C^{G}\otimes \C^{2} \otimes \C^{2^n} \otimes \C^d \otimes \C^2$, where the third register acts as extra workspace to store the vector of function values and will be returned to its original state by the algorithm.
We assume that we have access to the following quantum operators.
\begin{itemize}
    \item Oracle access to the functions $g$ and $\hf$ via \cref{eq: oracle access g,eq: oracle access hf}.
    \item For ease of notation we write $\ket{w_0, \dots, w_{d-1}} = \ket{w}\in \C^{2^n}$. Note the different dimensions of $w\in \C^{d}$ and $\ket{w}\in \C^{2^n}$, as we assumed in \cref{sec:problem} there is a perfect encoding of the $d$-dimensional values $g(x)$ in $n$ qubits. We assume that we have three unitary operators
            \[U_1, V_{\rot} \in \U{\C^{2^n} \otimes \C^d}, U_2 \in \U{\C^{2^n} \otimes \C^{2}}\]
        given by
        \begin{align}
            U_1 &= \sum_w \proj{w} \x U_1(w), \text{ where }
            U_1(w)\ket{0} = \sum_{i=0}^{d-1} \frac{w_i}{R}\ket{i} + \sqrt{1-\frac{\|w\|^2}{R^2}}\ket{d}, \label{eq:U1}\\
            V_{\rot} &= \sum_{w} \ket{w}\bra{w}\otimes V_{\rot}(w), \text{ where }
            V_{\rot}(w)^{\dagger}\ket{0} = \begin{cases}
                \frac{1}{\|w\|}\sum_{i=0}^{d-1}w_i\ket{i}&\text{ if } w\neq 0,\\
                \ket{0} &\text{ if } w=0,
            \end{cases}
            \label{eq:Vrot}\\
            U_2 &= \sum_w \proj{w} \x U_2(w), \text{ where } 
            U_2(w)^{\dagger}\ket{0} = \begin{cases}\frac{\hr}{\|w\|}\ket{0}+\sqrt{1-\frac{\hr^2}{\|w\|^2}}\ket{d} &\text{ if } w\neq 0,\\
            \ket{0} &\text{ if } w = 0.
            \end{cases}
            \label{eq:U2}
        \end{align}
    \item The Fourier transform on the register $\C^{G}$.
\end{itemize}
These operators have many similarities to those in \cref{subsec: second gen}.
The main difference is that the operator from \cref{eq: U1 second gen} is now split into $U_1$ and $V_{\rot}$ to account for the multiple dimensions.
First we will use $U_1$, then apply the Fourier transform in each coordinate and then rotate backwards to put all the information in the first coordinate.
This inverse rotation reduces it to a one-dimensional problem.
The algorithm works on the space $\C^G\otimes \C^2\otimes \C^{2^n}\otimes \C^d\otimes \C^2$ as follows.
\begin{Algo}~\label{alg: generalised bent multid}
\begin{enumerate}
	\item Prepare the uniform superposition $\ket{G,0,0,0,0}:=\frac{1}{|G|^{1/2}} \sum_{x\in G}      \ket{x,0,0,0,0}$.
    \item Perform the following quantum circuit.
    \begin{figure}[H]
    \centering
    \begin{tikzpicture}
        \def\dx{1.5cm} 
    	\begin{yquant}
            [name = regG] qubit {$\ket{G}$} G;
            [name = regO] qubit {$\ket{0}$} O;
            [name = regw] qubit {$\ket{0}$} w;
            [name = rega] qubit {$\ket{0}$} a;
            [name = regb] qubit {$\ket{0}$} b;
            \path (regG.east)+(-\dx,0) node[anchor=west] {$\C^G$};
            \path (regO.east)+(-\dx,0) node[anchor=west] {$\C^2$};
            \path (regw.east)+(-\dx,0) node[anchor=west] {$\C^{2^n}$};
            \path (rega.east)+(-\dx,0) node[anchor=west] {$\C^d$};
            \path (regb.east)+(-\dx,0) node[anchor=west] {$\C^2$};

    		box {$O_g$} (G,O,w);
    		[value = $\ket{1}$] measure O;
    		settype {qubit} O;
            align O, w;

            [name = U1] box {$U_1(w)$} a | w;
            \path (U1-p) node[above] {$w$};
            box {$O_g$} (G,O,w);
            box {$F$} (G);

            [red, label=$\ket{\Phi(s)}$] barrier (G,O,w,a,b);
    		box {$O_{\hf}$} (G,O,w);
    		[value = $\ket{1}$] measure O;
    		settype {qubit} O;
            align O, w;

    		box {$V_{\rot}$} (w,a);
            [name = U2] box {$U_2(w)$} b | w;
            \path (U2-p) node[above] {$w$};
            box {$O_{\hf}$} (G,O,w);
            box {$F^{\dagger}$} (G);

    		align G,a,b;
    		measure G;
    		measure a;
            measure b;
    	\end{yquant} 
    \end{tikzpicture}
    \label{fig: approx multidim}
    \end{figure}
    Output \textnormal{FAIL} if either of the two intermediate measurements that post-select the second register to~$\ket{1}$ fail.
    \item Measure the first, third and fourth register in the standard basis. If the latter two measure $0$, output the value of the first register. Otherwise output \textnormal{FAIL}.
\end{enumerate}
\end{Algo}
Note the similarities with \cref{alg: approx2}.

\begin{Thm}\label{thm: multidim gen}
    For an $(r, R, \hr, \hR, \alpha, \halpha)$-bounded function $f: G \to \C^d$, the probability of finding the hidden shift $s$ by \cref{alg: generalised bent multid} is 
    \[
        p(s) = \left(\frac{\hr}{R}\right)^2\left|\halpha- \frac{1}{|G|^{3/2}}\sum_{\phi \in \hA}\sum_{x \notin A+s} \phi(x)\sum_{i=0}^{d-1}\frac{g_i(x)\overline{\hf_i(\phi)}}{\|\hf(\phi)\|}\right|^2,
    \]
    and it uses $2$ calls to both the oracles $O_g$ and $O_{\hf}$.
\end{Thm}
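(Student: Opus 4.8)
The plan is to mirror the proof of \cref{thm: second gen} almost verbatim, using the same forward/backward analysis around the barrier $\ket{\Phi(s)}$, and only accounting for the extra register $\C^d$ and the extra rotation $V_{\rot}$. First I would run the circuit forward from $\frac{1}{|G|^{1/2}}\sum_{x\in G}\ket{x,0,0,0,0}$: the first $O_g$ loads the vector $g(x)$ into the $\C^{2^n}$ workspace, the post-selection of the indicator register onto $\ket{1}$ restricts the superposition to $x\in A+s$ (succeeding with probability $\alpha$ and producing $\frac{1}{|A|^{1/2}}\sum_{x\in A+s}\ket{x,1,g(x),0,0}$), then $U_1$ writes the state $\sum_{i=0}^{d-1}\frac{g_i(x)}{R}\ket{i}+\sqrt{1-\|g(x)\|^2/R^2}\ket{d}$ into the $\C^d$ register controlled on $w=g(x)$, the second $O_g$ uncomputes the workspace, and the Fourier transform $F$ on $\C^G$ turns $\sum_{x\in A+s}\phi(x)$ into the appropriate amplitudes. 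This gives
\[
\ket{\Phi(s)}=\frac{1}{|A|^{1/2}|G|^{1/2}}\sum_{\phi\in\hG}\sum_{x\in A+s}\phi(x)\ket{\phi,0,0}\left(\sum_{i=0}^{d-1}\frac{g_i(x)}{R}\ket{i}+\sqrt{1-\frac{\|g(x)\|^2}{R^2}}\ket{d}\right)\ket{0}.
\]

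Next I would run the circuit backward from the target state $\ket{s,0,0,0,0}$, exactly as in \cref{thm: second gen}, to obtain $\ket{\Psi(s)}$ at the same barrier. Going backward, $F^\dagger$ becomes $F$ sending $\ket{s}$ to $\frac{1}{|G|^{1/2}}\sum_\phi\phi(s)\ket\phi$, the final $O_{\hf}$ loads $\hf(\phi)$ into the workspace, $U_2$ applied in reverse produces $\frac{\hr}{\|\hf(\phi)\|}\ket0+\sqrt{1-\hr^2/\|\hf(\phi)\|^2}\ket{d}$ on the last qubit, and then $V_{\rot}$ applied in reverse acts on the $\C^{2^n}\otimes\C^d$ registers sending $\ket{w}\ket0$ (with $w=\hf(\phi)$) to $\ket{w}\cdot\frac{1}{\|\hf(\phi)\|}\sum_{i=0}^{d-1}\overline{\hf_i(\phi)}\ket{i}$ — here I must be careful whether $V_{\rot}(w)^\dagger\ket0$ carries $\hf_i(\phi)$ or its conjugate, and since we read off $\bra0 V_{\rot}(w)$ when pairing against $\ket{\Phi(s)}$ the relevant factor is $\overline{\hf_i(\phi)}/\|\hf(\phi)\|$. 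Finally the backward $O_{\hf}$ clears the workspace and the post-selection contributes the factor $\halpha$ and restricts to $\phi\in\hA$, yielding
\[
\ket{\Psi(s)}=\frac{1}{|\hA|^{1/2}}\sum_{\phi\in\hA}\phi(s)\ket{\phi,0,0}\left(\frac{1}{\|\hf(\phi)\|}\sum_{i=0}^{d-1}\overline{\hf_i(\phi)}\ket{i}\left(\tfrac{\hr}{\|\hf(\phi)\|}\right)+\sqrt{1-\tfrac{\hr^2}{\|\hf(\phi)\|^2}}\,(\text{stuff})\,\ket{d}\right)\ket{0},
\]
and then $p(s)=\alpha\halpha\,|\braket{\Psi(s)}{\Phi(s)}|^2$.

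The inner product only picks up the $\ket{i}$-components with $0\le i\le d-1$ (the $\ket d$ parts are orthogonal by construction, which is exactly why the $U_2$ and $V_{\rot}$ ancilla must measure $0$). Pairing the two states gives
\[
\braket{\Psi(s)}{\Phi(s)}=\frac{1}{|\hA|^{1/2}|A|^{1/2}|G|^{1/2}}\sum_{\phi\in\hA}\sum_{x\in A+s}\phi(x)\overline{\phi(s)}\,\frac{\hr}{R}\sum_{i=0}^{d-1}\frac{g_i(x)\overline{\hf_i(\phi)}}{\|\hf(\phi)\|^2},
\]
and multiplying $|\cdot|^2$ by $\alpha\halpha=\tfrac{|A|}{|G|}\tfrac{|\hA|}{|G|}$ cancels the $|A|$ and $|\hA|$ to leave $(\hr/R)^2$ times $\big|\tfrac{1}{|G|^{3/2}}\sum_{\phi\in\hA}\sum_{x\in A+s}\phi(x)\overline{\phi(s)}\sum_i g_i(x)\overline{\hf_i(\phi)}/\|\hf(\phi)\|^2\big|^2$. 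Then, exactly as in \cref{thm: second gen}, I split $\sum_{x\in A+s}=\sum_{x\in G}-\sum_{x\notin A+s}$; the full sum over $x\in G$ evaluates the $i$-th term via \cref{lem:ghat} and \cref{Def:multidim Fourier} to $\overline{\phi(s)}\,\hg_i(\phi)=\overline{\phi(s)}\phi(s)\hf_i(\phi)=\hf_i(\phi)$, so $\sum_i \hf_i(\phi)\overline{\hf_i(\phi)}/\|\hf(\phi)\|^2=1$, and $\tfrac{1}{|G|}\sum_{\phi\in\hA}1=\halpha$, recovering the claimed formula. The query count is $2$ to $O_g$ (two calls surrounding $U_1$) and $2$ to $O_{\hf}$ (two calls surrounding $V_{\rot}$ and $U_2$ in the backward analysis). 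The main obstacle is bookkeeping the conjugations: getting the $\overline{\hf_i(\phi)}$ (not $\hf_i(\phi)$) in $\ket{\Psi(s)}$ from the $V_{\rot}^\dagger$ and $U_2^\dagger$ definitions right, and confirming that the factor in the denominator is genuinely $\|\hf(\phi)\|^2$ (one power from the $V_{\rot}$ normalization, one from the $U_2$ normalization) so that the stated theorem — which as written has $\|\hf(\phi)\|$ rather than $\|\hf(\phi)\|^2$ in the displayed formula, though the summary in \cref{subsec: our results} has $\|\hf(\phi)\|^2$ — comes out consistent; I would double-check this normalization against the one-dimensional reduction $\overline{\hf(\phi)}/\|\hf(\phi)\|^2=1/\hf(\phi)$, which forces the square.
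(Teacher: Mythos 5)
Your proposal follows essentially the same route as the paper's proof: forward and backward evolution to the barrier, $p(s)=\alpha\halpha\,\abs{\bk{\Psi(s)}{\Phi(s)}}^2$, then splitting $\sum_{x\in A+s}=\sum_{x\in G}-\sum_{x\notin A+s}$ and applying \cref{lem:ghat} coordinate-wise. Your conclusion that the denominator must be $\|\hf(\phi)\|^2$ (as in \cref{subsec: our results}) is exactly what the paper's own computation produces -- the single power in the displayed theorem statement is a typo -- and apart from a harmless conjugation slip in your displayed $\ket{\Psi(s)}$ (the ket carries $\hf_i(\phi)/\|\hf(\phi)\|$, the conjugate arising only when forming the inner product, which you resolve correctly), everything matches the paper's argument.
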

\begin{proof}
    The proof is very similar to the one of \cref{thm: second gen}. For ease of notation we denote 
    \[\ket{g_0(1), \dots, g_{d-1}} = \ket{g(x)}, \text{ and } \ket{\hf_0(\phi), \dots, \hf_{d-1}(\phi)} = \ket{\hf(\phi)}.\]
    To correctly identify the hidden shift $s$, we would like to observe the state $\ket{s,0,0,0,0}$ at the end. As we did for the proofs of \cref{thm: first gen,thm: second gen}, we define $\ket{\Phi(s)}$ to be the state at the marked time point (assuming the intermediate measurement do not fail). Let $\ket{\Psi(s)}$ be the state at the same point obtained by running the circuit backwards starting with the target state $\ket{s,0,0,0,0}$ (also assuming the intermediate measurement does not fail). As this time point is reached with probability $\alpha$ and $\halpha$ from the left and right respectively, the total probability is given by 
    \[p(s) = \alpha\halpha \left|\bk{\Psi(s)}{\Phi(s)}\right|^2.\]
    Similarly to \cref{eq: ket Phi} it holds that
    \[\ket{\Phi(s)} = \frac{1}{|A|^{1/2}|G|^{1/2}}
    \sum_{\phi \in \smash\hG} \sum_{x \in A+s}
    \phi(x) \ket{\phi,0,0}
    \left(\sum_{i=0}^{d-1}\frac{g_i(x)}{R} \ket{i} + \sqrt{1-\left|\frac{\|g(x)\|}{R}\right|^2} \ket{d}
    \right)
    \ket{0}.\]
    Starting with the target state $\ket{s,0,0,0,0}$ and reading from the right, we obtain a state similar to \cref{eq: ket Psi}, only with an extra $V_{\rot}^{\dagger}$ applied to the fourth register (conditional on $\hf(\phi)\neq0$). Note that in \cref{thm: second gen} this register is untouched by the right side of the circuit. The result is 
    \[\ket{\Psi(s)} = \frac{1}{|\hA|^{1/2}} \sum_{\phi \in \hA} \phi(s)\ket{\phi} \ket{0,0}\left(\sum_{i=0}^{d-1} \frac{\hf_i(\phi)}{\|\hf(\phi)\|}\ket{i}\right)\left(\frac{\hr}{\|\hf(\phi)\|)}\ket{0} + \sqrt{1 - \left|\frac{\hr}{\|\hf\|}\right|^2}\ket{1}\right).\]
    The total probability is thus given by 
    \begin{align*}
        p(s)&= \alpha\halpha \left|\bk{\Psi(s)}{\Phi(s)}\right|^2\\
        &= \alpha\halpha \left| \frac{1}{|\hA|^{1/2}|A|^{1/2}|G|^{1/2}} \sum_{\phi \in \smash\hA}\sum_{x \in A+} \left(\phi(x)\overline{\phi(s)} \sum_{i=0}^{d-1} \frac{g_i(x)\overline{\hf_i(\phi)}}{\|\hf(\phi)\|^2}\right)\right|^2\\
        &= \left(\frac{\hr}{R}\right)^2\left|\frac{1}{|G|^{3/2}}\sum_{\phi \in \hA}\sum_{x\in G}\left( \phi(x)\overline{\phi(s)}\sum_{i=0}^{d-1}\frac{g_i(x)\overline{\hf_i(\phi)}}{\|\hf(\phi)\|^2}\right) -\right.\\
        &\hspace{6cm} \left. \frac{1}{|G|^{3/2}}\sum_{\phi \in \smash\hA}\sum_{x \notin A+s} \left( \phi(x)\overline{\phi(s)}\sum_{i=0}^{d-1}\frac{g_i(x)\overline{\hf_i(\phi)}}{\|\hf(\phi)\|^2}\right)\right|^2 \\
        &= \left(\frac{\hr}{R}\right)^2\left|\frac{1}{|G|}\sum_{\phi \in \hA}\left(\sum_{i=0}^{d-1}\frac{\overline{\phi(s)}\hg_i(x)\overline{\hf_i(\phi)}}{\|\hf(\phi)\|^2}\right) - \frac{1}{|G|^{3/2}}\sum_{\phi \in \smash\hA}\sum_{x \notin A+s} \left( \overline{\phi(s)}\sum_{i=0}^{d-1}\frac{g_i(x)\overline{\hf_i(\phi)}}{\|\hf(\phi)\|^2}\right)\right|^2  \\
        &= \left(\frac{\hr}{R}\right)^2\left|\halpha - \frac{1}{|G|^{3/2}}\sum_{\phi \in \smash\hA}\sum_{x \notin A+s} \left( \overline{\phi(s)}\sum_{i=0}^{d-1}\frac{g_i(x)\overline{\hf_i(\phi)}}{\|\hf(\phi)\|^2}\right)\right|^2.\qedhere
        \end{align*}
\end{proof}

\section{One-register approach}\label{sec: one register}

Recall from \cref{subsec: first gen} that at the end of \cref{alg: approx1} the measurement on both ancilla registers had to produce $0$ for the algorithm to succeed.
In this section, we consider a variation of this algorithm where we reduce the number of ancilla registers from two to one and include additional degrees of freedom within both oracles.
Since both oracles now use the same ancilla register, this yields a richer interference pattern and also slightly reduces quantum memory.
Here we investigate how this modification affects the overall success probability of the algorithm.

Recall that \cref{alg: approx1} from \cref{subsec: first gen} prepared states
\[\frac{g(x)}{R}\ket{0} + \sqrt{1-\left(\frac{g(x)}{R}\right)^2} \ket{1} \quad  \text{and }\quad  \frac{\hr}{\hf(\phi)}\ket{0} +  \sqrt{1-\left(\frac{\hr}{\hf(\phi)}\right)^2} \ket{1}\]
on two different registers, and for both states we want as much amplitude as possible in the first basis state $\ket{0}$. However, we can try to improve the success probability of \cref{alg: approx1} by also making use of what was considered in \cref{eq: phase explanation} as error or the `bad part'. The modified algorithm works by letting the unitary operators $U_1$ and $U_2$ from \cref{eq: U1 second gen,eq: U2 second gen} act on the same ancilla register, see \cref{alg:approx}. Previously these operators did not interact directly, so some of their degrees of freedom were not relevant, whereas now they could potentially be used to our advantage. We study the success probability of \cref{alg:approx} in general and determine necessary and sufficient conditions for it to be equal to~1.

\subsection{Oracles}\label{subsec:oracles}

Assume that $f$ is an $(R,\hr)$-bounded function, and that we have quantum access to the oracles of $g$ and $\hf$ as in \cref{eq: oracle access g,eq: oracle access hf}.
Combining $O_g$ with the unitary $U_1$ given in \cref{eq: U1 second gen}, we obtain a unitary oracle $U_g \in \U{\C^G\otimes \C^2}$ that is given by
\begin{equation}\label{eq:oracle U_g}
    U_g
    = \sum_{x \in G} \proj{x} \x
    \mx{b_0(x) & * \\ b_1(x) & *}, \qquad b_0(x) = \frac{g(x)}{R}, \:b_1(x) = e^{i\theta(x)} \sqrt{1-\abs*{\frac{g(x)}{R}}^2},
\end{equation}
where we have introduced a phase function $\theta: G \to [0.2\pi)$ to take advantage of the additional degree of freedom.
The remaining entries indicated by $*$ in \cref{eq:oracle U_g} can be chosen arbitrarily, so long as each $2 \times 2$ block is unitary.
The actual choice (which involves an additional phase degree of freedom) will not matter to us.
A similar construction can be done using $O_{\hf}$ to obtain a unitary $U_{1/\hf} \in \U{\C^{\hG} \otimes \C^2}$ given by
\begin{equation}\label{eq: oracle U_f}
    U_{1/\hat{f}}
    = \sum_{\phi \in \Ghat} \proj{\phi} \x
    \mx{a_0(\phi) & a_1(\phi) \\ * & *}, \qquad a_0(\phi) = \frac{\hr}{\hat{f}(\phi)},\: a_1(\phi) = e^{-i\chi(\phi)} \sqrt{1-\abs*{\frac{\hr}{\hat{f}(\phi)}}^2},
\end{equation}
where we also introduced a phase function $\phi: \hG\to [0, 2\pi)$. For convenience, we let
\begin{align*}
    \bra{a(\phi)} &= \mx{a_0(\phi) & a_1(\phi)}, &
    \ket{b(x)} &= \mx{b_0(x) \\ b_1(x)}.
\end{align*}
Then the action of the oracles from \cref{eq:oracle U_g,eq: oracle U_f} is captured by
\begin{align}
    \bra{\phi} \bra{0} \cdot U_{1/\hat{f}}
    &= \bra{\phi} \bra{a(\phi)}
     = \bra{\phi} \of*{
            \frac{\hr}{\hat{f}(\phi)} \bra{0}
            + e^{-i\chi(\phi)} \sqrt{1-\abs*{\frac{\hr}{\hat{f}(\phi)}}^2} \bra{1}
        }
    & \text{for all } \phi &\in \Ghat,
    \label{eq:backwards action} \\
    U_g \cdot \ket{x} \ket{0}
    &= \ket{x} \ket{b(x)}
     = \ket{x} \of*{
            \frac{g(x)}{R} \ket{0}
            + e^{i\theta(x)} \sqrt{1-\abs*{\frac{g(x)}{R}}^2} \ket{1}
        }
    & \text{for all } x &\in G.
    \label{eq:forward action}
\end{align}

\subsection{Approximate algorithm}

The algorithm for solving the hidden shift problem for an $(R,\hr)$-bounded function $f$ uses the oracles $U_{1/\hat{f}}$ and $U_g$ defined above. The algorithm works on the space $\CG \x \C^2$.

\begin{Algo}~\label{alg:approx}
	\begin{enumerate}
		\item Prepare the state $\ket{G} \ket{0} = \frac{1}{|G|^{1/2}}\sum_{x\in G} \ket{x} \ket{0}$.
		\item Apply the operator $(F^\dagger \x I) U_{1/\hat{f}} (F \x I) U_g$:
        \begin{center}
        \begin{tikzpicture}
            \def\dx{1.5cm} 
        	\begin{yquant}
                [name = regG] qubit {$\ket{G}$} G;
                [name = rega] qubit {$\ket{0}$} a;
                \path (regG.east)+(-\dx,0) node[anchor=west] {$\C^G$};
                \path (regO.east)+(-\dx,0) node[anchor=west] {$\C^2$};
                
                box {$U_g$} (G,a);
                box {$F$} G;
                box {$U_{1/\hf}$} (G,a);
                box {$F^{\dagger}$} G;
    
                align G, a;
                measure G;
                measure a;
            \end{yquant} 
        \end{tikzpicture}
        \end{center}
        \item Measure both registers in the standard basis.
        If the second register contains $0$, output the value of the first register.  Otherwise, output \textnormal{FAIL}.\label{it:measure}
	\end{enumerate}
\end{Algo}

\begin{Thm}\label{Thm:approximate_algorithm}
    For an $(R, \hr)$-bounded function $f: G\to \C$, the probability of finding the hidden shift $s$ by \cref{alg:approx} is
    \begin{equation}
        p(s)
        = \abs*{
            \frac{\hr}{R}
          + \frac{1}{|G|^{3/2}}
            \sum_{x \in G}
            \sum_{\phi \in \Ghat}
            \phi(x)
            e^{i\theta(x+s)}
            \sqrt{1-\abs*{\frac{f(x)}{R}}^2}
            e^{-i\chi(\phi)}
            \sqrt{1-\abs*{\frac{\hr}{\hat{f}(\phi)}}^2}
        }^2.
        \label{proba}
    \end{equation}
\end{Thm}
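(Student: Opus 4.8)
The plan is to follow the same forward/backward strategy used in the proofs of \cref{thm: first gen,thm: second gen}: cut the circuit at the time point right after $(F \x I)U_g$, compute the forward state $\ket{\Phi(s)} = (F \x I)U_g\ket{G}\ket{0}$, compute the backward bra $\bra{\Psi(s)} = \bra{s}\bra{0}(F^\dagger \x I)U_{1/\hf}$, and note that the amplitude of the measurement outcome $(s,0)$ is $\bk{\Psi(s)}{\Phi(s)}$, so that $p(s) = \abs{\bk{\Psi(s)}{\Phi(s)}}^2$. (Unlike in \cref{alg: approx1}, the `bad' part of the ancilla is not simply discarded; instead it interferes into the $\ket{s}\ket{0}$ component, which is exactly what the single shared register buys us.)

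For the forward half, apply $U_g$ to the uniform superposition using \cref{eq:forward action} to obtain $\frac{1}{|G|^{1/2}}\sum_{x\in G}\ket{x}\ket{b(x)}$, and then apply $F \x I$ using \cref{eq:F on basis vectors} to get
\[
    \ket{\Phi(s)} = \frac{1}{|G|}\sum_{\phi\in\Ghat}\sum_{x\in G}\phi(x)\ket{\phi}\ket{b(x)}.
\]
For the backward half, use $\bra{s}F^\dagger = \frac{1}{|G|^{1/2}}\sum_{\phi\in\Ghat}\overline{\phi(s)}\bra{\phi}$ (the adjoint of \cref{eq:F on basis vectors}) together with the action \cref{eq:backwards action} of $U_{1/\hf}$ to get $\bra{\Psi(s)} = \frac{1}{|G|^{1/2}}\sum_{\phi\in\Ghat}\overline{\phi(s)}\bra{\phi}\bra{a(\phi)}$. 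Taking the inner product, the $\ket{\phi}$ registers collapse by orthogonality of characters, leaving
\[
    \bk{\Psi(s)}{\Phi(s)} = \frac{1}{|G|^{3/2}}\sum_{\phi\in\Ghat}\sum_{x\in G}\overline{\phi(s)}\,\phi(x)\,\of[\big]{a_0(\phi)b_0(x)+a_1(\phi)b_1(x)},
\]
where I use that $\bra{a(\phi)}$ is the row vector $\mx{a_0(\phi)&a_1(\phi)}$, so the pairing with $\ket{b(x)}$ is bilinear (no extra conjugation).

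Finally, split this into the `good $\times$ good' term ($a_0b_0$) and the `bad $\times$ bad' term ($a_1b_1$). For the first, substituting $a_0(\phi) = \hr/\hf(\phi)$ and $b_0(x)=g(x)/R$ and performing the inner $x$-sum gives $\frac{\hr}{R|G|}\sum_{\phi}\overline{\phi(s)}\,\hg(\phi)/\hf(\phi)$, which by \cref{lem:ghat} equals $\frac{\hr}{R|G|}\sum_{\phi}\overline{\phi(s)}\phi(s) = \hr/R$ since $\abs{\phi(s)}=1$ and $\abs{\Ghat}=|G|$. For the second term, insert $a_1(\phi) = e^{-i\chi(\phi)}\sqrt{1-\abs{\hr/\hf(\phi)}^2}$ and $b_1(x) = e^{i\theta(x)}\sqrt{1-\abs{g(x)/R}^2}$, then reindex $x \mapsto x+s$ and use $g(x+s)=f(x)$ together with $\phi(x+s)=\phi(x)\phi(s)$ and $\abs{\phi(s)}=1$ to cancel the factor $\overline{\phi(s)}$; this produces exactly the double sum in \cref{proba}. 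Squaring the modulus of $\hr/R$ plus this double sum yields the claim. I do not anticipate a genuine obstacle here; the only things requiring care are the conjugation bookkeeping (in particular that $\bra{a(\phi)}$ is not the adjoint of a ket) and the index shift that converts $\overline{\phi(s)}g(x)$ into $f(x)$ evaluated at the shifted argument.
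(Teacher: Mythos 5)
Your proposal is correct and follows essentially the same route as the paper's proof: the same cut after $(F\x I)U_g$, the same forward state $\ket{\Phi(s)}$ and backward bra $\bra{\Psi(s)}$ built from \cref{eq:forward action,eq:backwards action}, the same bilinear pairing $\bk{a(\phi)}{b(x)}$, the same reduction of the $a_0b_0$ term to $\hr/R$ via \cref{lem:ghat}, and the same substitution $x\mapsto x+s$ to produce the double sum. The only cosmetic difference is that you invoke orthonormality of the $\ket{\phi}$ basis where the paper uses $\bra{\phi}F\ket{x}=\phi(x)/|G|^{1/2}$ directly, which amounts to the same computation.
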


\begin{proof}
We follow the proof of \cref{thm: first gen}. The success probability of \cref{alg:approx} is given by
\[
    p(s) =
    \abs[\Big]{
        \underbrace{
            \of[\big]{\bra{s} \x \bra{0}}
            (F\ct \x I) U_{1/\hat{f}}
        }_{\bra{\Psi(s)}}
        \underbrace{\vphantom{U_{1/\hat{f}}}
            (F \x I) U_g
            \of[\big]{\ket{G} \x \ket{0}}
        }_{\ket{\Phi(s)}}
    }^2,
\]
which we will compute by separately evaluating $\bra{\Psi(s)}$ and $\ket{\Phi(s)}$.
By \cref{eq:backwards action,eq:forward action}, we have
\begin{align*}
    \bra{\Psi(s)}
   &= \of[\big]{\bra{s} F\ct \x \bra{0}}
      U_{1/\hat{f}}
    = \frac{1}{|G|^{1/2}}
      \sum_{\phi \in \Ghat}
      \overline{\phi(s)}
      \bra{\phi} \x \bra{a(\phi)}, \\
    \ket{\Phi(s)}
   &= (F \x I) U_g
      \of*{\frac{1}{|G|^{1/2}}
      \sum_{x \in G} \ket{x} \x \ket{0}}
    = \frac{1}{|G|^{1/2}}
      \sum_{x \in G} F \ket{x} \x \ket{b(x)}.
\end{align*}
Recall from \cref{eq:F} that $\bra{\phi} F \ket{x} = \phi(x) / |G|^{1/2}$. Combined with the expressions for $\bra{a(\phi)}$ and $\ket{b(x)}$ from \cref{eq: oracle U_f,eq:oracle U_g} the inner product between these two states is
\begin{align*}
    \bk{\Psi(s)}{\Phi(s)}
    &= \frac{1}{|G|}
      \sum_{x \in G}
      \sum_{\phi \in \Ghat}
      \overline{\phi(s)}
      \bra{\phi} F \ket{x}
      \bk{a(\phi)}{b(x)}=\frac{1}{|G|^{3/2}}
      \sum_{x \in G}
      \sum_{\phi \in \Ghat}
      \overline{\phi(s)}
      \phi(x)
      \bk{a(\phi)}{b(x)}\\
      &=\frac{1}{|G|^{3/2}}
      \sum_{x \in G}
      \sum_{\phi \in \Ghat}
      \overline{\phi(s)}
      \phi(x)\left(\frac{\hr}{R} \frac{g(x)}{\hat{f}(\phi)}
    + e^{i\theta(x)} \sqrt{1-\abs*{\frac{g(x)}{R}}^2}
      e^{-i\chi(\phi)} \sqrt{1-\abs*{\frac{\hr}{\hat{f}(\phi)}}^2}\right)\\
      &=\frac{\hr}{R}
      + \frac{1}{|G|^{3/2}}
        \sum_{x \in G}
        \sum_{\phi \in \Ghat}
        \overline{\phi(s)}
        \phi(x)
        e^{i\theta(x)}
        \sqrt{1-\abs*{\frac{g(x)}{R}}^2}
        e^{-i\chi(\phi)}
        \sqrt{1-\abs*{\frac{\hr}{\hat{f}(\phi)}}^2}.
\end{align*}
The last equality follows from the proof of \cref{thm: first gen}. Note that the first term in this expression is the total probability obtained from \cref{alg: approx1}. The desired formula follows by noting that
$\overline{\phi(s)} \phi(x) = \phi(x-s)$ and substituting $x \mapsto x + s$.
\end{proof}

Next, we investigate conditions under which \cref{alg:approx} gives the shift $s$ with certainty.  To this end, we introduce the function
\begin{align}
\label{eq:hs}
h_s:G&\rightarrow\C\\
x&\mapsto e^{i\theta(x+s)}\sqrt{1-\left|\frac{f(x)}{R}\right|^2}\nonumber
\end{align}
and its Fourier transform $\hat h_s : \Ghat\to\C$.
We can express \cref{proba} in terms of $\hat{h}_s$ as
\begin{equation}\label{eq:pZs}
    p(s) = \left|\frac{\hr}{R} + Z_s\right|^2,
\end{equation}
where
\begin{equation}\label{Zs}
    Z_s = \frac{1}{|G|}\sum_{\phi\in \hat{G}} e^{-i\chi(\phi)}\sqrt{1-\left|\frac{\hr}{\hat{f}(\phi)}\right|^2}\hat{h}_s(\phi).
\end{equation}
For later use, note that by definition of $h_s$ and a double application of Parseval's identity \eqref{eq:Parseval} we have  
\begin{equation}\label{hs-norm}
    \sum_{\phi \in \hat{G}} |\hat{h}_s(\phi)|^2 =\sum_{x \in G} \left(1-\left|\frac{f(x)}{R}\right|^2\right)
    = |G| - \frac{1}{R^2} \sum_{\phi \in \hat{G}} \left|\hat{f}(\phi)\right|^2.
\end{equation}

\begin{Thm}
\label{thm-prob1}
Suppose the hidden shift equals $s$.  Then the probability that \cref{alg:approx} outputs the correct shift (namely, $s$) equals 1 if and only if one of the following conditions holds:
\begin{enumerate}
\item $\hr=R$;
\item $\hr<R$, and it holds that
\begin{equation}\label{fhat-Rr}
|\hat f(\phi)|=\sqrt{\hr R}\quad
\text{for all }\phi\in\Ghat,
\end{equation}
and
\begin{equation}\label{hshat}
\hat h_s(\phi) = e^{i\chi(\phi)}\sqrt{1-\frac{\hr}{R}}
\quad\text{for all }\phi\in\Ghat.
\end{equation}
\end{enumerate}
\end{Thm}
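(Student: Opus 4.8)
The plan is to work from the formula $p(s)=\abs{\hr/R + Z_s}^2$ in \cref{eq:pZs,Zs}, and to determine exactly when this equals $1$. Since $f$ is $(R,\hr)$-bounded, \cref{prop:bounded}\ref{item1} gives $\hr\le R$, so $\hr/R\in(0,1]$ (note $\hr>0$ by \cref{def: hr R bounded intro}). If $\hr=R$ then $Z_s=0$ because every factor $\sqrt{1-\abs{\hr/\hf(\phi)}^2}$ vanishes (indeed $\hr=R$ forces $\abs{\hf(\phi)}=\hr$ for all $\phi$ by \cref{prop:bounded}\ref{item1}), so $p(s)=1$; this handles case~1. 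So assume $\hr<R$ from now on, and we must show $p(s)=1$ iff \cref{fhat-Rr,hshat} both hold.

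The key observation is a norm bound on $Z_s$. By Cauchy--Schwarz applied to the sum in \cref{Zs},
\[
\abs{Z_s}
\le \frac{1}{|G|}\of*{\sum_{\phi\in\Ghat}\of*{1-\abs*{\frac{\hr}{\hf(\phi)}}^2}}^{1/2}
\of*{\sum_{\phi\in\Ghat}\abs{\hat h_s(\phi)}^2}^{1/2},
\]
since $\abs{e^{-i\chi(\phi)}}=1$. For the first factor, $\sum_{\phi}(1-\hr^2/\abs{\hf(\phi)}^2) = |G| - \hr^2\sum_\phi 1/\abs{\hf(\phi)}^2$, and for the second factor we use \cref{hs-norm}, which gives $\sum_\phi\abs{\hat h_s(\phi)}^2 = |G| - R^{-2}\sum_\phi\abs{\hf(\phi)}^2$. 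Writing $u_\phi = \abs{\hf(\phi)}^2$, the product under the square roots is
\[
\of*{|G| - \hr^2\sum_\phi \frac{1}{u_\phi}}\of*{|G| - \frac{1}{R^2}\sum_\phi u_\phi}.
\]
The goal is to show this is at most $|G|^2(1-\hr/R)^2$, with equality characterised by \cref{fhat-Rr}. Expanding, the claimed inequality is equivalent (after dividing by $|G|^2$ and rearranging) to a statement controlled by $\sum_\phi u_\phi$ and $\sum_\phi 1/u_\phi$; the natural tool is the Cauchy--Schwarz / AM--GM inequality $\bigl(\sum_\phi u_\phi\bigr)\bigl(\sum_\phi 1/u_\phi\bigr)\ge |G|^2$, with equality iff all $u_\phi$ are equal. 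Combining this with the constraint $\hr^2|G|^2 \le \hr^2\sum_\phi u_\phi \cdot \sum_\phi 1/u_\phi$ and $\sum_\phi u_\phi \ge \hr^2|G|$ (Parseval-type lower bound, since $\abs{\hf(\phi)}\ge\hr$) should pin down that the product of the two parenthesised quantities is maximised exactly when $u_\phi$ is the constant $\hr R$, i.e.\ $\abs{\hf(\phi)}=\sqrt{\hr R}$ for all $\phi$, in which case both factors equal $|G|(1-\hr/R)$ and the product is $|G|^2(1-\hr/R)^2$. Taking square roots, $\abs{Z_s}\le 1-\hr/R$, hence $p(s)=\abs{\hr/R+Z_s}^2\le 1$ with equality iff $Z_s$ is a nonnegative real number equal to $1-\hr/R$.

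It remains to track the equality conditions. For $p(s)=1$ we need (i) $\abs{Z_s}=1-\hr/R$, which by the analysis above forces \cref{fhat-Rr}; and (ii) $Z_s = 1-\hr/R$ (real and positive, not merely of that modulus), which means the Cauchy--Schwarz step in the bound on $\abs{Z_s}$ is an equality with the correct phase alignment. Equality in that Cauchy--Schwarz means the vectors $\bigl(e^{-i\chi(\phi)}\sqrt{1-\abs{\hr/\hf(\phi)}^2}\bigr)_\phi$ and $\bigl(\overline{\hat h_s(\phi)}\bigr)_\phi$ are proportional with a nonnegative constant; under \cref{fhat-Rr} the first vector has entries $e^{-i\chi(\phi)}\sqrt{1-\hr/R}$ of constant modulus, so proportionality forces $\hat h_s(\phi) = \lambda\, e^{i\chi(\phi)}\sqrt{1-\hr/R}$ for a constant $\lambda\ge0$, and then computing $Z_s$ and using \cref{hs-norm} (which under \cref{fhat-Rr} gives $\sum_\phi\abs{\hat h_s(\phi)}^2 = |G|(1-\hr/R)$) forces $\lambda=1$, yielding \cref{hshat}. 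Conversely, if \cref{fhat-Rr,hshat} hold one checks directly from \cref{Zs} that $Z_s = \frac{1}{|G|}\sum_\phi e^{-i\chi(\phi)}\sqrt{1-\hr/R}\cdot e^{i\chi(\phi)}\sqrt{1-\hr/R} = 1-\hr/R$, so $p(s)=\abs{\hr/R + 1-\hr/R}^2 = 1$.

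I expect the main obstacle to be the two-variable optimisation showing $\bigl(|G|-\hr^2\sum 1/u_\phi\bigr)\bigl(|G|-R^{-2}\sum u_\phi\bigr)\le |G|^2(1-\hr/R)^2$ together with its equality case: one must combine the lower bound $\sum_\phi u_\phi\ge\hr^2|G|$ with $\bigl(\sum u_\phi\bigr)\bigl(\sum 1/u_\phi\bigr)\ge |G|^2$ in the right way so that the product is genuinely maximised at $u_\phi\equiv \hr R$ rather than at a boundary configuration, and then verify that this forces \emph{all} $u_\phi$ equal (not just their symmetric functions). A clean way is probably to set $P=\frac{1}{|G|}\sum u_\phi$ and $Q=\frac{1}{|G|}\sum 1/u_\phi$, note $PQ\ge 1$ and $P\ge\hr^2$, write the target as $|G|^2(1-\hr^2Q)(1-P/R^2)$, and bound this using $Q\le P/\hr^2 \cdot$ (something) — but the cleanest route may just be to fix the multiset $\{u_\phi\}$, observe both factors are nonnegative (they are, since $Z_s$ is well-defined), and apply the elementary inequality $\sqrt{(a-b)(c-d)}\le \cdots$; I would iron out the exact chain of elementary inequalities during write-up.
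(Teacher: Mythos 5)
Your proposal is correct and takes essentially the same route as the paper: both write $p(s)=\abs{\hr/R+Z_s}^2$ via \cref{eq:pZs,Zs}, bound $\abs{Z_s}$ by Cauchy--Schwarz together with \cref{hs-norm}, extract \cref{fhat-Rr} from the equality case of mean inequalities (your replacement of $\sum_\phi 1/u_\phi$ using the arithmetic mean is the mirror image of the paper's harmonic-mean step $H\le A$, followed by the same AM--GM), and then obtain \cref{hshat} from the equality conditions, which you do via Cauchy--Schwarz proportionality plus normalization while the paper uses the triangle and arithmetic--quadratic mean inequalities. The chain you deferred does close exactly as you anticipated: when $\abs{Z_s}=1-\hr/R>0$ both bracketed factors must be strictly positive, so no boundary configuration arises and equality forces all $\abs{\hf(\phi)}^2$ to be equal with common value $\hr R$.
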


\begin{proof}
By \cref{prop:bounded}(\ref{item1}) we have $\hr\le R$.  In case equality holds, \cref{prop:bounded}(\ref{item1}) gives $|f(x)|=R$ and $|\hat f(\phi)|=\hr$ for all $x\in G$ and $\phi\in\Ghat$.  By \cref{proba}, we immediately deduce $p(s)=1$. Now assume that we are in the case $\hr<R$.
If for all $\phi\in\Ghat$ we have $|\hat f(\phi)|=\sqrt{\hr R}$ and $\hat h_s(\phi)=e^{i\chi(\phi)}\sqrt{1-\frac{\hr}{R}}$, then by \cref{eq:pZs,Zs} we obtain $Z_s=1-\hr/R$ and hence $p(s)=1$.

Conversely, assume $p(s)=1$, or equivalently
\begin{equation}\label{Zs-rR}
Z_s=1-\hr/R.
\end{equation}
The Cauchy--Schwarz inequality together with \cref{hs-norm} implies
\begin{align*}
    |Z_s|^2 &\leq \left(
    \frac{1}{|G|}
    \sum_{\phi \in \hat{G}}
    \left(1-\left|\frac{\hr}{\hat{f}(\phi)}\right|^2\right)
    \right)
    \left(
    \frac{1}{|G|}
    \sum_{\phi \in \hat{G}}
    |\hat{h}_s(\phi)|^2
    \right)\\
&= \left(
    1 - \frac{\hr^2}{|G|}
    \sum_{\phi \in \hat{G}}
    \frac{1}{|\hat{f}(\phi)|^2}
    \right)
    \left(
    1-\frac{1}{|G| \cdot R^2}
    \sum_{\phi \in \hat{G}}
    \left|\hat{f}(\phi)\right|^2
    \right)\\
    &=\left(1-\frac{\hr^2}{H}\right)
   \left(1-\frac{A}{R^2}\right),
\end{align*}
where
\[
A = \frac{\sum_{\phi \in \hat{G}}|\hat{f}(\phi)|^2}{|G|}
\quad\text{and}\quad
H = \frac{|G|}{\sum_{\phi \in \hat{G}} \frac{1}{|\hat{f}(\phi)|^2}}.
\]
By the arithmetic-harmonic mean inequality one has $H\leq A$, with equality if and only if $|\hat{f}(\phi)|^2$ is constant over all $\phi\in \Ghat$, in which case $|\hat{f}(\phi)|=\sqrt{H}$ for all $\phi\in\Ghat$. Furthermore, since $Z_s=1-\hr/R>0$ and $H\ge \hr^2$, we have $1-\hr^2/H>0$.
Therefore it holds that
\begin{align*}
|Z_s|^2&\le
  \left(1-\frac{\hr^2}{H}\right)
  \left(1-\frac{H}{R^2}\right)=1-\left(\frac{\hr^2}{H}+\frac{H}{R^2}\right)+\frac{\hr^2}{R^2}.
\end{align*}
By the arithmetic-geometric mean inequality we have
\[
\frac{\hr^2}{H}+\frac{H}{R^2} \ge
2\frac{\hr}{R},
\]
with equality if and only if $H=\hr R$.  We obtain
\begin{align*}
    |Z_s|^2 &\le 1 - 2\frac{\hr}{R} + \frac{\hr^2}{R^2}=\left(1 - \frac{\hr}{R}\right)^2.
\end{align*}
In view of \cref{Zs-rR}, it follows that $H$ indeed equals $\hr R$, so \cref{fhat-Rr} holds.

It remains to prove \cref{hshat}.
To do this, we first note that by \cref{Zs,Zs-rR,fhat-Rr} we have
\[
\sqrt{1-\frac{\hr}{R}} =
\frac{1}{|G|}\sum_{\phi\in\Ghat}
e^{-i\chi(\phi)}\hat h_s(\phi).
\]
This implies
\[
\sqrt{1-\frac{\hr}{R}}\le
\frac{1}{|G|}\sum_{\phi\in\Ghat} |\hat h_s(\phi)|,
\]
with equality if and only if $e^{-i\chi(\phi)}\hat h_s(\phi)$ is real and non-negative for all $\phi\in\Ghat$.  Next, the arithmetic-quadratic mean inequality together with \cref{hs-norm,fhat-Rr} gives
\[
1-\frac{\hr}{R}\le
\frac{1}{|G|}\sum_{\phi\in\Ghat} |\hat h_s(\phi)|^2
= 1-\frac{\hr}{R},
\]
with equality if and only if in addition all $|\hat h_s(\phi)|$ are equal, in which case the common value equals $\sqrt{1-\hr/R}$.  This proves \cref{hshat}.
\end{proof}

\begin{Rem}
In case 1 of \cref{thm-prob1}, the function $f$ is bent by \cref{prop:bounded}(\ref{item1}), and \cref{alg:approx} reduces to \cref{algo1}.
The conditions in case 2 of \cref{thm-prob1} are quite remarkable.  Specifially, \cref{fhat-Rr} implies that $f$ is not only $(R,\hr)$-bounded but even $(R,\sqrt{\hr R})$-bounded.
However, we need to treat $f$ as an $(R,\hr)$-bounded function in order to attain succes probability 1 in \cref{alg:approx}.
\end{Rem}

The conditions of \cref{thm-prob1} are dependent on the shift~$s$.
Since this shift is actually the value we want to determine, it is natural to ask under what conditions we can attain $p(s)=1$ for \emph{all} values of~$s$.

\begin{Lem}\label{lem:ft-flat}
Let $u:G\to\R_{\ge0}$ be a function, and let $\hat u:\Ghat\to\C$ be its Fourier transform.  Then $|\hat u(\phi)|$ is constant for all $\phi\in\Ghat$ if and only if there exists $x_0\in G$ such that $u(x)=0$ for all $x\ne x_0$.  In this case we have
\begin{equation}\label{eq:uhat}
\hat u(\phi) = \frac{1}{\sqrt{|G|}}u(x_0)\phi(x_0)
\quad\text{for all }\phi\in\Ghat.
\end{equation}
\end{Lem}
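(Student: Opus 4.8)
The plan is to prove the two implications separately; the forward implication is a one-line computation that also yields the displayed formula, and the converse rests on the nonnegativity of $u$ together with Parseval's identity.

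For the direction where $u$ is supported on a single point, suppose $u(x)=0$ for all $x\ne x_0$. Then \cref{def: Fourier tranform} gives directly
$\hat u(\phi)=\frac{1}{\sqrt{|G|}}\sum_{x\in G}\phi(x)u(x)=\frac{1}{\sqrt{|G|}}u(x_0)\phi(x_0)$,
which is \eqref{eq:uhat}; since $|\phi(x_0)|=1$ this has constant modulus $u(x_0)/\sqrt{|G|}$.

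For the converse, assume $|\hat u(\phi)|=c$ is independent of $\phi$. First I would evaluate at the trivial character $\phi_0$: because $u$ takes values in $\R_{\ge 0}$, the quantity $\hat u(\phi_0)=\frac{1}{\sqrt{|G|}}\sum_{x\in G}u(x)$ is a nonnegative real, so $c=|\hat u(\phi_0)|=\frac{1}{\sqrt{|G|}}\sum_{x}u(x)$. Next, Parseval's identity \eqref{eq:Parseval} together with $|\hG|=|G|$ gives $\sum_{x\in G}u(x)^2=\sum_{\phi\in\hG}|\hat u(\phi)|^2=|G|c^2=\bigl(\sum_{x}u(x)\bigr)^2$. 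Expanding the square yields $\sum_x u(x)^2+\sum_{x\ne y}u(x)u(y)=\sum_x u(x)^2$, hence $\sum_{x\ne y}u(x)u(y)=0$. Since every summand is nonnegative, at most one $x_0\in G$ can satisfy $u(x_0)>0$ (and if $u\equiv 0$ we may take $x_0$ arbitrary), i.e.\ $u(x)=0$ for all $x\ne x_0$. Formula \eqref{eq:uhat} then follows from the first part.

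I do not expect a real obstacle here; the only subtlety is that nonnegativity of $u$ is used twice — once to identify the common modulus $c$ with $\frac{1}{\sqrt{|G|}}\sum_x u(x)$ via the trivial character, and once to pass from $\sum_x u(x)^2=\bigl(\sum_x u(x)\bigr)^2$ to a support of size at most one. Dropping the hypothesis $u\ge 0$ makes the statement false (every $\pm1$-valued bent function on $G$ has flat Fourier spectrum yet full support), so any correct argument must use it in an essential way.
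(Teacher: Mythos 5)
Your proof is correct, but it takes a genuinely different route from the paper. The paper fixes an arbitrary character $\phi$ and writes $|\hat u(\phi)|\le \frac{1}{\sqrt{|G|}}\sum_x u(x)=|\hat u(\phi_0)|$ by the triangle inequality (using $u\ge 0$); flatness forces equality for every $\phi$, i.e.\ $\bigl|\sum_x \frac{u(x)}{U}\phi(x)\bigr|=1$, a convex combination of unit-circle points landing on the unit circle, and since characters separate the points of $G$ this can hold for \emph{all} $\phi$ only when the support of $u$ is a single point. You instead use flatness only through two pieces of data: the value at the trivial character, which identifies the common modulus $c$ with $\frac{1}{\sqrt{|G|}}\sum_x u(x)$ (here $u\ge 0$ enters), and the $\ell^2$-sum via Parseval's identity \eqref{eq:Parseval}, which yields $\sum_x u(x)^2=\bigl(\sum_x u(x)\bigr)^2$ and hence $\sum_{x\ne y}u(x)u(y)=0$, forcing a support of size at most one by nonnegativity again. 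Your argument is more elementary and self-contained: it sidesteps the equality-case analysis of the triangle inequality and the (implicit) appeal to characters separating points, at the cost of invoking Parseval — which the paper already uses elsewhere, so nothing new is needed. The paper's proof, in turn, makes the geometric mechanism (all phases $\phi(x)$ on the support must align for every $\phi$) more visible. Your closing remark that the hypothesis $u\ge 0$ is indispensable (flat-spectrum $\pm1$-valued bent functions have full support) is a sound sanity check and matches where nonnegativity is used in both arguments.
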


\begin{proof}
Suppose all $|\hat u(\phi)|$ are equal.  Since the claim clearly holds if $u$ is the zero function, we assume that $u$ is not the zero function.  We write
\[
U = \sum_{x\in G}u(x)>0.
\]
Let $\phi\in\Ghat$.  By the definition of the Fourier transform, the triangle inequality and the assumption that $u(x)\ge0$ for all $x\in G$ we can write
\begin{equation} \label{lem_p}
    |\hat{u}(\phi)| = \frac{1}{|G|}\left|\sum_{x\in G}u(x)\phi(x)\right| \leq \frac{U}{|G|} = |\hat{u}(\phi_0)|.
\end{equation}
By assumption, the inequality in \cref{lem_p} is in fact an equality. This implies
\begin{equation*}
    \left|\sum_{x\in G}u(x)\phi(x)\right| = U,
\end{equation*}
so we have a convex combination of $|G|$ points on the unit circle giving a point on the unit circle, namely
\begin{equation*}
    \left|\sum_{x\in G}\frac{u(x)}{U}\phi(x)\right|  = 1.
\end{equation*}
This can only hold for all $\phi\in\Ghat$ if there exists $x_0\in G$ such that $u(x) = 0$ for all $x\ne x_0$. Conversely, if there is $x_0\in G$ such that $u(x)=0$ for all $x\neq x_0$, then \cref{eq:uhat} holds, and hence we have $|\hat u(\phi)|=|u(x_0)|/\sqrt{|G|}$ for all $\phi\in\Ghat$.
\end{proof}

\begin{Thm}\label{thm:prob-all1}
We have $p(s)=1$ for all $s\in G$ if and only if one of the following conditions holds:
\begin{enumerate}
\item
$\hr=R$;
\item\label{all1-cond2}
$1-\frac{1}{|G|}\le\frac{\hr}{R}<1$, and there exist $x_0\in G$ and $\alpha\in\R$ such that for all $x\in G$ and $\phi\in\Ghat$ it holds that
\begin{align*}
|\hat f(\phi)| &= \sqrt{\hr R},\\
|f(x)| &= \begin{cases}
R\sqrt{1-|G|\left(1-\frac{\hr}{R}\right)}& \text{if }x=x_0,\\
R& \text{if }x\ne x_0,
\end{cases}\\
\theta(x) &= \alpha,\\
\chi(\phi) &= \alpha + \arg(\phi(x_0)).
\end{align*}

\end{enumerate}
\end{Thm}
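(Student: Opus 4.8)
The plan is to derive the statement directly from \cref{thm-prob1} applied to \emph{every} shift $s\in G$, combined with \cref{lem:ft-flat}. Recall that $\hr\le R$ always (\cref{prop:bounded}(\ref{item1})), so we may split into the cases $\hr=R$ and $\hr<R$.

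\textbf{The ``if'' direction.} If $\hr=R$, then \cref{thm-prob1}(1) gives $p(s)=1$ for every $s$, which is condition~1. Suppose instead the data $(f,\theta,\chi)$ satisfy the conditions of case~2; in particular $\hr/R<1$, so for each fixed $s$ it suffices to verify \eqref{fhat-Rr} and \eqref{hshat} of \cref{thm-prob1}(2). Condition \eqref{fhat-Rr} is immediate. For \eqref{hshat}, note that $\theta\equiv\alpha$ gives $h_s(x)=e^{i\alpha}\sqrt{1-|f(x)/R|^2}$, which by the prescribed values of $|f|$ vanishes for $x\ne x_0$ and equals $e^{i\alpha}\sqrt{|G|(1-\hr/R)}$ at $x=x_0$ (here the hypothesis $\hr/R\ge 1-1/|G|$ is exactly what makes $|f(x_0)|$ a well-defined nonnegative real). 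Hence $\hat h_s(\phi)=|G|^{-1/2}\phi(x_0)\,e^{i\alpha}\sqrt{|G|(1-\hr/R)}=e^{i\alpha}\phi(x_0)\sqrt{1-\hr/R}$, which by the prescribed value of $\chi$ equals $e^{i\chi(\phi)}\sqrt{1-\hr/R}$. So \cref{thm-prob1}(2) applies and $p(s)=1$ for all $s$.

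\textbf{The ``only if'' direction.} Assume $p(s)=1$ for all $s\in G$; if $\hr=R$ we are done, so assume $\hr<R$. Applying \cref{thm-prob1}(2) for each $s$ yields \eqref{fhat-Rr} (the first prescribed identity), and, for every $s$, the identity $\hat h_s(\phi)=e^{i\chi(\phi)}\sqrt{1-\hr/R}$ for all $\phi\in\Ghat$. The crucial observation is that the right-hand side is independent of $s$, so $\hat h_s$, and hence $h_s$, is the same function for all $s$. Writing $u(x)=\sqrt{1-|f(x)/R|^2}\ge 0$, so that $h_s(x)=e^{i\theta(x+s)}u(x)$, the identities \eqref{hs-norm} and \eqref{fhat-Rr} give $\sum_x u(x)^2=|G|(1-\hr/R)>0$, so $u(x_1)>0$ for some $x_1\in G$. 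The $s$-independence of $h_s(x_1)=e^{i\theta(x_1+s)}u(x_1)$ then forces $e^{i\theta(x_1+s)}$ to be constant in $s$, and since $x_1+s$ runs over all of $G$ we conclude $\theta\equiv\alpha$ for some $\alpha\in\R$. Consequently $h_s(x)=e^{i\alpha}u(x)$, so $|\hat u(\phi)|=\sqrt{1-\hr/R}$ is constant in $\phi$, and \cref{lem:ft-flat} produces $x_0\in G$ with $u(x)=0$ for $x\ne x_0$ and $\hat u(\phi)=|G|^{-1/2}u(x_0)\phi(x_0)$. From $u(x)=0$ we read $|f(x)|=R$ for $x\ne x_0$, and from $u(x_0)^2=\sum_x u(x)^2=|G|(1-\hr/R)$ we get $|f(x_0)|=R\sqrt{1-|G|(1-\hr/R)}$; in particular $1-|G|(1-\hr/R)\ge0$, i.e.\ $\hr/R\ge 1-1/|G|$, which with $\hr/R<1$ gives the stated range. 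Finally, comparing $\hat h_s(\phi)=e^{i\alpha}|G|^{-1/2}u(x_0)\phi(x_0)=e^{i\alpha}\sqrt{1-\hr/R}\,\phi(x_0)$ with $e^{i\chi(\phi)}\sqrt{1-\hr/R}$ and cancelling the positive factor yields $e^{i\chi(\phi)}=e^{i\alpha}\phi(x_0)$, i.e.\ $\chi(\phi)=\alpha+\arg(\phi(x_0))$.

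\textbf{Expected main obstacle.} The argument is short once \cref{thm-prob1} and \cref{lem:ft-flat} are in hand; the single genuine idea is to exploit that \eqref{hshat} must hold with one fixed phase function $\chi$ simultaneously for \emph{all} shifts $s$, which removes the $s$-dependence of $h_s$ and pins down $\theta$ to a constant. Everything after that is bookkeeping with Parseval's identity and the constants $\hr,R,|G|$. The only points needing mild care are that $\theta$ and $\chi$ are defined only modulo $2\pi$ (so the final two displayed identities should be read mod $2\pi$, after possibly adjusting $\alpha$), that $u$ is indeed $\R_{\ge0}$-valued because $f$ is $(R,\hr)$-bounded so that \cref{lem:ft-flat} applies, and the boundary case $\hr/R=1-1/|G|$, where $f(x_0)=0$; none of these causes difficulty.
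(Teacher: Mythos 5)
Your proposal is correct and follows essentially the same route as the paper's proof: both directions reduce to \cref{thm-prob1}, with the ``only if'' part using the $s$-independence of $\hat h_s$ (hence of $h_s$) to force $\theta$ constant, then \cref{lem:ft-flat} to locate $x_0$, Parseval to fix $|f(x_0)|$ and the range $\hr/R\ge 1-1/|G|$, and a final comparison with \eqref{hshat} to determine $\chi$. Your explicit argument that $e^{i\theta(x_1+s)}$ is constant in $s$ at a point $x_1$ with $u(x_1)>0$ is a slightly more careful justification of the paper's one-line claim that $\theta$ is constant, but it is the same idea.
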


\begin{proof}
In the case $\hr=R$, the claim follows directly from \cref{thm-prob1}.  Now suppose that we are in the case $\hr<R$.  First suppose that the conditions in \cref{all1-cond2} hold.  Then we compute
\[
h_s(x) = \begin{cases}
e^{i\alpha}\sqrt{|G|}\sqrt{1-\frac{\hr}{R}}& \text{if }x=x_0,\\
0& \text{if }x\ne x_0.
\end{cases}
\]
This implies that for all $\phi\in\Ghat$ we have
\begin{align*}
\hat h_s(\phi)
&= \phi(x_0)e^{i\alpha}\sqrt{1-\frac{\hr}{R}}
 = e^{i\chi(\phi)}\sqrt{1-\frac{\hr}{R}}.
\end{align*}
By \cref{thm-prob1}, it follows that $p(s)$ equals 1 for all $s\in G$.

Conversely, suppose $\hr<R$ and $p(s)=1$ for all $s\in G$.  By \cref{thm-prob1}, \cref{fhat-Rr,hshat} are satisfied for all $s\in G$, and we need to show that the conditions in~\cref{all1-cond2} hold.
By \cref{hshat}, the function $\hat h_s$ is non-zero and independent of~$s$, and therefore the same holds for $h_s$.  It follows that $\theta$ is constant, say
\[
\theta(x)=\alpha
\quad\text{for all }x\in G.
\]
Thus we have
\[
h_s(x)=e^{i\alpha}u(x)
\quad\text{with}\quad
u(x)=\sqrt{1-\left|\frac{f(x)}{R}\right|^2}.
\]
By \cref{hshat} the functions $\hat h_s$ and therefore also $\hat u$ have constant absolute value, so \cref{lem:ft-flat} implies that there exists $x_0\in G$ such that
\[
|f(x)| = R
\quad\text{for all }x\ne x_0.
\]
It then follows from Parseval's identity \eqref{eq:Parseval} and \cref{fhat-Rr} that $(|G|-1)R^2\ge |G|\hr R$, or equivalently
\[
\frac{\hr}{R}\ge 1-\frac{1}{|G|},
\]
and
\[
|f(x_0)| = R\sqrt{1-|G|\left(1-\frac{\hr}{R}\right)}.
\]
We then compute
\[
h_s(x) = \begin{cases}
e^{i\alpha}\sqrt{|G|}\sqrt{1-\frac{\hr}{R}}& \text{if }x=x_0,\\
0& \text{if }x\ne x_0,
\end{cases}
\]
and subsequently
\begin{align*}
\hat h_s(\phi) &= \frac{1}{\sqrt{|G|}}\phi(x_0)h(x_0)
= \phi(x_0)e^{i\alpha}\sqrt{1-\frac{\hr}{R}}.
\end{align*}
Comparing this to \cref{hshat}, we obtain
\[
\chi(\phi) = \alpha + \arg(\phi(x_0))
\quad\text{for all }\phi\in\Ghat,
\]
which concludes the proof.
\end{proof}

\begin{Ex}
We give explicit examples of families of functions satisfying the conditions of \cref{thm:prob-all1} for the groups $G=\Z/n\Z$ with $n\in\{2,3\}$.  We identify $\Ghat$ with $G$ via the isomorphism $\phi:\Z/n\Z\to\Ghat$ defined by $\phi(a)(x) = \exp(2\pi iax/n)$.

For $n=2$, we fix $\eta\in\C$ with $|\eta|=1$ and $\Re\eta\le0$.  We consider the Fourier transform pair
\[
\begin{tabular}{c|cc}
$x$& 0& 1\\
\hline
\vphantom{\Big|}$f(x)$& $\frac{1+\eta}{\sqrt{2}}$& $\frac{1-\eta}{\sqrt{2}}$\\
\hline
\vphantom{\Big|}$\hat f(x)$& 1& $\eta$
\end{tabular}
\]
This satisfies the conditions of \cref{thm:prob-all1} with $R=1/\hr=|1-\eta|/\sqrt{2}$.

For $n=3$, we fix $\eta\in\C$ with $|\eta|=1$ and $\Re\eta\le1/2$.  We consider the Fourier transform pair
\[
\begin{tabular}{c|ccc}
$x$& 0& 1& 2\\
\hline
\vphantom{\Big|}$f(x)$& $\frac{1+2\eta}{\sqrt{3}}$& $\frac{1-\eta}{\sqrt{3}}$& $\frac{1-\eta}{\sqrt{3}}$\\
\hline
\vphantom{\Big|}$\hat f(x)$& 1& $\eta$& $\eta$
\end{tabular}
\]
This satisfies the conditions of \cref{thm:prob-all1} with $R=1/\hr=|1-\eta|/\sqrt{3}$.
\end{Ex}

\subsection{Lower bounds for the success probability}

We have shown that \cref{alg:approx} can only attain success probability 1 under the conditions of \cref{thm-prob1,thm:prob-all1}.
Since these conditions are rather strict, and those of \cref{thm-prob1} moreover depend on the unknown shift $s$, it is natural to ask for lower bounds on the success probability $p(s)$ given by \cref{proba}, using the freedom that we have in the choice of the functions $\theta$ and~$\chi$.

\begin{Thm}\label{th:lowerbound}
Consider \cref{alg:approx} with $\theta = 0$.
For any value of the hidden shift~$s$, the choice of $\chi$ that maximizes the success probability of measuring $s$ is $\chi(\phi) = \arg(\hat{h}(\phi))$ for all $\phi\in\Ghat$ with $\hat{h}(\phi)\ne0$ (and $\chi(\phi)$ is arbitrary for $\hat{h}(\phi)=0$).
The success probability in this case equals
    \[
        p = \left(\frac{\hr}{R}+|G|^{-1}\sum_{\phi}\sqrt{1-\left|\frac{\hr}{\hat{f}(\phi)}\right|^2}|{\hat h}(\phi)|\right)^2
    \]
    with $h(x) = \sqrt{1-|f(x)/R|^2}$.
\end{Thm}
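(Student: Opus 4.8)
The plan is to reduce immediately to the compact form of the success probability recorded in \cref{eq:pZs,Zs}. Setting $\theta = 0$ in the definition \eqref{eq:hs} of $h_s$ collapses it to $h_s(x) = h(x) = \sqrt{1-\abs{f(x)/R}^2}$, which no longer depends on $s$; hence $\hat h_s = \hat h$ and $Z_s = Z$ are independent of $s$, and by \cref{eq:pZs} we may write $p(s) = p = \abs{\hr/R + Z}^2$ with $Z = \abs{G}^{-1}\sum_{\phi\in\Ghat} e^{-i\chi(\phi)}\sqrt{1-\abs{\hr/\hf(\phi)}^2}\,\hat h(\phi)$. (One could equally read this off from \eqref{proba} using $\sum_{x\in G}\phi(x)h(x) = \abs{G}^{1/2}\hat h(\phi)$ from \cref{def: Fourier tranform}.) This already accounts for the fact that the claimed value of $p$ carries no $s$-dependence.

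Next I would bound $p$ over all admissible $\chi$ by two successive triangle inequalities. Since $\hr/R$ is a fixed non-negative real, $\abs{\hr/R + Z} \le \hr/R + \abs{Z}$, with equality exactly when $Z$ is a non-negative real number. Then, using $\abs{e^{-i\chi(\phi)}} = 1$ and that each factor $\sqrt{1-\abs{\hr/\hf(\phi)}^2}$ is a non-negative real (this is where $(R,\hr)$-boundedness is used, to guarantee $\abs{\hf(\phi)} \ge \hr$ so the radicand is in $[0,1]$), we get $\abs{Z} \le \abs{G}^{-1}\sum_{\phi}\sqrt{1-\abs{\hr/\hf(\phi)}^2}\,\abs{\hat h(\phi)}$, with equality iff all the nonzero summands $e^{-i\chi(\phi)}\sqrt{1-\abs{\hr/\hf(\phi)}^2}\hat h(\phi)$ have a common argument. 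Combining, $p \le \bigl(\hr/R + \abs{G}^{-1}\sum_\phi \sqrt{1-\abs{\hr/\hf(\phi)}^2}\abs{\hat h(\phi)}\bigr)^2$ for every choice of~$\chi$.

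Finally I would exhibit a $\chi$ attaining both equalities simultaneously: take $\chi(\phi) = \arg(\hat h(\phi))$ whenever $\hat h(\phi)\ne 0$, and arbitrary otherwise (those terms vanish in $Z$, as do the terms with $\abs{\hf(\phi)} = \hr$). Then $e^{-i\chi(\phi)}\hat h(\phi) = \abs{\hat h(\phi)} \ge 0$, so every summand of $Z$ is a non-negative real; hence $Z = \abs{G}^{-1}\sum_\phi \sqrt{1-\abs{\hr/\hf(\phi)}^2}\abs{\hat h(\phi)}$ is itself a non-negative real equal to the upper bound on $\abs{Z}$, and both equality conditions hold. Substituting into $p = \abs{\hr/R + Z}^2$ yields the stated value, simultaneously proving that this $\chi$ is optimal. (When $h \equiv 0$, i.e.\ $\abs{f(x)} = R$ for all $x$, all terms vanish and we recover $p = (\hr/R)^2$ as in \cref{thm: first gen}.)

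I do not expect a genuine obstacle here; the proof is essentially two applications of the triangle inequality together with bookkeeping. The only points demanding care are: (i) observing that once $\theta = 0$ the function $\hat h$ — and therefore the optimal $\chi$ — is independent of the unknown shift $s$, so the statement makes sense; (ii) handling the degenerate frequencies where $\hat h(\phi) = 0$ or $\abs{\hf(\phi)} = \hr$, at which $\chi(\phi)$ is immaterial; and (iii) checking that the two equality conditions are compatible, which they are, since forcing every summand of $Z$ to be a non-negative real automatically makes $Z$ a non-negative real.
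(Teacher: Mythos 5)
Your proposal is correct and follows essentially the same route as the paper: set $\theta=0$ so that $h_s=h$ and the expression in \cref{eq:pZs,Zs} becomes $s$-independent, then maximize by making every summand of $Z$ real and non-negative via $\chi(\phi)=\arg(\hat h(\phi))$. Your explicit treatment of the two equality conditions and of the degenerate frequencies ($\hat h(\phi)=0$ or $|\hat f(\phi)|=\hr$) is just a more detailed write-up of the paper's one-line argument.
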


\begin{proof}
Substituting $\theta=0$, we find that \cref{eq:hs,eq:pZs,Zs} simplify to
\[
p(s) = \left|\frac{\hr}{R} + \frac{1}{|G|}\sum_{\phi\in \hat{G}} e^{-i\chi(\phi)}\sqrt{1-\left|\frac{\hr}{\hat{f}(\phi)}\right|^2}\hat{h}(\phi)\right|^2.
\]
This expression is independent of $s$ and is maximal when each term in the sum is real and positive, which happens for the choice of $\chi$ described in the theorem.
\end{proof}

Note that \cref{th:lowerbound} only gives the optimal choice for $\chi$ when we fix $\theta=0$; we do not claim that these choices are optimal among all choices of $\theta$ and $\chi$.
Furthermore, since this choice for $\chi$ seems difficult to compute efficiently, we will next explore the case where $\theta=0$ and where we let $\chi$ take uniformly random values in $[0,2\pi)$.

\begin{Lem}\label{lem:integral}
For any two complex numbers $a$ and $b$, we have
\begin{equation*}
  \frac{1}{2\pi}\int_{0}^{2\pi}\left|a+be^{i\alpha}\right|^2d\alpha = |a|^2+|b|^2.
\end{equation*}
\end{Lem}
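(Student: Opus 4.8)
The plan is to expand the integrand and integrate term by term, using the elementary fact that $\int_0^{2\pi}e^{ik\alpha}\,d\alpha=0$ for every nonzero integer~$k$. This is really a one-line computation, so the proof will be short; the only ``obstacle'' is bookkeeping the conjugates correctly.

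First I would write $\abs{a+be^{i\alpha}}^2=(a+be^{i\alpha})\overline{(a+be^{i\alpha})}=(a+be^{i\alpha})(\bar a+\bar b e^{-i\alpha})$ and multiply out to get
\[
\abs{a+be^{i\alpha}}^2 = \abs{a}^2+\abs{b}^2+a\bar b\,e^{-i\alpha}+\bar a b\,e^{i\alpha}.
\]
Next I would integrate this identity over $\alpha\in[0,2\pi]$ and divide by $2\pi$. The two constant terms $\abs{a}^2$ and $\abs{b}^2$ contribute themselves, while the two cross terms vanish because $\int_0^{2\pi}e^{i\alpha}\,d\alpha=\int_0^{2\pi}e^{-i\alpha}\,d\alpha=0$. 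This yields the claimed equality $\frac{1}{2\pi}\int_0^{2\pi}\abs{a+be^{i\alpha}}^2\,d\alpha=\abs{a}^2+\abs{b}^2$.

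Alternatively, and essentially equivalently, one can phrase this as orthogonality: the functions $1$ and $e^{i\alpha}$ are orthogonal unit vectors in $L^2([0,2\pi],\frac{d\alpha}{2\pi})$, so $\norm{a\cdot 1+b\cdot e^{i\alpha}}_2^2=\abs{a}^2+\abs{b}^2$ by the Pythagorean identity. I would present the direct computation since it is self-contained and requires no extra machinery. There is no real main obstacle here; one just has to be careful that the cross terms are genuinely $e^{\pm i\alpha}$ (and not some other power), which is immediate from the expansion.
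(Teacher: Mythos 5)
Your proposal is correct and follows essentially the same route as the paper: expand $\lvert a+be^{i\alpha}\rvert^2=\lvert a\rvert^2+\overline{a}be^{i\alpha}+a\overline{b}e^{-i\alpha}+\lvert b\rvert^2$ and note that the cross terms integrate to zero over $[0,2\pi]$. No gaps.
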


\begin{proof}
We have
\[
   \left|a+be^{i\alpha}\right|^2 =
   |a|^2+\overline{a}be^{i\alpha}+a\overline{b}e^{-i\alpha}+|b|^2.
\]
Integrating the right-hand side over $[0,2\pi]$ gives $2\pi(|a|^2+|b|^2)$.
\end{proof}

\Cref{lem:integral} admits the following higher-dimensional generalisation.
\begin{Lem}\label{lem_integral_unif}
Let $a$ and $b_1$, \dots, $b_n$ (with $n\geq 1$) be complex numbers. One has
\begin{equation*}
  \frac{1}{(2\pi)^n}
  \int_{0}^{2\pi} \dotsi \int_{0}^{2\pi} \left|a+b_1e^{i\alpha_1}+\dotsb+ b_ne^{i\alpha_n}\right|^2 d\alpha_1\cdots d\alpha_n = |a|^2 + |b_1|^2 + \dotsb + |b_n|^2.
\end{equation*}
\end{Lem}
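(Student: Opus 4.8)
The plan is to prove the identity by induction on $n$, using \cref{lem:integral} both as the base case and as the main tool in the inductive step.

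For $n = 1$ the statement coincides with \cref{lem:integral}, so there is nothing to do. For the inductive step I would assume the formula for $n-1$ summands (with $n \geq 2$) and integrate out the variable $\alpha_n$ first; this interchange of integrals is legitimate because the integrand is continuous, non-negative and bounded on the compact cube $[0,2\pi]^n$, so Fubini's theorem applies. Setting $c = a + b_1 e^{i\alpha_1} + \dotsb + b_{n-1} e^{i\alpha_{n-1}}$, which does not depend on $\alpha_n$, \cref{lem:integral} applied to the pair $(c, b_n)$ gives
\[
\frac{1}{2\pi} \int_0^{2\pi} \bigl| c + b_n e^{i\alpha_n} \bigr|^2 \, d\alpha_n = |c|^2 + |b_n|^2 .
\]

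It then remains to integrate this over $\alpha_1, \dots, \alpha_{n-1}$. The induction hypothesis, applied to $c$ viewed as $a + \sum_{j=1}^{n-1} b_j e^{i\alpha_j}$, gives
\[
\frac{1}{(2\pi)^{n-1}} \int_0^{2\pi}\!\! \dotsi \int_0^{2\pi} |c|^2 \, d\alpha_1 \cdots d\alpha_{n-1} = |a|^2 + |b_1|^2 + \dotsb + |b_{n-1}|^2 ,
\]
while the constant term $|b_n|^2$ integrates to itself. Adding the two contributions yields the claimed value $|a|^2 + |b_1|^2 + \dotsb + |b_n|^2$, which closes the induction.

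As an alternative I could avoid the induction entirely and expand $\bigl| a + \sum_j b_j e^{i\alpha_j} \bigr|^2$ directly: writing $u_0 = a$ and $u_j = b_j e^{i\alpha_j}$, the square equals $\sum_{j,k} \overline{u_j} u_k$, and every off-diagonal term carries a factor $e^{i\alpha_k}$, $e^{-i\alpha_j}$ or $e^{i(\alpha_k - \alpha_j)}$ with distinct indices, hence integrates to zero over the relevant circle(s); the diagonal contributes exactly $|a|^2 + \sum_j |b_j|^2$. I do not expect any genuine obstacle here, since the computation is routine; the only point deserving a remark is the Fubini justification for swapping the order of integration, and even that is immediate on a compact domain with a bounded continuous integrand.
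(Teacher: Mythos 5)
Your proposal is correct and follows essentially the same route as the paper: induction on $n$ with \cref{lem:integral} supplying the inductive step (the paper merely starts the induction at $n=0$ instead of $n=1$). The Fubini remark and the alternative direct-expansion argument are fine but add nothing beyond the paper's one-line proof.
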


\begin{proof}
We use induction on~$n$; the case $n=0$ is immediate and the induction step follows from \cref{lem:integral}.
\end{proof}

\begin{Thm}
\label{thm:chi-unif}
Assume that $\theta = 0$ and the images $\chi(\phi)$ of the phase function $\chi$ are chosen independently and uniformly random in the interval $[0, 2\pi)$.  Then the probability of obtaining the shift $s$ is independent of $s$ and is given by
\[
     p = \left(\frac{\hr}{R}\right)^2+ \frac{1}{|G|^2}\sum_{\phi\in \hat{G}}\left(1-\left|\frac{\hr}{\hat{f}(\phi)}\right|^2\right)\left|\hat{h}(\phi)\right|^2.
\]
\end{Thm}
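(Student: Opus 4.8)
The plan is to start from the closed form for the success probability already obtained in the proof of \cref{th:lowerbound}. Setting $\theta=0$ in \cref{proba} (equivalently, using \cref{eq:pZs,Zs} with $h_s$ replaced by $h$), the probability of measuring the shift $s$ is
\[
p(s) = \abs*{\frac{\hr}{R} + \frac{1}{|G|}\sum_{\phi\in\hat G} e^{-i\chi(\phi)}\sqrt{1-\abs*{\frac{\hr}{\hat f(\phi)}}^2}\,\hat h(\phi)}^2,
\]
which in particular does not depend on~$s$ (as already noted in \cref{th:lowerbound}). Hence it suffices to compute the expectation of this quantity when the values $\chi(\phi)$ are drawn independently and uniformly from $[0,2\pi)$.

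Next I would put the expression into the shape handled by \cref{lem_integral_unif}. Enumerate $\hat G=\{\phi_1,\dots,\phi_n\}$ with $n=|G|$ and set
\[
a = \frac{\hr}{R}, \qquad b_j = \frac{1}{|G|}\sqrt{1-\abs*{\frac{\hr}{\hat f(\phi_j)}}^2}\,\hat h(\phi_j).
\]
Here the square roots are real and nonnegative, since $|\hat f(\phi)|\ge\hr$ for an $(R,\hr)$-bounded function, so each $b_j$ is a well-defined complex number. Because $\chi(\phi)$ uniform on $[0,2\pi)$ implies that $-\chi(\phi)\bmod 2\pi$ is also uniform on $[0,2\pi)$, the random variable $e^{-i\chi(\phi_j)}$ has the same distribution as $e^{i\alpha_j}$ with $\alpha_1,\dots,\alpha_n$ independent and uniform on $[0,2\pi)$. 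Therefore
\[
\mathbb{E}_{\chi}\sof*{p(s)} = \frac{1}{(2\pi)^n}\int_0^{2\pi}\!\!\cdots\!\int_0^{2\pi}\abs*{a+\textstyle\sum_{j=1}^n b_j e^{i\alpha_j}}^2 \,d\alpha_1\cdots d\alpha_n.
\]

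Applying \cref{lem_integral_unif} directly gives $\mathbb{E}_{\chi}\sof*{p(s)} = |a|^2 + \sum_{j=1}^n |b_j|^2$. Substituting the values of $a$ and $b_j$ and rewriting the sum over the enumeration as a sum over $\hat G$ yields
\[
\left(\frac{\hr}{R}\right)^2 + \frac{1}{|G|^2}\sum_{\phi\in\hat G}\left(1-\abs*{\frac{\hr}{\hat f(\phi)}}^2\right)|\hat h(\phi)|^2,
\]
which is the claimed formula. There is essentially no serious obstacle: the proof is a direct application of \cref{lem_integral_unif} once the probability is written in the right form, and the only points needing (minor) care are the $s$-independence (inherited from \cref{th:lowerbound}), the fact that $e^{-i\chi}$ and $e^{i\alpha}$ have the same distribution for uniform phases, and the observation that the coefficients $b_j$ are genuine complex numbers because $|\hat f(\phi)|\ge\hr$.
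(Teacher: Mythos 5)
Your proposal is correct and follows essentially the same route as the paper: the paper likewise writes the conditional success probability (with $\theta=0$) as $\bigl|\hr/R+\sum_{\phi}F(\phi)e^{i\chi(\phi)}\bigr|^2$ with $F(\phi)=\frac{1}{|G|}\sqrt{1-|\hr/\hat f(\phi)|^2}\,\hat h(\phi)$, averages over the uniform phases via the law of total probability, and applies \cref{lem_integral_unif} to obtain $(\hr/R)^2+\sum_{\phi}|F(\phi)|^2$, which matches your $|a|^2+\sum_j|b_j|^2$. Your additional remarks on the sign of the phase and the nonnegativity of the square roots are fine but not essential.
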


\begin{proof}
For all $\phi\in\Ghat$, we set
\[
    F(\phi) = \frac{1}{|G|^{3/2}}\sum_{x\in G}\sqrt{1-\left|\frac{\hr}{\hat{f}(\phi)}\right|^2}\sqrt{1-\left|\frac{f(x)}{R}\right|^2}\phi(x).
\]
By \cref{Thm:approximate_algorithm}, the conditional probability of obtaining the shift~$s$ given $\chi$ equals
\[
    P(\ket{s}\ket{0}\mid\chi) = \left|\frac{\hr}{R} + \sum_{\phi\in\Ghat}F(\phi)e^{i\chi(\phi)}\right|^2.
\]
Using the law of total probability, the probability $P(\ket{s}\ket{0})$ writes
\[
    P(\ket{s}\ket{0}) = \int_\chi P(\ket{s}\ket{0}\mid\chi)d\chi.
\]
We evaluate the latter using \cref{lem_integral_unif} to get 
\begin{align*}
    P(\ket{s}\ket{0}) &= \left(\frac{\hr}{R}\right)^2+\sum_{\phi\in\Ghat}|F(\phi)|^2\\
    & = \left(\frac{\hr}{R}\right)^2 +\frac{1}{|G|^{2}} \sum_{\phi\in\Ghat}\left(1-\left|\frac{\hr}{\hat{f}(\phi)}\right|^2\right)\left|\frac{1}{|G|^{1/2}}\sum_{x\in G}\sqrt{1-\left|\frac{f(x)}{R}\right|^2}\phi(x)\right|^2\\
    & = \left(\frac{\hr}{R}\right)^2 +\frac{1}{|G|^{2}} \sum_{\phi\in\Ghat}\left(1-\left|\frac{\hr}{\hat{f}(\phi)}\right|^2\right)\left|\hat{h}(\phi)\right|^2.\qedhere
\end{align*}
\end{proof}

Finally, we consider the case where both the images $\theta(x)$ of the phase function $\theta$ and the images $\chi(\phi)$ of the phase function $\chi$ are chosen independently and uniformly randomly in the interval $[0,2\pi)$.

\begin{Thm}\label{thm:gene}
Assume that the images $\chi(\phi)$ and $\theta(x)$ of the respective phase functions $\chi$ and $\theta$ are chosen independently and uniformly randomly in the interval $[0, 2\pi)$.  Then the probability of obtaining the shift $s$ is independent of $s$ and is given by
\[
     p = \left(\frac{\hr}{R}\right)^2+ \frac{1}{|G|^{3}}\sum_{(\phi,x)\in(\Ghat,G)}\left(1-\left|\frac{\hr}{\hat{f}(\phi)}\right|^2\right)\left(1-\left|\frac{f(x)}{R}\right|^2\right).
\]
\end{Thm}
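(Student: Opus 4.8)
The plan is to start from the success-probability formula~\eqref{proba} of \cref{Thm:approximate_algorithm} and carry out the two averages — over $\chi$ and over $\theta$ — one after the other, rather than attempting to treat the double sum in~\eqref{proba} as a single family of independent phases. The key point is that in~\eqref{proba} the phase $e^{-i\chi(\phi)}$ attached to the term indexed by $(x,\phi)$ is shared by all $x$, and $e^{i\theta(x+s)}$ is shared by all $\phi$, so \cref{lem_integral_unif} cannot be applied directly to the double sum; it can, however, be applied to one variable at a time, using the tower property for the product of the two uniform measures.

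First I would condition on $\theta$ and average over $\chi$. Writing $h_s$ as in~\eqref{eq:hs}, the formula~\eqref{proba} takes the form
\[
p(s) = \left|\frac{\hr}{R} + \sum_{\phi\in\Ghat} G_\theta(\phi)\, e^{-i\chi(\phi)}\right|^2,
\qquad
G_\theta(\phi) = \frac{1}{|G|}\sqrt{1-\left|\frac{\hr}{\hf(\phi)}\right|^2}\,\hat h_s(\phi),
\]
which is exactly the regrouping used in the proof of \cref{thm:chi-unif}, only with the more general $h_s$ in place of $h$. Since the values $\chi(\phi)$ are i.i.d.\ uniform on $[0,2\pi)$, \cref{lem_integral_unif} gives
\[
\int_\chi p(s)\, d\chi = \left(\frac{\hr}{R}\right)^2 + \sum_{\phi\in\Ghat} |G_\theta(\phi)|^2
= \left(\frac{\hr}{R}\right)^2 + \frac{1}{|G|^2}\sum_{\phi\in\Ghat}\left(1-\left|\frac{\hr}{\hf(\phi)}\right|^2\right)\left|\hat h_s(\phi)\right|^2.
\]

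Next I would average this over $\theta$. For each fixed $\phi$, the change of variables $y=x+s$ turns $\hat h_s(\phi) = |G|^{-1/2}\sum_{x\in G}\phi(x)e^{i\theta(x+s)}\sqrt{1-|f(x)/R|^2}$ into $|G|^{-1/2}\sum_{y\in G}\phi(y-s)e^{i\theta(y)}\sqrt{1-|f(y-s)/R|^2}$, a sum of $|G|$ terms each carrying an independent uniform phase $e^{i\theta(y)}$ and with no constant term. Hence \cref{lem_integral_unif} (with $a=0$) yields $\int_\theta |\hat h_s(\phi)|^2\, d\theta = |G|^{-1}\sum_{y\in G}(1-|f(y-s)/R|^2) = |G|^{-1}\sum_{x\in G}(1-|f(x)/R|^2)$, which is independent of both $\phi$ and $s$. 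Substituting this back and using linearity of expectation gives
\[
p = \left(\frac{\hr}{R}\right)^2 + \frac{1}{|G|^3}\sum_{\phi\in\Ghat}\left(1-\left|\frac{\hr}{\hf(\phi)}\right|^2\right)\sum_{x\in G}\left(1-\left|\frac{f(x)}{R}\right|^2\right)
= \left(\frac{\hr}{R}\right)^2 + \frac{1}{|G|^3}\sum_{(\phi,x)\in(\Ghat,G)}\left(1-\left|\frac{\hr}{\hf(\phi)}\right|^2\right)\left(1-\left|\frac{f(x)}{R}\right|^2\right),
\]
which is the claimed expression, and is manifestly independent of $s$.

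The main obstacle I anticipate is purely bookkeeping: keeping the two sets of phase variables separated, justifying that the averages may be performed sequentially, and — most importantly — noticing that one must \emph{not} apply \cref{lem_integral_unif} to the full double sum. Once the order of integration is fixed, each step is a direct citation of \cref{lem_integral_unif}, and the disappearance of the $s$-dependence is exactly the translation-invariance observation $x\mapsto x+s$ already used implicitly in deriving~\eqref{proba}.
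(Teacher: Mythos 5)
Your argument is correct, and it reaches the stated formula by a slightly different route than the paper. The paper's proof introduces the two-family generalisation \eqref{eq:new} of \cref{lem_integral_unif} (an identity for $\bigl|a+\sum_{j,k}b_{j,k}e^{i\alpha_j}e^{i\beta_k}\bigr|^2$ averaged over all $\alpha_j,\beta_k$) and applies it in one shot to the double sum in \eqref{proba}, with $F(\phi,x)=|G|^{-3/2}\sqrt{1-|\hr/\hf(\phi)|^2}\sqrt{1-|f(x)/R|^2}$; the deterministic phases $\phi(x)$ and the relabeling $x\mapsto x+s$ are absorbed silently because they do not affect the moduli $|b_{j,k}|$. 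You instead integrate sequentially: conditioning on $\theta$ and averaging over $\chi$ via \cref{lem_integral_unif} reproduces the formula of \cref{thm:chi-unif} with $\hat h_s$ in place of $\hat h$, and then a second application of \cref{lem_integral_unif} (with $a=0$, after the change of variables $y=x+s$) gives $\int_\theta|\hat h_s(\phi)|^2\,d\theta=|G|^{-1}\sum_x\bigl(1-|f(x)/R|^2\bigr)$, independent of $\phi$ and $s$. The two computations are ultimately the same diagonal-terms-survive phenomenon, but your version has two small advantages: it needs only the already-proven one-parameter lemma (no appeal to \eqref{eq:new}, which the paper states without proof), and it makes explicit both why the answer is independent of $s$ and why the unit phases $\phi(x)$ drop out — points the paper's proof leaves implicit. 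The paper's version, in turn, is shorter once \eqref{eq:new} is granted and exhibits the answer directly as $(\hr/R)^2+\sum_{(\phi,x)}|F(\phi,x)|^2$. Your interchange of the two averages is justified, as you note, by Fubini (everything is a finite sum of bounded functions), so there is no gap.
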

\begin{proof}
Note that one can generalize \cref{lem_integral_unif} as follows: if $a$ and $\{b_{j,k}\}_{j,k=1}^n$ are also complex numbers, then
    \begin{equation}\label{eq:new}
  \frac{1}{(2\pi)^{2n}}
  \int_{0}^{2\pi} \dotsi \int_{0}^{2\pi}
  \int_{0}^{2\pi} \dotsi \int_{0}^{2\pi} \left|a+\sum_{j,k}b_{j,k}e^{i\alpha_j}e^{i\beta_k}\right|^2 d\alpha_1\cdots d\alpha_n d\beta_1\cdots d\beta_n= |a|^2 + \sum_{j,k}|b_{j,k}|^2 .
\end{equation}
By the law of total probability, the probability $P(\ket{s}\ket{0})$ is given by
\[
    P(\ket{s}\ket{0}) = \int_\theta\int_\chi P(\ket{s}\ket{0}\mid (\chi, \theta))d\chi d\theta.
\]
By proceeding as before and using \cref{eq:new}, we can infer that
\[
    P(\ket{s}\ket{0}) = \left(\frac{\hr}{R}\right)^2+\sum_{(\phi,x)\in(\Ghat,G)}|F(\phi,x)|^2\\
\]
with
\[
    F(\phi,x) = \frac{1}{|G|^{3/2}}\sqrt{1-\left|\frac{\hr}{\hat{f}(\phi)}\right|^2}\sqrt{1-\left|\frac{f(x)}{R}\right|^2}.\qedhere
\]
\end{proof}

\section{Discussion}\label{sec:Discussion}

\subsubsection*{Improving the success probability via amplitude amplification}

We can use amplitude amplification \cite{BHMT} to increase the `good' part of the output state in our approximate \cref{alg: approx1,alg: approx2,alg: generalised bent multid,alg:approx}, and thus the success probability of finding the hidden shift.
To do so, we assume that we are in one of the cases corresponding to \cref{thm: first gen,thm: second gen,thm: multidim gen,th:lowerbound,thm:chi-unif,thm:gene}, where the success probability $p$ is known.
In the case of \cref{alg: approx2,alg: generalised bent multid}, the algorithms need to be adapted slightly by storing the output of the checks for $x \in A+s$ and $\phi \in \hA$ (after the first application of $O_g$ and $O_{\hf}$, respectively) in separate qubits.  Instead of measuring these qubits and the last two registers, we use them to define the `good' states for the purpose of amplitude amplification.

We denote by $\mathcal{A}$ the unitary operator corresponding to the whole algorithm, excluding measurements.
Let $S_\chi$ be the operator that sends every good basis state $\ket{\psi}$ to $-\ket{\psi}$ and fixes the bad basis states. Let $S_0$ be the operator that sends the initial state $\ket{0}$ to $-\ket{0}$ and fixes all other basis states.
The amplification operator is then given by
\[
    Q = -\mathcal{A}S_0\mathcal{A}\ct S_\chi.
\]
Taking $\theta$ such that $\sin^2(\theta) = p$, we need $\lfloor\pi/4\theta\rfloor=\Theta\left(\frac{1}{\sqrt{p}}\right)$ applications of the amplification operator $Q$, and thus of the operators $\mathcal{A}$ and $\mathcal{A}\ct$, starting with the state $\mathcal{A}\ket{0}$.

\subsubsection*{Oracle for the Fourier transform}

A major weakness of our approach is that in \cref{problem: exact formulation} we require oracle access to the Fourier transform $\hf$ of $f$.
Indeed, in practice it might be expensive to compute $\hf$ from a classical description of $f$, and thus difficult to implement the oracle $O_{\hf}$ for $\hf$.
Nevertheless, in some cases the function $f$ and its Fourier transform are related in a simple way, allowing for an efficient implementation of $O_{\hf}$ from access to $O_f$.
For example, if $f$ is a primitive Dirichlet character or a multiplicative character of a finite field, then $\smash{\hf(y) = \overline{f(y)} \hf(1)}$, see \cref{eq:f hat and f}.
It would be nice if one could find ways of overcoming this limitation more generally.

\subsubsection*{Results on multidimensional bent functions}

In the one-dimensional case, bent functions have been studied in various settings and under various names such as biunimodular vectors, see \cref{rem:bent}. However, not much is known about multidimensional bent functions \cite{poinsot2005multidimensional} defined in \cref{def:multidim bent} and discussed further in \cref{apx:multidim bent}.
In particular, while biunimodular vectors have been classified for small $|G|$ \cite{fuhr2015}, including $|G| = 13$ \cite{Gabidulin2002}, no classification or even non-trivial constructions of multidimensional bent functions are known.

Classification of bent functions for small abelian groups can indicate the `density' of bent functions among all complex functions.
The closer a given function is to a bent function, the higher the success probability of our algorithms.
The more bent functions there are, the more functions are \emph{close} to one, thus having formal results on bent function density would inform us on how well our algorithms work in general.

\subsubsection*{Acknowledgments}
SA thanks the Quantum Software Consortium programme for research stay at Leiden University. PB was partially supported by the Dutch Research Council (NWO/OCW) as part of the Quantum Software Consortium programme (project number 024.003.037). MO was partially supported by the Dutch Research Council (NWO) Vidi grant (No.VI.Vidi.192.109). JS was supported by the Dutch Research Council (NWO) as part of the Quantum Software Consortium programme (project number 024.003.037) and the Quantum Delta NL programme (project number NGF.1623.23.023).

\bibliographystyle{alphaurl}
\bibliography{quantum}

\appendix

\section{Error analysis}\label{sec:error-analysis}

Throughout the article, it was assumed that the oracles 
\[O_g: \ket{x}\ket{0} \to \ket{x}\ket{g(x)}, \: O_{\hf}:\ket{\phi}\ket{0} \mapsto \ket{\phi} \ket{\hf(\phi)}\]
give access to the complex function value $g(x)$ without any approximation needed. In this section, we calculate the error of \cref{alg: approx1} when we take the approximation into account. Specifically, we assume that any number $y\in \C$ can be represented with $2n$ bits by $(\arg(y), |y|),$ where both real numbers are approximated up to precision $\delta=\frac{1}{2^n}$. 

\begin{Prop}
    Let $f: G \to \C$ be an $(r, R, \hr,\hR, A, \hA)$-bounded function and assume complex numbers can be stored up to precision $\delta = \frac{1}{2^n}$ in their argument and modulus. Assume that $n \gg \log(|G|)$. Then there exists a positive real number $C$ such that \cref{alg: approx2} with error finds the hidden shift with probability
    \[
    p(s) - C|G|^{1/2}\delta \leq p_e(s) \leq p(s)+C|G|^{1/2}\delta,
    \]
    where $p(s)$ is the probability without approximation given in \cref{eq: full probability}.
\end{Prop}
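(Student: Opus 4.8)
The plan is to re-run the proof of \cref{thm: second gen} essentially verbatim, with the exact values $g(x)$ and $\hf(\phi)$ replaced by the finite-precision values actually returned by the oracles, and then bound the resulting perturbation of the closed form \eqref{eq: full probability}. Write $\tilde g(x)$ and $\tilde{\hf}(\phi)$ for these stored values, so that their moduli and arguments lie within $\delta$ of those of $g(x)$ and $\hf(\phi)$. Using $|g(x)|\le R$ for $x\in A+s$ and $|\hf(\phi)|\in[\hr,\hR]$ for $\phi\in\hA$, the triangle inequality gives $|\tilde g(x)-g(x)|\le(1+R)\delta$ and $|\tilde{\hf}(\phi)-\hf(\phi)|\le(1+\hR)\delta$ on these index sets. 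The hypothesis $n\gg\log|G|$ guarantees that $\delta$ is far smaller than any fixed power of $1/|G|$; in particular $\delta<\hr/2$, so $|\tilde{\hf}(\phi)|\ge\hr/2>0$ for $\phi\in\hA$, and we may assume the indicator bits $\delta_{x-s\in A}$ and $\delta_{\phi\in\hA}$ are still computed correctly (if one does not wish to assume that the function values are separated from the thresholds $r,R,\hr,\hR$, one rounds the stored modulus so that it never leaves the interval it is being compared against). Consequently the two post-selections in \cref{alg: approx2} still succeed with probabilities $\alpha$ and $\halpha$ exactly as before.

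Next I would observe that only the ``good-part'' amplitudes of $U_1$ and $U_2^\dagger$ — namely $g(x)/R$ and $\hr/\overline{\hf(\phi)}$ — enter the success probability, since the ancilla registers carrying the ``bad'' components are post-selected to $\ket0$, while the Fourier transforms, the normalisations and the post-selections are all independent of the numerical precision. Repeating the computation leading to \cref{eq:p(s) double sum} with $\tilde g,\tilde{\hf}$ in place of $g,\hf$ therefore yields
\[
p_e(s)=\Bigl(\frac{\hr}{R}\Bigr)^2\,\abs{\tilde T}^2,
\qquad
\tilde T=\frac{1}{|G|^{3/2}}\sum_{\phi\in\hA}\sum_{x\in A+s}\phi(x)\overline{\phi(s)}\,\frac{\tilde g(x)}{\tilde{\hf}(\phi)},
\]
whereas $p(s)=(\hr/R)^2\abs{T}^2$ with $T$ the same double sum built from $g$ and $\hf$.

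It then remains to compare $T$ and $\tilde T$. Termwise, for $x\in A+s$ and $\phi\in\hA$,
\[
\abs*{\frac{\tilde g(x)}{\tilde{\hf}(\phi)}-\frac{g(x)}{\hf(\phi)}}
=\abs*{\frac{(\tilde g(x)-g(x))\hf(\phi)+g(x)(\hf(\phi)-\tilde{\hf}(\phi))}{\tilde{\hf}(\phi)\hf(\phi)}}
\le\frac{(1+R)\hR+R(1+\hR)}{\hr^2/2}\,\delta,
\]
which is $O(\delta)$ with a constant depending only on $R,\hR,\hr$. Since $|A|\le|G|$ and $|\hA|\le|G|$, summing the at most $|G|^2$ terms and dividing by $|G|^{3/2}$ gives $\abs{\tilde T-T}\le C'|G|^{1/2}\delta$. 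As $p(s)$ and $p_e(s)$ are probabilities, they are $\le1$, so $\abs T,\abs{\tilde T}\le R/\hr$, and hence
\[
\abs{p_e(s)-p(s)}=\Bigl(\frac{\hr}{R}\Bigr)^2\bigl|\abs{\tilde T}^2-\abs T^2\bigr|
\le\Bigl(\frac{\hr}{R}\Bigr)^2\bigl(\abs T+\abs{\tilde T}\bigr)\abs{\tilde T-T}
\le\frac{2\hr}{R}\,C'|G|^{1/2}\delta .
\]
Taking $C=2\hr C'/R$, a constant depending only on $r,R,\hr,\hR$, yields the claimed two-sided bound. The only step demanding genuine care is the one flagged above — controlling how the approximation affects the membership tests defining $A$ and $\hA$, and thus the post-selection probabilities; everything else is a routine propagation of an $O(\delta)$ perturbation through the explicit formula for $p(s)$, the factor $|G|^{1/2}$ arising solely because the defining double sum has $\Theta(|G|^2)$ terms against the prefactor $|G|^{-3/2}$.
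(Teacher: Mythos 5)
Your proof is correct and follows essentially the same route as the paper's: perturb the amplitudes entering the inner-product formula of \cref{thm: second gen} by $O(\delta)$, bound the perturbation of the double sum termwise, and absorb the $\le|G|^2$ terms against the $|G|^{-3/2}$ prefactor to obtain the $C|G|^{1/2}\delta$ deviation, finishing with the triangle inequality on $\bigl|\,|\tilde T|^2-|T|^2\bigr|$. The only (immaterial) difference is where the $O(\delta)$ error is placed: you put it on the stored raw values $g(x)$ and $\hf(\phi)$ and propagate it through the quotient $g/\hf$ (hence your constant also involves $\hR$ and needs the lower bound $|\tilde{\hf}(\phi)|\ge\hr/2$ plus the rounding caveat at the thresholds), whereas the paper assumes access to the normalized quantities $g/R$ and $\hr/\hf$ so the additive error sits directly on the good/bad amplitudes.
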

\begin{proof}We follow the notation from the proof of \cref{thm: second gen}. By considering oracle access to the functions $g/R$ and $\hr/\hf$, we may assume that all complex numbers have a modulus between $0$ and $1$. This means that invoking the oracle $ O_g $, applying the unitary $U_1$ from equation \cref{eq: U1 second gen} and then applying $O_g$ again introduces an error via 
\begin{align*}
\ket{\Phi(s)} = \frac{1}{|A|^{1/2}|G|^{1/2}}
    &\sum_{\phi \in \smash\hG} \sum_{x \in A+s}
    \phi(x) \ket{\phi,0,0}\left(
        \left(\frac{g(x)}{R} +\epsilon(x)\right)\ket{0}
        + \left(\sqrt{1-\left|\frac{g(x)}{R}\right|^2} +\epsilon'(x)\right)\ket{1}
    \right)
    \ket{0}.
\end{align*}
A similar error holds for $O_{\hf}$ in combination with $U_2$ from \cref{eq: U2 second gen}, which we will denote by $\hepsilon(\phi)$. Reading the circuit of  \cref{alg: approx2} from the right gives the state
\begin{equation*}
    \ket{\Psi(s)}  = \frac{1}{|\hA|^{1/2}}\sum_{\phi \in \smash\hA} \phi(s) \ket{\phi,0,0,0}\left(\left(\frac{\hr}{\overline{\hf(\phi)}}+\hepsilon(\phi)\right)\ket{0} + \left(\sqrt{1-\left|\frac{\hr}{\overline{\hf(\phi)}}\right|^2}+\hepsilon'(\phi)\right)\ket{1}\right).
\end{equation*}
The errors $\epsilon(x)$ and $\hepsilon(\phi)$ are complex number with a small norm. In particular, there is a constant $C_1>0$ such that $|\epsilon(x)|<C_1\delta$ and $|\hepsilon(\phi)|<C_1$ for all $x\in G, \phi \in \hG$.  Following the proof of \cref{thm: second gen} the inner product becomes
\begin{align*}
    \braket{\Psi(s)}{\Phi(s)}&=
    \frac{1}{|\hA|^{1/2}|A|^{1/2}|G|^{1/2}} \sum_{\phi \in \hA}\sum_{x \in A+s} \phi(x)\overline{\phi(s)}\left(\frac{g(x)}{R}+\epsilon(x)\right)\left(\frac{\hr}{\hf(\phi)}+\hepsilon(\phi)\right)\\
    &=\frac{1}{|\hA|^{1/2}|A|^{1/2}|G|^{1/2}} \sum_{\phi \in \hA}\sum_{x \in A+s} \phi(x)\overline{\phi(s)}\left(\frac{g(x)\hr}{\hf(\phi)R}+\hepsilon(\phi)\frac{g}{R}+ \epsilon(x)\frac{\hr}{\hf(\phi)}+\epsilon(x)\hepsilon(\phi)\right)\\
    &=\frac{1}{|\hA|^{1/2}|A|^{1/2}|G|^{1/2}} \sum_{\phi \in \hA}\sum_{x \in A+s} \phi(x)\overline{\phi(s)}\frac{g(x)\hr}{\hf(\phi)R} + \frac{|\hA|^{1/2}|A|^{1/2}}{|G|^{1/2}} O(\epsilon).\\
    &= S + \frac{1}{|\hA|^{1/2}|A|^{1/2}|G|^{1/2}} \sum_{\phi \in \hA}\sum_{x \in A+s} \phi(x)\overline{\phi(s)}\left(\hepsilon(\phi)\frac{g}{R}+ \epsilon(x)\frac{\hr}{\hf(\phi)}+\epsilon(x)\hepsilon(\phi)\right)\\
    &=S + E,
\end{align*}
where $S$ is the sum in \cref{eq:p(s) double sum} obtained in the version without approximation. The double sum denoted by $E$ is a complex number of  which we can estimate the norm. Using the triangle inequality and the fact that $|A|,|\hA| \leq |G|$ as well as $\left|\frac{g(x)}{R}\right|, \left|\frac{\hr}{\hf(\phi)}\right|\leq1$ we obtain
\begin{align*}
    |E| &= \left| \frac{1}{|\hA|^{1/2}|A|^{1/2}|G|^{1/2}}\right. \left.\sum_{\phi \in \hA}\sum_{x \in A+s} \phi(x)\overline{\phi(s)}\left(\hepsilon(\phi)\frac{g}{R}+ \epsilon(x)\frac{\hr}{\hf(\phi)}+\epsilon(x)\hepsilon(\phi)\right) \right| \\
    &\leq \frac{1}{|A|^{1/2}|\hA|^{1/2}|G|^{1/2}}\sum_{\phi \in \hA}\sum_{x \in A+s}\left(|\epsilon(x)|+|\hepsilon(\phi)|+ |\epsilon(x)||\hepsilon(\phi)|\right)\\
    &\leq |G|^{1/2}(2C_1\delta + C_1^2\delta^2) \leq |G|^{1/2}C_2\delta, 
\end{align*}
for some other constant $C_2>0$ under the assumption that $\delta = \frac{1}{2^n}$ is small enough. The total probability to find the hidden shift $s$ with approximation errors is given by 
\[
    p_e(s) = \alpha\halpha\left|S+E\right|^2.
\]
Using the (inverse) triangle inequality we find that 
\begin{align*}
    p_e(s) &\geq \alpha\halpha(|S| - |E|)^2 = p(s) - 2p(s)^{1/2}(\alpha\halpha)^{1/2}|E| + \alpha\halpha|E|^2 \geq p(s) - C|G|^{1/2}\delta,\\
    p_e(s) &\leq \alpha\halpha(|S| + |E|)^2 = p(s) + 2p(s)^{1/2}(\alpha\halpha)^{1/2}|E| + \alpha\halpha|E|^2\leq p(s) + C|G|^{1/2}\delta,
\end{align*}
for some new constant $C>0$.
\end{proof}

\section{Multidimensional bent functions}\label{apx:multidim bent}

In \cref{sec: multi} we presented a quantum algorithm for the hidden shift problem of multidimensional complex bent functions. However, very little is known about the existence and classification of these functions. In this section, we present some examples of multidimensional bent functions and discuss their properties.

The only reference on multidimensional bent functions known to us \cite{poinsot2005multidimensional} provides two constructions:
\begin{enumerate}
    \item The \emph{concatenation construction}. Let $f_i: G\to \C$ with $1\leq i\leq d$ be a family of $1$-dimensional bent functions. Then for any vector $u \in \C^d$ of norm $1$ the function 
    \begin{align*}
        f: G&\to \C^d,\\
        x &\mapsto (u_1 f_1(x), \dots, u_d f_d(x))
    \end{align*}
    is $d$-dimensionally bent.
    \item The \emph{disjoint support construction}. The function $f: G \to \C^{|G|}$ given by $f(g) = e_g$ is a $|G|$-dimensional bent function.
\end{enumerate}
The concatenation construction can be extended to arbitrary dimensions.

\begin{Lem}[Concatenation construction]
    \label{lem: concatenation construction}
    Let $f^{(i)}: G\to \C^{d_i}, \:1\leq i\leq n$ be a collection of bent functions and let $u \in \C^n$ be a vector such that $\norm{u}^2=1$. Then 
    \begin{align*}
    f: G&\to \C^{d_1+\dots+d_n},\\
    x &\mapsto u_1f^{(1)}(x)\oplus\dots\oplus u_nf^{(n)}(x)
    \end{align*}
    is an $(d_1+\dots+d_n)$-dimensional bent function.
\end{Lem}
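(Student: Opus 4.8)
The plan is to verify the two conditions in \cref{def:multidim bent} by a direct computation, exploiting only that concatenation interacts trivially with the Euclidean norm and with the coordinate-wise Fourier transform of \cref{Def:multidim Fourier}.

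First I would set up notation by indexing the coordinates of $\C^{d_1+\dots+d_n}$ by pairs $(i,j)$ with $1\le i\le n$ and $0\le j\le d_i-1$, so that the concatenation $f(x) = u_1f^{(1)}(x)\oplus\dots\oplus u_nf^{(n)}(x)$ has coordinates $f(x)_{(i,j)} = u_i\, f^{(i)}(x)_j$. Then
\[
\norm{f(x)}^2 = \sum_{i=1}^n\sum_{j=0}^{d_i-1}\abs{u_i}^2\,\abs{f^{(i)}(x)_j}^2 = \sum_{i=1}^n \abs{u_i}^2\,\norm{f^{(i)}(x)}^2 = \sum_{i=1}^n\abs{u_i}^2 = \norm{u}^2 = 1,
\]
where the third equality uses that each $f^{(i)}$ is bent (so $\norm{f^{(i)}(x)}=1$) and the last uses the hypothesis $\norm{u}^2=1$. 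This gives the first condition.

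Next I would compute $\hat f$. Since the one-dimensional Fourier transform of \cref{def: Fourier tranform} is $\C$-linear in the function and the multidimensional transform is applied coordinate-wise, we get $\hat f(\phi)_{(i,j)} = u_i\,\hat f^{(i)}(\phi)_j$ for every $\phi\in\hG$; that is, $\hat f(\phi) = u_1\hat f^{(1)}(\phi)\oplus\dots\oplus u_n\hat f^{(n)}(\phi)$. Repeating the same Pythagorean computation,
\[
\norm{\hat f(\phi)}^2 = \sum_{i=1}^n \abs{u_i}^2\,\norm{\hat f^{(i)}(\phi)}^2 = \sum_{i=1}^n\abs{u_i}^2 = 1,
\]
again using bentness of the $f^{(i)}$ (so $\norm{\hat f^{(i)}(\phi)}=1$). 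Hence both conditions of \cref{def:multidim bent} are satisfied and $f$ is bent.

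There is essentially no hard step here: the construction in \cite{poinsot2005multidimensional} is recovered as the special case $d_1=\dots=d_n=1$, and the only point that needs care is keeping the block index bookkeeping $(i,j)$ explicit, so that the commutation of the Fourier transform with the direct sum is visibly justified; everything else is a two-line norm computation on each side of the transform.
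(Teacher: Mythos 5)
Your proposal is correct and follows essentially the same route as the paper's proof: verify $\norm{f(x)}=1$ by the Pythagorean decomposition of the direct sum, then use that the coordinate-wise Fourier transform commutes with concatenation to get $\norm{\hat f(\phi)}=1$. Your version is in fact slightly more careful, writing $\abs{u_i}^2$ where the paper writes $u_i^2$ (the correct weight for complex $u_i$) and making the linearity/block-bookkeeping explicit.
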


\begin{proof}
The norm of $f(x)$ is given by 
\[\norm{f(x)}^2 = \sum_{i=1}^n u_i^2\norm{f^{(i)}}^2 = \sum_{i=1}^n u_i^2=1.\]
As the Fourier transform is performed coordinate-wise, the norm of $\hf(\phi)$ is given by 
\[\norm{\hf(\phi)}^2= \sum_{i=1}^n u_i^2 \norm{\hf^{(i)}}^2 = \sum_{i=1}^{n}u_i^2=1,\]
from which we can conclude that $f$ is bent.
\end{proof}

When having the hidden shift problem in mind, the concatenation construction does not give any interesting bent functions. These functions are already bent (up to a scalar) in the first coordinate. The hidden shift can then be found using the one-dimensional algorithm. In fact, the hidden shift problem for many multidimensional bent functions can be reduced to the one-dimensional case.

\begin{Lem}
    Let $G$ be a finite abelian group. The action of $\U{\C^d}$ on the space of $d$-dimensional bent functions $f: G \to \C^d$ given by
    \[
        (Uf)(x) := U \begin{pmatrix}
            f_1(x)\\\vdots\\f_d(x)
        \end{pmatrix}\in \C^d \text{ for all } U\in \U{\C^d}
    \]
    defines an equivalence relation.
\end{Lem}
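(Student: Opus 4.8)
The plan is to verify that the coordinate-wise $\U{\C^d}$-action preserves bentness, and then observe that the resulting orbit relation automatically satisfies the three axioms of an equivalence relation because $\U{\C^d}$ is a group. Concretely, I would declare $f \sim g$ precisely when there exists $U \in \U{\C^d}$ with $g = Uf$, and establish (a) that $Uf$ is again a $d$-dimensional bent function whenever $f$ is, and (b) that $\sim$ is reflexive, symmetric, and transitive.

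First I would check well-definedness. Since $U$ is unitary, $\norm{(Uf)(x)} = \norm{U f(x)} = \norm{f(x)} = 1$ for all $x \in G$. For the Fourier transform, the key observation is that the coordinate-wise transform of \cref{Def:multidim Fourier} commutes with applying $U$, by linearity of $U$:
\[
    \widehat{Uf}(\phi) = \frac{1}{|G|^{1/2}} \sum_{x \in G} \phi(x)\, U f(x) = U\left( \frac{1}{|G|^{1/2}} \sum_{x \in G} \phi(x) f(x) \right) = U\hf(\phi).
\]
Hence $\norm{\widehat{Uf}(\phi)} = \norm{U\hf(\phi)} = \norm{\hf(\phi)} = 1$ for all $\phi \in \hG$, so $Uf$ is bent and the action is well-defined on the set of $d$-dimensional bent functions.

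Then the three axioms follow from the group structure: reflexivity from $f = If$ with $I$ the identity of $\U{\C^d}$; symmetry because $g = Uf$ implies $f = U^{-1}g$ with $U^{-1} = U\ct \in \U{\C^d}$; and transitivity because $g = Uf$ together with $h = Vg$ gives $h = (VU)f$ with $VU \in \U{\C^d}$. In each case the function produced lands again in the set of bent functions by the well-definedness step, so $\sim$ is a genuine equivalence relation on that set.

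There is essentially no obstacle here: the only point that requires a moment's care is that the (coordinate-wise) Fourier transform intertwines the $\U{\C^d}$-action, which is immediate from linearity. I would also remark that, as noted in the surrounding discussion, this shows that the hidden shift problem for $f$ and for $Uf$ are equivalent, since applying $U$ (respectively $U\ct$) in the $\C^d$-register of the relevant oracles converts one instance into the other.
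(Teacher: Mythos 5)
Your proposal is correct and follows essentially the same route as the paper: check that $U$ preserves the pointwise norms and commutes with the coordinate-wise Fourier transform (so the action is well-defined on bent functions), and then derive reflexivity, symmetry, and transitivity from the group structure of $\U{\C^d}$. The paper's proof is just a more condensed version of the same argument.
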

\begin{proof}
    For a unitary matrix $U$, the norm $\norm{Uf(x)}$ equals $1$ for all $x\in G$. Because
    \[\widehat{Uf}(\phi) = \frac{1}{|G|^{1/2}} \sum_{x\in G} \phi(x)(Uf)(x) = U\frac{1}{|G|^{1/2}} \sum_{x\in G} \phi(x)f(x) = U\hf(\phi)\]
    it holds that $\norm{\widehat{Uf}(\phi)}=1$ for all $\phi\in \hG$, making $Uf$ a bent function. We relate $f$ and $g$ if there is a $U\in \U{\C^d}$ such that $f = Ug$. This is an equivalence relation because $\U{\C^d}$ contains the identity matrix and multiplicative inverses, and is closed under multiplication.
\end{proof}
\begin{Def}[Equivalence of multidimensional bent functions]
    Let $(G,e_G)$ be a finite abelian group and $d\geq 1$ an integer. Define $B_d(G)$ to be the set of equivalence classes of $d$-dimensional bent functions on $G$.
\end{Def}
\begin{Def}[Concatenated multidimensional bent function]
    We call an element $f \in B_d(G)$ \emph{concatenated} if it is equivalent to a bent function obtained from the concatenation construction.
\end{Def}
\begin{Ex}
    Let $G = \Z/3\Z$ and recall the 2-dimensional bent function $f$ given in \cref{Ex: 2-dim bent function}. This function is equivalent to
    \[f': = Uf, \text{ with } U=\frac{1}{\sqrt{2}}\begin{pmatrix}
        1&1\\
        -1&1
    \end{pmatrix},\quad \begin{array}{c|ccc}
    &0&1&2\\
    \hline
    f'_1(x)&\frac{1}{\sqrt{2}}&\frac{\omega}{\sqrt{2}}&\frac{1}{\sqrt{2}}\\
    f'_2(x)&\frac{1}{\sqrt{2}}&\frac{\omega^2}{\sqrt{2}}&\frac{1}{\sqrt{2}}
\end{array}. \]
The new function $f'$ is bent in both coordinates (scaled with $\frac{1}{\sqrt{2}}$) and thus a concatenated bent function.
\end{Ex}
The above example implies that the quantum algorithms in \cref{sec: multi} are not necessary for the hidden shift problem of this function, because applying the correct unitary reduces it to the 1-dimensional case. The following question then arises: 
are all multidimensional bent functions concatenated?

\subsection{Gram matrices}

In order to answer the question posed above, we need to study bent functions up to a suitable notion of equivalence.
As the action of unitary matrices can be seen as a basis change on $\C^d$, it will be easier to study the Gram matrix of a bent function. We first give a key property of multidimensional bent functions and then make an identification with Gram matrices.

\begin{Lem}[\protect{\cite[Proposition~2]{poinsot2005multidimensional}}] \label{lem: multi bent derivative}
    Let $f: G\to \C^d$ be a complex function for which $\norm{f(x)}=1$ for all $x\in G$. Then $f$ is bent if and only if $\sum_{x\in G}f(x)\overline{f(x+a)}=0$ for all $a\neq e_G$.
\end{Lem}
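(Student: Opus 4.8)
The plan is to move to the Fourier side and identify the autocorrelation with a function whose Fourier transform is $\phi\mapsto\norm{\hf(\phi)}^2$, after which the claim reduces to the orthogonality relations for characters. Write
\[
c(a) \;=\; \sum_{x\in G} f(x)\overline{f(x+a)} \;=\; \sum_{x\in G}\sum_{i=0}^{d-1} f_i(x)\overline{f_i(x+a)}
\]
for the (scalar-valued) autocorrelation, where the shorthand $f(x)\overline{f(x+a)}$ denotes the Hermitian inner product. First I would note that the hypothesis $\norm{f(x)}=1$ forces $c(e_G)=\sum_{x\in G}\norm{f(x)}^2=|G|$, so the substance of the lemma concerns only the values $c(a)$ for $a\neq e_G$.

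Next I would compute the Fourier transform $\hat c:\hG\to\C$ of $c$. Starting from $\hat c(\phi)=|G|^{-1/2}\sum_{a\in G}\phi(a)c(a)$, substituting $y=x+a$ and using $\phi(a)=\phi(y)\overline{\phi(x)}$ (valid since $\abs{\phi(x)}=1$), the double sum factors over the coordinate index $i$; using $\sum_{x\in G}\overline{\phi(x)}f_i(x)=|G|^{1/2}\hf_i(\phi^{-1})$ together with the fact that $\sum_{y\in G}\phi(y)\overline{f_i(y)}$ is the complex conjugate of this, one obtains
\[
\hat c(\phi) \;=\; |G|^{1/2}\sum_{i=0}^{d-1}\abs{\hf_i(\phi^{-1})}^2 \;=\; |G|^{1/2}\norm{\hf(\phi^{-1})}^2
\qquad\text{for all }\phi\in\hG,
\]
where $\phi^{-1}$ denotes the inverse character; since we will range over all of $\hG$, the occurrence of $\phi^{-1}$ rather than $\phi$ is harmless.

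From this identity the equivalence falls out. If $f$ is bent, then $\norm{\hf(\phi^{-1})}=1$ for every $\phi$, so $\hat c$ is the constant function $|G|^{1/2}$; applying the inverse Fourier transform and orthogonality of characters gives $c(a)=\sum_{\phi\in\hG}\overline{\phi(a)}=|G|\,\delta_{a,e_G}$, hence $c(a)=0$ for all $a\neq e_G$. Conversely, if $c(a)=0$ for all $a\neq e_G$, then $\hat c(\phi)=|G|^{-1/2}c(e_G)=|G|^{1/2}$ for every $\phi$ (using once more that $c(e_G)=|G|$), so $\norm{\hf(\phi)}^2=1$ for all $\phi\in\hG$, which combined with $\norm{f(x)}=1$ is exactly the statement that $f$ is bent (\cref{def:multidim bent}).

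The only point requiring care is the bookkeeping in the change of variables — tracking $\phi$ versus $\phi^{-1}$ and the normalisation factors $|G|^{\pm1/2}$ from \cref{def: Fourier tranform} — but there is no genuine obstacle: this is the coordinate-wise version of the classical CAZAC/biunimodular characterisation recalled in \cref{rem:bent}, the passage from $\C$ to $\C^d$ contributing only an inner sum over coordinates.
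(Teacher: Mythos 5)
Your argument is correct and complete: the identity $\hat c(\phi)=|G|^{1/2}\norm{\hf(\phi^{-1})}^2$ for the autocorrelation $c(a)=\sum_{x\in G}\ip{f(x)}{f(x+a)}$ checks out with the paper's normalisation of the Fourier transform, and both directions of the equivalence then follow from character orthogonality exactly as you say (the occurrence of $\phi^{-1}$ is indeed harmless since inversion permutes $\hG$). Note that the paper does not prove \cref{lem: multi bent derivative} itself but imports it from Poinsot's Proposition~2; your Wiener--Khinchin-style computation is the standard proof of that statement, so you have in effect supplied the missing argument rather than deviated from one.
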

\begin{Def}\label{Def: Gram matrices} Define $C_d(G)$ to be the set of $|G|\times|G|$ matrices $M$ that satisfy
    \begin{enumerate}
        \item $M$ is a hermitian and positive semidefinite matrix of rank $\leq d$,
        \item $M_{x,x} = 1$ for all $x\in G$,
        \item $M$ satisfies $\sum_{x\in G} M_{x,x+a} = 0$ when $a\neq e_G$.    
    \end{enumerate}
\end{Def}
\begin{Prop}
    For a finite abelian group $G$ and an integer $d\geq1$, there is a one-to-one correspondence between $B_d(G)$ and $C_d(G)$.
\end{Prop}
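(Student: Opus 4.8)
The plan is to define explicit, mutually inverse maps between the two sets. To a $d$-dimensional bent function $f=(f_0,\dots,f_{d-1}):G\to\C^d$ I assign its \emph{Gram matrix} $M(f)$, the $|G|\times|G|$ matrix with entries $M(f)_{x,y}=\sum_{i=0}^{d-1}f_i(x)\overline{f_i(y)}$. The first step is to check $M(f)\in C_d(G)$. Being a Gram matrix of vectors in $\C^d$, $M(f)$ is Hermitian, positive semidefinite, and of rank at most $d$, which is condition~(1); condition~(2) holds since $M(f)_{x,x}=\norm{f(x)}^2=1$ because $f$ is bent; and condition~(3) holds since $\sum_{x\in G}M(f)_{x,x+a}=\sum_{x\in G}\sum_i f_i(x)\overline{f_i(x+a)}=0$ for $a\ne e_G$, which is exactly \cref{lem: multi bent derivative}. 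Because a unitary $U\in\U{\C^d}$ preserves the Hermitian inner product on $\C^d$, we get $M(Uf)=M(f)$, so $M$ is constant on equivalence classes and descends to a map $\overline M:B_d(G)\to C_d(G)$.

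For surjectivity of $\overline M$, let $M\in C_d(G)$. By the spectral theorem the Hermitian positive semidefinite matrix $M$, of rank $d'\le d$, can be written as $M=\sum_{j=1}^{d'}b_j b_j\ct$ with $b_1,\dots,b_{d'}\in\C^{|G|}$, so $M_{x,y}=\sum_{j=1}^{d'}(b_j)_x\overline{(b_j)_y}$. Setting $f_i(x)=(b_{i+1})_x$ for $0\le i\le d'-1$ and $f_i\equiv 0$ for $d'\le i\le d-1$ gives a function $f:G\to\C^d$ with $M(f)=M$. It is bent: $\norm{f(x)}^2=M_{x,x}=1$, and $\sum_{x\in G}\sum_i f_i(x)\overline{f_i(x+a)}=\sum_{x\in G}M_{x,x+a}=0$ for $a\ne e_G$, so \cref{lem: multi bent derivative} applies. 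Hence $\overline M([f])=M$.

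For injectivity, suppose $f,g:G\to\C^d$ are bent with $M(f)=M(g)$; I must produce $U\in\U{\C^d}$ with $g=Uf$. Let $V_f,V_g\subseteq\C^d$ be the spans of $\set{f(x):x\in G}$ and $\set{g(x):x\in G}$. For any $(c_x)_{x\in G}\in\C^{|G|}$ the quantity $\bigl\|\sum_{x\in G}c_x f(x)\bigr\|^2$ depends on $(c_x)$ only through $M(f)$, hence equals $\bigl\|\sum_{x\in G}c_x g(x)\bigr\|^2$; in particular $\sum_x c_x f(x)=0$ iff $\sum_x c_x g(x)=0$, and lengths are preserved. Therefore the assignment $f(x)\mapsto g(x)$ extends to a well-defined linear isometry $U_0:V_f\to V_g$, and $\dim V_f=\dim V_g$. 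Choosing any isometry $V_f^\perp\to V_g^\perp$ between the (equidimensional) orthogonal complements in $\C^d$ and combining it with $U_0$ yields a unitary $U\in\U{\C^d}$ with $Uf(x)=g(x)$ for all $x$, i.e.\ $g=Uf$. Thus $\overline M$ is a bijection between $B_d(G)$ and $C_d(G)$.

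I do not expect any step to be deep. The points that need care are the conventions for which argument of the Hermitian form is conjugated (so that condition~(3) of \cref{Def: Gram matrices} lines up with \cref{lem: multi bent derivative}) and the rank/zero-padding bookkeeping in the surjectivity step. The most substantive ingredient is the extension of an isometry between subspaces of $\C^d$ to a unitary of the ambient space, but this is a standard fact from linear algebra: configurations of vectors with the same Gram matrix differ by a unitary.
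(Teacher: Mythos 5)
Your proof is correct and follows essentially the same route as the paper: assign to a bent function its Gram matrix, verify the three conditions via \cref{lem: multi bent derivative}, use unitary invariance of Gram matrices for well-definedness, factor a PSD matrix of rank $\le d$ (padding with zeros) for surjectivity, and use the fact that configurations with equal Gram matrices differ by a unitary for injectivity. The only difference is cosmetic: the paper cites this last fact from Horn--Johnson, whereas you prove it directly by extending the induced isometry between the spans to a unitary of $\C^d$.
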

\begin{proof}
    Let $f: G\to\C^d$ be a bent function. Consider the Gram matrix $M$ obtained from the $d$ vectors $(f_i(x))_{x\in G} \in \C^{|G|}$. That is, 
    $$M = F^{\dagger}F \text{, where }F = \begin{pmatrix}
        \dots&f_1(x)&\dots\\
        \dots&f_2(x)&\dots\\
        &\vdots&\\
        \dots&f_d(x)&\dots
    \end{pmatrix}.$$
By construction, $M$ is a Gram matrix of a set of $d$ row vectors and thus hermitian and positive semidefinite. The entries can be written as $M_{xy} = \sum_{i=1}^{d} f_i(x)\overline{f_i(y)}$. Because $f$ is a bent function, it follows that $M_{x,x}=1$ for all $x \in G$. By \cref{lem: multi bent derivative}, the last criterion is also satisfied so that $M \in C_d(G)$. The action of a unitary matrix $U \in \U{\C^d}$ on $f$ changes the column vectors of $F$ to those of $UF$. The Gram matrix then remains
$$(UF)^{\dagger}UF = F^{\dagger}U^{\dagger}UF = F^{\dagger}F = M.$$
We conclude that the map
\[\phi: B_d(G) \to C_d(G): \quad f \mapsto F^{\dagger}F\]
is well defined.

Conversely, any hermitian positive semidefinite matrix of rank $\leq d$ is the Gram matrix $M = F^{\dagger}F$ of a set of $d$ vectors in $\C^{|G|}$ (if the rank is lower than $d$ we can add zeros as rows). We can now construct a function by defining $f$ as the column vectors of $F$. The obtained function $f$ is bent precisely when $M$ satisfies the conditions of \cref{Def: Gram matrices}. If there is another matrix $G$ consisting of $d$ vectors in $\C^{|G|}$ such that $M=G^{\dagger}G$, then it must hold that $F = UG$ for some unitary matrix $U$ \cite[Theorem 7.3.11]{horn2012matrix}. It follows that the map
\[\psi: C_d(G) \to B_d(G), \quad M= F^{\dagger}F \mapsto f\]
is well-defined and the inverse of $\phi$, giving the desired bijection.
\end{proof}

In terms of Gram matrices, concatenation as in \cref{lem: concatenation construction} means the following.
\begin{Coro}
    A bent function $f \in B_{n+m}(G)$ is a concatenation of functions in $B_{n}(G)$ and $B_{m}(G)$ if and only if the corresponding Gram matrix $M$ can be written as $M = tM_1 + (1-t)M_2$ for some $M_1\in C_{n}(G), M_2\in C_m(G)$ and $t\in [0,1]$. In this case we call $M$ \emph{concatenated}.
\end{Coro}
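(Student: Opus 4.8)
The plan is to transport both implications across the bijection $\phi\colon B_d(G)\to C_d(G)$, $f\mapsto F\ct F$, of the preceding proposition, using that the Gram matrix is a unitary invariant and hence depends only on the equivalence class of a bent function.

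($\Rightarrow$) Suppose $f\in B_{n+m}(G)$ is concatenated, i.e.\ equivalent to $h\colon x\mapsto u_1 f^{(1)}(x)\oplus u_2 f^{(2)}(x)$ for some $f^{(1)}\in B_n(G)$, $f^{(2)}\in B_m(G)$ and $u=(u_1,u_2)\in\C^2$ with $\norm{u}^2=1$ (see \cref{lem: concatenation construction}). Writing $M=\phi(f)=\phi(h)$ and $M_i=\phi(f^{(i)})$, I would compute the $(x,y)$-entry of the Gram matrix of $h$ directly: since the coordinates of $h(x)$ split into a first block $u_1 f^{(1)}(x)$ and a second block $u_2 f^{(2)}(x)$, the inner product defining $M_{xy}$ splits as $M_{xy}=\abs{u_1}^2 M_{1,xy}+\abs{u_2}^2 M_{2,xy}$. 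Setting $t=\abs{u_1}^2\in[0,1]$ gives $M=tM_1+(1-t)M_2$ with $M_1\in C_n(G)$ and $M_2\in C_m(G)$.

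($\Leftarrow$) Conversely, suppose $M=tM_1+(1-t)M_2$ with $M_1\in C_n(G)$, $M_2\in C_m(G)$ and $t\in[0,1]$. Pick factorizations $M_1=F_1\ct F_1$ and $M_2=F_2\ct F_2$ with $F_1\in\C^{n\times|G|}$, $F_2\in\C^{m\times|G|}$ (padding with zero rows when the rank is smaller), and let $f^{(1)}=\psi(M_1)\in B_n(G)$, $f^{(2)}=\psi(M_2)\in B_m(G)$ be the associated bent functions, whose coordinate functions are the rows of $F_1$ and $F_2$. Then the stacked matrix
\[
F=\mx{\sqrt{t}\,F_1 \\ \sqrt{1-t}\,F_2}\in\C^{(n+m)\times|G|}
\]
satisfies $F\ct F=tF_1\ct F_1+(1-t)F_2\ct F_2=M$, so by the proposition the function whose coordinate functions are the rows of $F$ — which is exactly $x\mapsto \sqrt{t}\,f^{(1)}(x)\oplus\sqrt{1-t}\,f^{(2)}(x)$ — lies in the equivalence class $\psi(M)=f$. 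Taking $u=(\sqrt{t},\sqrt{1-t})$, which has norm $1$, \cref{lem: concatenation construction} then exhibits $f$ as concatenated.

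No step is genuinely difficult; the points needing care are purely bookkeeping. One must check that $f^{(1)},f^{(2)}$ really are bent functions of dimensions $n$ and $m$ (immediate from $M_1\in C_n(G)$, $M_2\in C_m(G)$ via the bijection, the autocorrelation condition being handled by \cref{lem: multi bent derivative}), keep the ``up to unitary equivalence'' identifications consistent when passing between functions and their Gram matrices, and deal with the boundary cases $t\in\{0,1\}$, where one block drops out and the missing slot is filled by an arbitrary element of $B_n(G)$ or $B_m(G)$ — these are nonempty for all $n,m\ge1$ since one can iterate \cref{lem: concatenation construction} on one-dimensional bent functions. The mild conjugation-convention difference between $F\ct F$ and $\sum_i f_i(x)\overline{f_i(y)}$ is harmless, as all matrices involved are Hermitian.
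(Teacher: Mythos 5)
Your proposal is correct, and it is essentially the argument the paper intends: the corollary is stated without proof as an immediate consequence of the preceding proposition, and your two directions (splitting the Gram matrix of a two-block concatenation with $t=\abs{u_1}^2$, and conversely stacking $\sqrt{t}\,F_1$ over $\sqrt{1-t}\,F_2$ and invoking uniqueness of the factorization up to a unitary) are exactly the omitted bookkeeping. Your side remarks on the conjugation convention and the boundary cases $t\in\{0,1\}$ are harmless; note that nonemptiness of $B_n(G)$, $B_m(G)$ never needs a separate argument since representatives are supplied by the hypotheses in both directions.
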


\subsection{The case \texorpdfstring{$G=\Z/3\Z$}{G = Z/3Z}}

For the small cyclic group $\Z/3\Z$ we will show some examples of bent functions in dimension $d=1,2$. First, we classify all one-dimensional bent functions and then we show that not all 2-dimensional ones are equivalent to concatenation constructions. Since $\U{\C} = \{u\in \C: |u|=1\}$ we can rephrase
\begin{align*}
    B_1(\Z/3\Z) &= \{f: \Z/3\Z \to C: f \text{ bent and } f(0)=1\}\\
    &\cong\{f = (1, z_1, z_2)\in \C^3: |z_1| = |z_2| = 1, z_1 + \overline{z_2} + \overline{z_1}z_2 = 0)\} &&\protect{(\text{\cref{lem: multi bent derivative}})}\\
    &=\{f = (1, x_1+iy_1, x_2+iy_2) : x_1^2+y_1^2=x_2^2+y_2^2=1,\\ 
    &\hspace{2cm} x_1+x_2 + x_1x_2 + y_1y_2 = y_1-y_2 +x_1y_2 - x_2y_1 =0\}\\
    &= \{(1,1,\omega), (1, 1, \omega^2), (1, \omega, 1), (1,\omega, \omega^2), (1,\omega^2, 1), (1, \omega^2, \omega)\}. &&(\omega = e^{2\pi i/3})
\end{align*}
Equivalently, 
\[C_1(\Z/3\Z) = \left\{\begin{pmatrix}
    1&a&b\\
    \overline{a}&1&\overline{a}b\\
    \overline{b}&a\overline{b}&1
\end{pmatrix}: (1,a,b) \in B_1  (\Z/3\Z)\right\}\]
and all concatenated matrices in $C_2(G)$ are of the form $tM_1 + (1-t)M_2$ with $t\in [0,1]$ and $M_i\in C_1(\Z/3\Z).$

\begin{Ex}
    Consider the matrix
    \[M:= \begin{pmatrix}
        1&\frac{e^{2\pi i a}}{\sqrt{2}}&-\frac{e^{-2\pi i a}}{\sqrt{2}}\\
        \frac{e^{-2\pi i a}}{\sqrt{2}}&1&0\\
        -\frac{e^{2\pi i a}}{\sqrt{2}}&0&1
    \end{pmatrix}\]
with $a\in \R$. It can be readily checked that $M$ lies in $C_2(\Z/3\Z)$ and it has rank $2$ with eigenvalues $0,1,2$. The corresponding bent function (up to equivalence) is given by 
\[f: \Z/3\Z \to \C^2, \qquad \begin{array}{c|ccc}
    &0&1&2\\
    \hline
    f_1(x)&1&\frac{e^{-2\pi i a}}{\sqrt{2}}&-\frac{e^{2\pi i a}}{\sqrt{2}}\\
    f_2(x)&0&\frac{e^{-2\pi i a}}{\sqrt{2}}&\frac{e^{2\pi i a}}{\sqrt{2}}
\end{array}\;. \]
Using a computer program like \textit{Mathematica} reveals that $M$ is not equal to any combination of two points in $C_1(\Z/3\Z)$, making $f$ not concatenated.
\end{Ex}

\section{Forrelation for abelian groups}\label{app: forrelation}

The forrelation problem for boolean functions was introduced by Aaronson and Ambainis \cite{aaronson2015forrelation} as an example of optimal separation between quantum and classical query complexity. We explain the concept of forrelation and extend it to functions on general abelian groups.

For two boolean functions $f, g: \F_2^n \to \{-1,1\}$, the \emph{forrelation} (or Fourier correlation) is given by 
\begin{equation}
    \Phi_{f,g} := \frac{1}{2^n}\sum_{x \in \F_2^n} f(x) W_g(x) = \frac{1}{2^{3n/2}}\sum_{x,y\in \F_2^n} f(x)(-1)^{x\cdot y}g(y),
\end{equation}
where $W_g$ denotes the Walsh-Hadamard transform of $g$, also denoted by $\hat{g}$. The forrelation problem is then to determine whether $|\Phi_{f,g}|\leq \frac{1}{100}$ or $|\Phi_{f,g}|\geq \frac{3}{5}$, assuming one of the two holds. The relation with boolean bent functions is given by the following lemma.

\begin{Prop}[\protect{\cite[Proposition 5.2]{dutta2024nega-forrelation}}]
    Given bent functions $f, g: \F_2^n \to \{-1,1\}$ such that $g(x) = f(x\oplus u)$ for all $x\in \F_2^n$, then 
    \begin{equation}
        \Phi_{g, \hf} = \begin{cases}
            1 &\text{ if } u=0^n,\\
            0 &\text{ if } u\neq 0^n.
        \end{cases}
    \end{equation}
\end{Prop}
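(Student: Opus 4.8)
The plan is to rewrite $\Phi_{g,\hf}$ as an inner product of $g$ and $f$, pass to the Fourier side via Parseval's identity, and then exploit the bentness of $f$ together with \cref{lem:ghat}.

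First I would unwind the definitions, identifying $\hG$ with $\F_2^n$ via the pairing $(x,y)\mapsto(-1)^{x\cdot y}$, under which the Walsh--Hadamard transform coincides with the paper's Fourier transform. By definition $\Phi_{g,\hf} = \frac{1}{2^{3n/2}}\sum_{x,y\in\F_2^n} g(x)(-1)^{x\cdot y}\hf(y) = \frac{1}{2^n}\sum_{x\in\F_2^n} g(x)\,\widehat{\hf}(x)$. Since on $\F_2^n$ we have $-x=x$, the Fourier transform is an involution, so $\widehat{\hf}=f$ and hence $\Phi_{g,\hf} = \frac{1}{2^n}\sum_{x\in\F_2^n} g(x)f(x)$.

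Next, since $f$ is $\pm1$-valued we have $\overline{f(x)}=f(x)$, so this sum is $2^{-n}$ times the Hermitian inner product of $g$ and $f$. By the polarized form of Parseval's identity (which follows from unitarity of $F$, cf.\ \eqref{eq:Parseval}) it equals $\frac{1}{2^n}\sum_{\phi\in\hG}\hg(\phi)\overline{\hf(\phi)}$. Applying \cref{lem:ghat} with the shift $u$ gives $\hg(\phi)=\phi(u)\hf(\phi)$, so the sum becomes $\frac{1}{2^n}\sum_{\phi\in\hG}\phi(u)\,\abs{\hf(\phi)}^2$. Because $f$ is bent, $\abs{\hf(\phi)}=1$ for all $\phi$, and the expression reduces to $\frac{1}{2^n}\sum_{\phi\in\hG}\phi(u)$, which by orthogonality of characters equals $1$ when $u=0^n$ and $0$ otherwise.

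The argument is essentially routine once set up; the only point requiring care is the normalization bookkeeping in the first step and the identification of $\hG$ with $\F_2^n$. Alternatively, one can avoid Parseval altogether: substituting $g(x)=f(x\oplus u)$ directly into $\Phi_{g,\hf}=\frac{1}{2^n}\sum_x g(x)f(x)$ yields $\Phi_{g,\hf}=\frac{1}{2^n}\sum_x f(x\oplus u)f(x)$, the autocorrelation of $f$ at $u$; the case $u=0^n$ is immediate since each summand is $f(x)^2=1$, and for $u\ne 0^n$ the vanishing is exactly the content of \cref{lem: multi bent derivative} applied with $d=1$ (using again that $f$ is real-valued, so $\overline{f(x\oplus u)}=f(x\oplus u)$).
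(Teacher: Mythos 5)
Your proposal is correct; both of your arguments go through, and the normalization bookkeeping (the paper's $W_g=\hg$ with the $2^{-n/2}$ factor, and $\widehat{\hf}=f$ on $\F_2^n$ because $-x=x$) is handled properly. Note that the paper does not actually prove this boolean statement itself (it is quoted from the cited reference); its closest in-house argument is the proof of the generalized abelian-group proposition at the end of the appendix, which works in the time domain: it unwinds the forrelation to the autocorrelation $\frac{1}{|G|}\sum_x f(x-s)\overline{f(x)}$ and invokes the bentness characterization (Poinsot's Proposition~5.2, i.e.\ \cref{lem: multi bent derivative}) for the vanishing at $s\neq 0$. Your alternative paragraph is exactly this argument specialized to $d=1$, $G=\F_2^n$. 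Your primary route is genuinely different: it stays on the Fourier side, combining polarized Parseval (unitarity of $F$, cf.\ \eqref{eq:Parseval}), the shift law $\hg(\phi)=\phi(u)\hf(\phi)$ from \cref{lem:ghat}, and flatness $|\hf(\phi)|=1$, finishing with character orthogonality. One small payoff of your route is that it only uses the spectral half of bentness ($|\hf|\equiv 1$) and never the condition $|f(x)|=1$, whereas the autocorrelation route leans on the characterization that presupposes unit modulus in the time domain; conversely, the paper's route generalizes verbatim to the multidimensional setting, which is why the paper phrases it that way.
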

For boolean functions, taking the Fourier transform twice gives the identity. For general abelian groups this is not the case anymore. By the correspondence between a group $G$ and its double dual
\begin{equation}
    G \to \hat{\hG}, \quad x \mapsto \{\psi_x: \phi \mapsto \phi(x) \text{ for all } \phi\in \hat{G}\},
\end{equation}
one can easily check that $\hat{\hf}(\psi_x) = f(-x)$. Instead, we make use of the inverse Fourier transform.
\begin{Def}[Inverse Fourier transform]
    For a function $h: \hG \to \C$, the \emph{inverse Fourier transform} is given by 
    \begin{equation}\label{eq: inverse fourier}
        \check{h}: G  \to \C, \quad \check{h}(x) = \frac{1}{|G|^{1/2}} \sum_{\phi\in \hG} \overline{\phi(x)}h(\phi).
    \end{equation}
    For a multidimensional function the inverse Fourier transform is applied coordinate-wise.
\end{Def}
Implied by its name, the inverse Fourier transform is the inverse of the Fourier transform. We use it to define forrelation for multidimensional functions on abelian groups. 
\begin{Def}[Multidimensional forrelation]
    Let $g: G \to \C^d$ and $h: \hG \to \C^d$ be two functions. The \emph{forrelation} between $g$ and $h$ is given by 
    \begin{equation}
        \Phi_{g,h} = \frac{1}{|G|} \sum_{x\in G}\sum_{i=1}^d g_i(x) \overline{\check{h}_i(x)}.
    \end{equation}
\end{Def}
The forrelation between two bent functions that are shifted relative to each other has the following property.
\begin{Prop}
    Let $f,g: G\to \C^d$ be bent functions hiding a shift $s$. Then 
    \begin{equation}
        \Phi_{g, \hf} = \begin{cases}
            1 &\text{ if } s=0,\\
            0 &\text{ if } s\neq 0.
        \end{cases}
    \end{equation}
\end{Prop}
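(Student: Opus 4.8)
The plan is to rewrite the forrelation $\Phi_{g,\hf}$ as a correlation of $f$ with its own shift by $s$, and then invoke the characterization of multidimensional bent functions in \cref{lem: multi bent derivative}.

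First I would use that the inverse Fourier transform genuinely inverts the Fourier transform, applied coordinate-wise, so that $\check{\hf}_i = f_i$ for every $i$. (This should be checked once directly from \cref{def: Fourier tranform} and \cref{eq: inverse fourier} by interchanging the two sums and using orthogonality of characters, $\sum_{\phi\in\hG}\overline{\phi(x)}\phi(y) = |G|$ if $x=y$ and $0$ otherwise; this is the single place where one must be careful about the placement of complex conjugates, and in particular not confuse $\check{\hf}$ with the iterated forward transform $\hat{\hf}$, which equals $f(-\cdot)$ instead.) Substituting $h=\hf$ into the definition of forrelation then gives
\[
    \Phi_{g,\hf}
    = \frac{1}{|G|}\sum_{x\in G}\sum_{i=1}^d g_i(x)\,\overline{f_i(x)}.
\]

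Next I would use the hidden-shift relation $g_i(x) = f_i(x-s)$ and change variables to $y = x-s$, obtaining
\[
    \Phi_{g,\hf}
    = \frac{1}{|G|}\sum_{y\in G}\sum_{i=1}^d f_i(y)\,\overline{f_i(y+s)}.
\]
If $s \neq 0$, then \cref{lem: multi bent derivative}, applied with $a = s$ and using that $f$ is bent with $\norm{f(y)}=1$ for all $y\in G$, says precisely that the inner double sum $\sum_{y}\sum_i f_i(y)\overline{f_i(y+s)}$ vanishes, hence $\Phi_{g,\hf}=0$. If $s = 0$, the expression collapses to $\frac{1}{|G|}\sum_{y\in G}\sum_{i=1}^d |f_i(y)|^2 = \frac{1}{|G|}\sum_{y\in G}\norm{f(y)}^2 = \frac{1}{|G|}\cdot|G| = 1$, again using that $f$ is bent.

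There is essentially no deep obstacle here; the only point requiring attention is the identity $\check{\hf} = f$, since the double-dual remark preceding \cref{eq: inverse fourier} shows that iterating the forward transform would instead give $f(-\cdot)$. Once that identity is in place, the proof is a one-line change of variables followed by the cited lemma.
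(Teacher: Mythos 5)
Your proposal is correct and follows essentially the same route as the paper: unwind the definition of forrelation, use that the inverse Fourier transform recovers $f$ from $\hf$, and reduce to the vanishing autocorrelation $\sum_{x}\sum_i f_i(x)\overline{f_i(x+s)}=0$ for $s\neq 0$. The only (harmless) difference is that you justify the vanishing via the paper's own \cref{lem: multi bent derivative} and spell out the identity $\check{\hf}=f$ explicitly, whereas the paper cites the corresponding result from Poinsot's article directly; your version is, if anything, slightly more self-contained.
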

\begin{proof}
    Writing out the forrelation definition we obtain 
    \[\Phi_{g, \hf} = \frac{1}{|G|} \sum_{x\in G}\sum_{i=1}^d g_i(x) \overline{\check{\hf}_i(x)} = \frac{1}{|G|}\sum_{x\in G}\sum_{i=1}^df_i(x-s)\overline{f_i(x)}.
    \]
    By \cite[Proposition 5.2]{poinsot2005multidimensional} this expression is zero for all $s \neq 0$, from which the result follows.
\end{proof}

\end{document}